\newtheorem{theorem}{Theorem}[section]
\newtheorem{lemma}[theorem]{Lemma}
\newtheorem{corollary}[theorem]{Corollary}
\newtheorem{assumptions}{Assumptions}
\theoremstyle{definition}
\newtheorem{definition}[theorem]{Definition}
\newtheorem*{theorem*}{Theorem}
\newtheorem*{corollary*}{Corollary}
\DeclareMathOperator{\C}{\mathbb{C}}
\DeclareMathOperator{\N}{\mathbb{N}}
\DeclareMathOperator{\Z}{\mathbb{Z}}
\DeclareMathOperator{\cB}{\mathcal{B}}
\DeclareMathOperator{\cH}{\mathcal{H}}
\DeclareMathOperator{\cO}{\mathcal{O}}
\DeclareMathOperator{\1}{\mathds{1}}
\newcommand{\vertiii}[1]{{\left\vert\kern-0.25ex\left\vert\kern-0.25ex\left\vert #1 
		\right\vert\kern-0.25ex\right\vert\kern-0.25ex\right\vert}}
\newcommand{\pdftitle}{Stability of thermal equilibrium in long-range quantum systems}
	\def\\{}%
	\def\textsuperscript#1{}%
\begin{document}
	
	\title{\pdftitle}
	
	\author{\begingroup
		\hypersetup{urlcolor=black}
		\href{https://orcid.org/0000-0003-0471-2745}{Tim M\"{o}bus
			\endgroup}
	}
	\email{moebustim@gmail.com}
	\affiliation{Department of Mathematics, University of T\"ubingen, 72076 T\"ubingen, Germany}
	
	\author{\begingroup
		\hypersetup{urlcolor=black}
		\href{https://orcid.org/0009-0002-1274-4747}{Jorge~S\'anchez-Segovia
			\endgroup}
	}
	\email{jorge.sanchezsegovia@estudiante.uam.es}
	\affiliation{Instituto de F\'isica Te\'orica UAM/CSIC, C.~Nicol\'as Cabrera 13-15, Cantoblanco, 28049 Madrid, Spain}
	
	\author{\begingroup
		\hypersetup{urlcolor=black}
		\href{https://orcid.org/0000-0002-5889-4022}{\'Alvaro~M.~Alhambra
			\endgroup}
	}
	\email{alvaro.alhambra@csic.es}
	\affiliation{Instituto de F\'isica Te\'orica UAM/CSIC, C.~Nicol\'as Cabrera 13-15, Cantoblanco, 28049 Madrid, Spain}
	
	\author{\begingroup
		\hypersetup{urlcolor=black}
		\href{https://orcid.org/0000-0001-6713-6760}{{\'A}ngela Capel
			\endgroup}
	}
	\email{ac2722@cam.ac.uk}
	\affiliation{Department of Mathematics, University of T\"ubingen, 72076 T\"ubingen, Germany}
	\affiliation{Department of Applied Mathematics and Theoretical Physics, University of Cambridge, United Kingdom}
	
	\begin{abstract}
		\setlength{\parindent}{0pt}
		Experimental realizations of spin models are irremediably prone to errors, which can propagate through the system corrupting experimental signals. We study how such errors affect the measurement of local observables in systems with long-range interactions, where perturbations can spread more rapidly. Specifically, we focus on the stability of thermal equilibrium and investigate its relation to the correlation structure of the system, both analytically and numerically. As a main result, we prove that the stability of local expectation values follows from the decay of correlations on the Gibbs state and the Lieb-Robinson bound. We also provide numerical evidence that this stability extends to an even larger regime of interacting long-range systems. Our results support the robustness of analog simulation platforms for long-range models, and provide further evidence that computing physical quantities of interest can be significantly easier than performing arbitrary quantum computations.
	\end{abstract}
	\maketitle
	
	\section{Introduction}
	A variety of experimental platforms are capable of probing models of interacting quantum particles. In many of these systems, the interactions can be tuned to have a long-range character, decaying as an inverse polynomial of the distance $d^{-\alpha}$ \cite{reviewLR_RevModPhys.95.035002,javier2025engineering}. This includes several types of analog simulation experiments, which are often considered among the leading candidates for achieving a ``useful quantum advantage'' \cite{Daley2022,LiuAdvantage2025}.
	
	The idea behind them is to mimic complex quantum phenomena, enabling the study of many-body physics in regimes beyond classical computation and without the need for fault-tolerant quantum computers. They promise to impact different areas such as condensed matter physics \cite{expCondensedMatter}, lattice gauge theories \cite{expHighEnergy1_Aidelsburger2022}, and quantum chemistry \cite{exp_qchemistry1_Navickas_2025}. Although these simulators are not necessarily universal \cite{Georgescu2014}, they can often prepare ground or thermal states, probe dynamics, and allow for the measurement of key observables \cite{Altman2021}.
	
	All experimental platforms, and likely any other physical realization of many-body models, are bound to contain imperfections in the design. These often take the form of ``coherent errors'', which can be understood as perturbations to the original Hamiltonian \cite{Sarovar2017,Trivedi2024}:
	\begin{equation*}
		H \rightarrow H + \varepsilon V,
	\end{equation*}
	with some other Hamiltonian $V$. It is therefore of interest to understand how large systems respond to such perturbations, and in particular, to explore the conditions under which they remain stable. 
	
	A deeper understanding of this stability is of fundamental interest, to understand how physical features such as the absence of long-range order and the clustering of correlations relate to it. It is also important from the perspective of analog quantum simulation, in order to assess the impact of errors on the measured signals and to establish guarantees regarding their reliability.
	
	These questions have been studied in a variety of scenarios \cite{Sarovar2017,Marthaler2017,Flannigan_2022,Chertkov.2024}. The work \cite{Trivedi2024} studied the stability of dynamics and equilibrium, both at zero and non-zero temperatures, focusing in finite-range spin and fermionic Hamiltonians. Other results include \cite{cai2023randomerrors}, which considered the effect of stochastic errors; \cite{kashyap2024quantumadvantage}, which focused on open analog quantum simulation; and \cite{Poggi2020}, which considered the stability of typical analog dynamics. 
	
	Most of these results focus on finite-range systems. However, the picture becomes more complicated when long-range interactions are present. These can introduce additional collective or long-range effects that can hinder stability. For certain questions, these effects can easily be controlled. For instance, the arguments in \cite{Trivedi2024} show that the stability of dynamics and gapped ground states essentially follows from the existence of a Lieb-Robinson bound, which is relatively well understood in long-range systems both from the analytical \cite{Eisert_2013,Kuwahara.2020lrb,Tran2019,Tran2020,Chen2019,Tran2021,Cevolani_2016,PRXQuantum.4.020348} and numerical \cite{Hauke_2013,jan_PhysRevResearch.3.L012022} point of view. However, the stability of thermal equilibrium is more subtle, and the standard proofs for finite-range systems do not directly apply to long-range systems \cite{Brandao2019,capel2023lppl,Rouze2024Learning,Rakovszky2024}. The effect of local perturbations on local properties of gapped ground states is discussed in \cite{PRXQuantum.4.020348}, and the stability of Gibbs states on one-dimensional systems in \cite{capel2023lppl}; however, the proof techniques used in these works differ substantially and do not apply in our setting. This is unfortunate, since current simulation experiments already study questions related to thermal equilibrium in general long-range models \cite{Neyenhuis2017,Joshi_2022,Schuckert2025}, and there are quantum algorithms that can efficiently sample long-range Gibbs states at high temperatures \cite{rouze2024optimalquantum}.
	
	In this work, we address this by analytically proving the stability of thermal long-range systems in arbitrary dimensions. In particular, we demonstrate that this stability follows from two well-established properties: the decay of correlations on the Gibbs state and the Lieb-Robinson bound. 
	
	We also study these questions numerically by simulating a long-range spin chain with tensor networks, providing evidence that the stability likely holds in a wider regime of interaction strengths. Additionally, we study the related notion of \emph{local indistinguishability} in long-range models, and show how all the different notions of clustering are weakly equivalent.
	
	The proofs, on a high level, involve a perturbative analysis of matrix exponentials in terms of an operator that, due to locality, is of a constrained radius. This, together with clustering properties, yields the stability results \cite{Trivedi2024,capel2023lppl}. These steps are standard for short-range systems. However, for long-range systems, the effect of the perturbations is significantly larger, so that a straightforward translation of previous arguments yields useless bounds. Here we circumvent this difficulty by combining a careful analysis of the matrix exponential from \cite{Kliesch.2014} with the Lieb-Robinson bound.
	
	\section{Preliminaries}
	\label{sec:preliminaries}
	Let $(V,E)$ be a possibly infinite graph and $\Lambda \subset V$ be a finite subset of it, $\cH_d$ a $d$-dimensional Hilbert space and $\cH_A$ for $A\subset \Lambda$ the tensor-product space $\cH_A = \cH_d \otimes \cdots \otimes \cH_d$ of $|A|$ copies of $\cH_d$ indexed by the elements in $A$. When considering the whole system $\Lambda$, we shorten the notation to $\cH_\Lambda=\cH$. The space of bounded operators on $\cH$ is $\cB(\cH)$, and the operator and trace norms are $\|\cdot\|$, $\|\cdot\|_1 \coloneqq \tr[|\cdot|]$, respectively.  A locally supported operator is of the form $O_A \otimes \1_{A^c}$, or simply $O_A$, with the complement $A^c = \Lambda \setminus A$, $O_A \in \cB(\cH_A)$ and the identity acting on $\cH_{A^c}$. Given a finite lattice $\Lambda \subset \mathbb{Z}^D$ with $|\Lambda| = N$ sites, we define the Hamiltonian $H=H^\dagger$ in $k$-local form by
	\begin{equation}\label{eq:k-form-hamiltonian}
		H = \sum_{Z \,:\, |Z| \leq k} h_Z \,,
	\end{equation}
	where $Z \subset \Lambda$ is a set of interaction sites. We assume that the Hamiltonian is long-range, i.e.,
	\begin{equation}\label{eq:def-long-range}
		J_{j,j'} = \sum_{Z \,:\, \{j,j'\} \subset Z} \|h_Z\| \leq \frac{g}{(1 + d(j,j'))^\alpha} \, ,
	\end{equation}
	where $d(j, j')$ denotes the graph distance between sites $j$ and $j'$ on $\Lambda$, $g\geq0$ is a constant and $\alpha>0$ is the index of the power-law decay of the interactions. It is also useful to define the quantity $u=\max_{j'}{\sum_j\frac{2^\alpha}{(1+d(j,j'))^\alpha}}$, which is constant as long as $\alpha > D$. Note that the graph distance between two sets $A$ and $B$ is defined by $d(A,B)\coloneqq \min_{j\in A,\,j'\in B}d(j,j')$. 
	
	The Hamiltonian defines the Gibbs state $\rho_\beta[H] = e^{-\beta H} / \tr[e^{-\beta H}]$ with inverse temperature $\beta \geq 0$ or $\rho_\beta$ if the Hamiltonian is known from the context. To measure the correlations in $\rho_\beta$, we use the covariance
	\begin{equation}\label{eq:def-correlation-function}
		\mathrm{Cov}_{\!\rho_\beta}\!(\hspace{-0.2ex}O_A,O_B\hspace{-0.2ex})\!\coloneqq\!\tr[\rho_\beta O_A O_B]-\tr[\rho_\beta O_A]\!\tr[\rho_\beta O_B]
	\end{equation}
	defined for observables $O_A$ and $O_B$ supported in $A,B \subset \Lambda$. A key assumption in our analysis is \textit{decay of correlations}, which quantifies how the covariance decreases as the distance between $A$ and $B$ increases.
	
	\begin{figure}[t!]
		\begin{center}
			\includegraphics[scale=0.25]{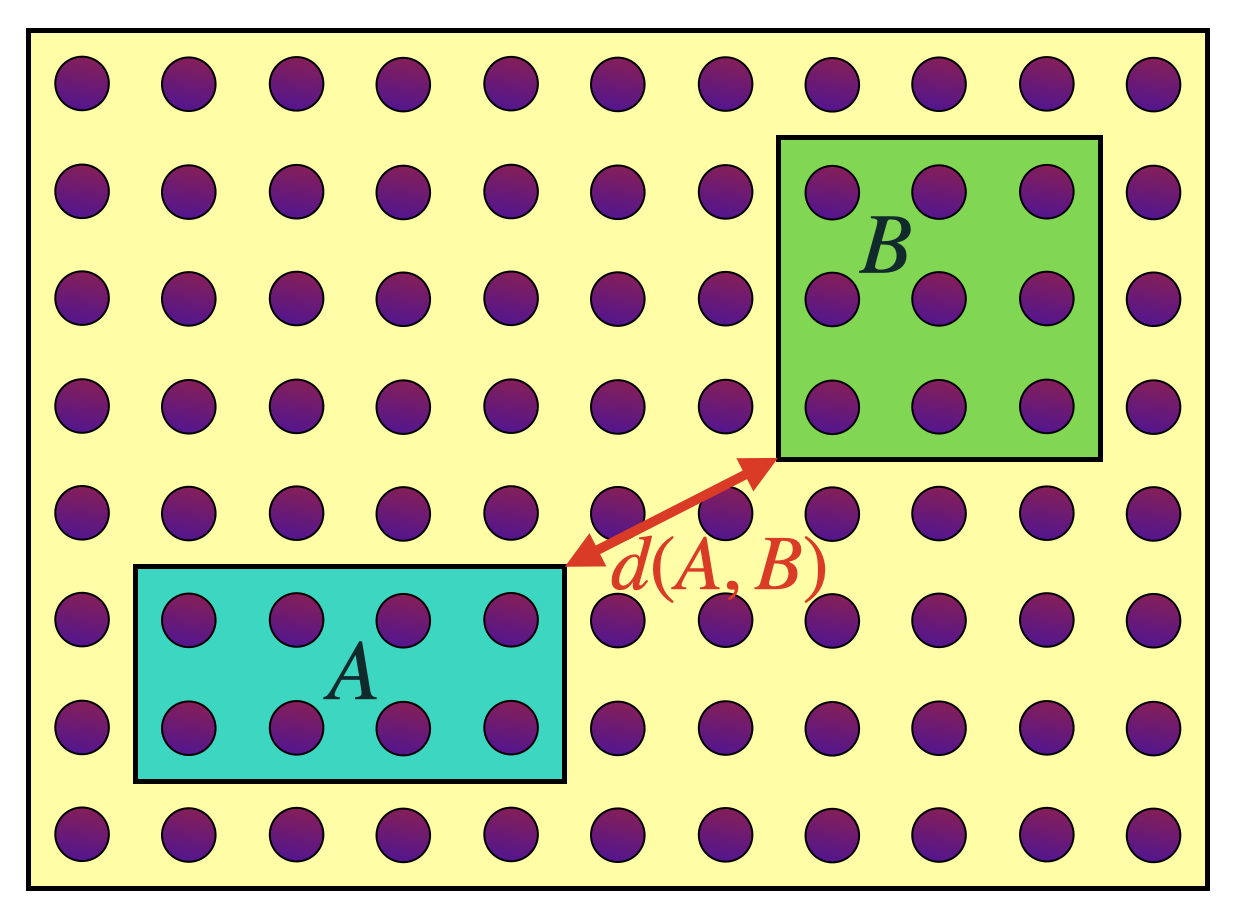}
			\caption{\justifying Two disjoint subregions $A$ and $B$ of a square lattice $\Lambda$. }
			\label{fig1}
		\end{center}
	\end{figure}  
	
	\begin{definition}[Decay of correlations]\label{def:decay-correlation}
		Given an inverse temperature $\beta\geq 0$, and any $A, B \subset \Lambda$ as in Fig.~\ref{fig1}, the Gibbs state $\rho_\beta$ satisfies decay of correlations if there is a constant $K\geq0$ so that
		\begin{equation}
			\mathrm{Cov}_{\rho_\beta}(O_A, O_B) \leq K \|O_A\| \|O_B\| \frac{|A|\,|B| e^{(|A|+|B|)/k}}{(1+d(A, B))^\alpha},
			\label{eq::def_decay_of_correlations}
		\end{equation}
		for all observables $O_A \in \cB(\cH_A) $ and $O_B \in \cB(\cH_B) $.
	\end{definition}
	This functional form is motivated by the recent result in \cite[Thm.~2]{kim2024arealawslongrange}, which demonstrates it under the assumption that $\alpha > D$ and $\beta \leq \beta^*= \frac{1}{8 u g k}$. For the decay of correlations in 1D systems at all temperatures (with a weaker decay exponent $\alpha-2$), see \cite{kimura2024decaycorrelations1D}.
	
	\subsection{Technical tools on locality}\label{sec:locality}
	For our proofs, we assume that the long-range Hamiltonians satisfy a Lieb-Robinson bound of the form
	\begin{equation}\label{eq:def-lieb-robinson-bound-prelim}
		\|[O_A(t),O_B]\|\leq \kappa_{LR}\|O_A\|\|O_B\|\frac{e^{v|t|}-1}{(1+d(A,B))^{\alpha}},
	\end{equation}
	for any $O_A \in \cB(\cH_A) $ and $O_B \in \cB(\cH_B) $, with $v= 2gu$, and $\kappa_{LR}$ a constant depending on $\alpha, g$ and $D$, regions $A,B \subset \mathbb{Z}^D$ and with $O_A(t)\coloneqq e^{itH}O_Ae^{-itH}$. This bound was proven in \cite{hastings2006gapdecaycorrelations}, and it holds across the entire weak long-range regime $\alpha > D$.
	
	More recent results \cite{Tran2019,Tran2020,Chen2019,Tran2021,Kuwahara.2020lrb} prove bounds with improved time dependence but stronger constraints on $\alpha$. In particular, \cite{Kuwahara.2020lrb} proves that for $\alpha\geq2D+1$,
	\begin{equation}\label{eq:def-lieb-robinson-bound-prelim-2d+1}
		\|[O_A(t),O_B]\|\leq \kappa_{LR,1}\|O_A\|\|O_B\|\frac{p(|t|)}{(d(A,B)\!-\!v|t|)^{\alpha}},
	\end{equation}
	where $p(|t|)\coloneqq|t|^{2D+1}$. For the regime $2D < \alpha < 2D+1$, there also exists a Lieb-Robinson bound with a slower polynomial decay \cite{Tran2021}
	
		\begin{align}\label{eq:def-lieb-robinson-bound-prelim-2d}
			\|[O_A(t),O_B]\|\leq& \|O_A\|\|O_B\| \nonumber\\
			& \cross \Bigl[ \kappa_{LR,2} \left(\frac{t}{d(A,B)^{\alpha-2D+\epsilon}}\right)^{\frac{\alpha-D}{\alpha-2D} - \frac{\epsilon}{2}} \nonumber\\
			&\qquad+ \kappa_{LR,3} \left(\frac{t}{d(A,B)^{\alpha-D}}\right) \Bigr],
		\end{align}
	
	for an arbitrarily small $\epsilon>0$.
	
	Another tool that is useful to study Gibbs states whenever a Lieb-Robinson bound holds for $H$ is the \emph{quantum belief propagation} (see \cite{Hastings.2007,Kim_2012, capel2023lppl}). 
	Consider two self-adjoint operators $H$ and $V$, and the path of Hamiltonians $H(s) \coloneqq H + sV $. Then, the exponentials $e^{-\beta H(s)}$ satisfy the differential equation
	\begin{equation}\label{eq:diff_exponential}
		\frac{d}{ds}e^{-\beta H(s)} = -\frac{\beta}{2} \Bigl\lbrace e^{-\beta H(s)} , \Phi_\beta^{H(s)} (V) \Bigr\rbrace \, ,
	\end{equation}  
	with the belief propagation operator
	\begin{equation}\label{eq:Phi}
		\Phi_\beta^{H(s)} (V)\coloneqq \int_{-\infty}^{\infty} dt f_\beta (t) e^{-it H(s)} V e^{it H(s)} \, ,
	\end{equation}     
	where $f_\beta(t)=\frac{2}{\beta \pi}\log(\frac{e^{\pi |t|/\beta}+1}{e^{\pi |t|/\beta}-1})$, such that
	\begin{equation}\label{eq:normPhi}
		\norm{\Phi_\beta^{H(s)} (V)} \leq \norm{V} \, .
	\end{equation}
	
	We identify two complementary regimes of interest for our results, which we consider in order to prove our main result.
		\begin{assumptions}
			We differentiate the following cases:
			\begin{enumerate}
				\item \textbf{High-temperature regime:} with $\beta < \beta^*= \frac{1}{8 u g k}$ and $\alpha > D$, so that the Lieb-Robinson bound of the form of Eq.~\eqref{eq:def-lieb-robinson-bound-prelim} holds. 
				\item \textbf{Strong Lieb-Robinson Bound regime:} $\alpha > 2D$, so that one of the stronger Lieb-Robinson bounds, given  by either Eq.~\eqref{eq:def-lieb-robinson-bound-prelim-2d+1} or Eq.~\eqref{eq:def-lieb-robinson-bound-prelim-2d}, holds.
			\end{enumerate}
			\label{def:regimes}
		\end{assumptions}
	
	That is, for the weaker condition $\alpha >  D$ we are bound to considering high temperatures. In both cases, we assume that decay of correlations holds throughout (although note that high temperature already implies decay of correlations \cite{kim2024arealawslongrange}) The motivation for differentiating the two cases is that we expect the  decay of correlations to hold in relevant regimes beyond high temperatures e.g. for 1D systems \cite{kimura2024decaycorrelations1D}.
	
	\section{Stability to global errors}
	\label{sec:stability}
	Our main focus is on the stability of expectation values to errors in the Hamiltonian of the form
	\begin{equation}
		H \rightarrow H+ \varepsilon V,
	\end{equation}
	where $V$ is a Hamiltonian satisfying the same general assumptions as $H$. We show that local expectation values do not depend on the size of the system on which the perturbation $V$ acts (which can be the entire lattice $\Lambda$) and are controlled only by local quantities and the parameter $\varepsilon$, which is assumed to be small.
	
	\subsection{Analytical results}\label{sec:analytic}Consider a finite lattice $\Lambda$ and long-range Hamiltonians $H$ and $V= \sum_{Z \subset \Lambda} V_Z$ on it satisfying (\ref{eq:k-form-hamiltonian}, \ref{eq:def-long-range}) with power-law decay $\alpha> D$. 
	
	The main aim of this section is to prove that the assumptions in As.~\ref{def:regimes} together with decay of correlations imply stability of long-range Hamiltonians. The main result is formally stated as follows. 
	\begin{theorem}\label{thm:stability}
		Consider $H$ and $V$ as above, and either condition in As.~\ref{def:regimes} holds. Fix any $\varepsilon\geq0$ and assume that the Gibbs states of $H + \varepsilon s   V_\Gamma$ satisfy decay of correlations (\ref{def:decay-correlation}) at inverse temperature $\beta<\infty$ uniformly in $s \in [0,1]$ for any $V_\Gamma= \underset{Z\subset \Gamma}{\sum}  V_Z$ with $\Gamma \subset \Lambda$. Then,
		\begin{align*}
			\Bigl|\tr\bigl[O_A \rho_{\beta}[H]\bigr] - &\tr\bigl[O_A \rho_{\beta}[H+\varepsilon V]\bigr]\Bigr| \\
			&\leq  \varepsilon \frac{vu}{2}  \kappa(\beta)  \|O_A\| k|A|e^{|A|/k+1} ,
		\end{align*}
		for all local observables $O_A \in \mathcal{B}(\mathcal{H}_A)$ and a constant $\kappa(\beta) >0$ depending polynomially on $\beta$ and super-exponentially on $\alpha$. 
	\end{theorem}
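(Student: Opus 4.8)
The plan is to interpolate linearly between the two Hamiltonians and integrate the change in the expectation value along the path. Set $H(s)\coloneqq H+\varepsilon s V$ and $\rho(s)\coloneqq\rho_\beta[H(s)]$, so that by the fundamental theorem of calculus
\begin{equation*}
  \tr[O_A\rho_\beta[H]]-\tr[O_A\rho_\beta[H+\varepsilon V]]=-\int_0^1\frac{d}{ds}\tr[O_A\rho(s)]\,ds .
\end{equation*}
To evaluate the integrand I would differentiate $e^{-\beta H(s)}$ using the belief-propagation identity \eqref{eq:diff_exponential} (with $V$ replaced by $\varepsilon V$) and then apply the quotient rule to $\rho(s)=e^{-\beta H(s)}/\tr[e^{-\beta H(s)}]$. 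Since $\Phi_\beta^{H(s)}(V)$ is self-adjoint, cyclicity of the trace collapses the resulting terms into a single covariance,
\begin{equation*}
  \frac{d}{ds}\tr[O_A\rho(s)]=-\varepsilon\beta\,\mathrm{Re}\,\mathrm{Cov}_{\rho(s)}\!\bigl(O_A,\Phi_\beta^{H(s)}(V)\bigr),
\end{equation*}
so everything reduces to bounding this covariance, uniformly in $s\in[0,1]$.

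Next I would localize. Writing $V=\sum_ZV_Z$ and using linearity of $\Phi_\beta^{H(s)}$ gives $\Phi_\beta^{H(s)}(V)=\sum_Z\Phi_\beta^{H(s)}(V_Z)$, each term obeying $\|\Phi_\beta^{H(s)}(V_Z)\|\leq\|V_Z\|$ by \eqref{eq:normPhi}. For each $Z$ I distinguish two regimes. If $d(A,Z)$ is small, the trivial bound $|\mathrm{Cov}_{\rho(s)}(O_A,\Phi_\beta^{H(s)}(V_Z))|\leq2\|O_A\|\|V_Z\|$ suffices. If $d(A,Z)=r$ is large, the goal is an estimate of the shape $\|O_A\|\|V_Z\|\,|A|\,e^{|A|/k+1}\,C(\beta)\,(1+r)^{-\alpha}$. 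To obtain it I would insert the real-time representation $\Phi_\beta^{H(s)}(V_Z)=\int f_\beta(t)\,e^{-itH(s)}V_Ze^{itH(s)}\,dt$ and split the $t$-integral near $|t|\sim r/v$: for $|t|\lesssim r/v$ the Lieb-Robinson bound \eqref{eq:def-lieb-robinson-bound-prelim} shows that $e^{-itH(s)}V_Ze^{itH(s)}$ is, up to a polynomially small tail, supported within a controlled distance of $Z$ and hence still far from $A$, so decay of correlations~(\ref{def:decay-correlation}) for $\rho(s)$ applies; for $|t|\gtrsim r/v$ the Lieb-Robinson bound is vacuous, but there $f_\beta(t)$ decays like $e^{-\pi|t|/\beta}$, so even the trivial covariance bound contributes only an amount decaying in $r$. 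The delicate point is that the clustering estimate~(\ref{def:decay-correlation}) carries the factor $e^{(|A|+|B|)/k}$, exponential in the volumes: it may therefore only be invoked between $O_A$ and operators of support $O(k)$ near $Z$, never for balls whose radius grows with $r$. Arranging the truncation so that this factor stays $O(1)$ while the Lieb-Robinson tails remain summable is precisely the role of the careful analysis of the matrix exponential from \cite{Kliesch.2014}, and it is also what produces the two lattice-sum factors — one $u$ from the truncation bookkeeping, one from the subsequent sum over $Z$.

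With the per-$Z$ estimates in hand, I would integrate over $s$ (trivially) and sum over $Z$, using $\sum_{Z\ni j}\|V_Z\|\leq g$ — which is \eqref{eq:def-long-range} at $j'=j$ — and the lattice sums bounded by $u$, both finite because $\alpha>D$. This yields
\begin{equation*}
  \int_0^1\!\sum_Z\bigl|\mathrm{Cov}_{\rho(s)}\!\bigl(O_A,\Phi_\beta^{H(s)}(V_Z)\bigr)\bigr|\,ds\;\leq\;g\,u^2\,C(\beta)\,\|O_A\|\,k|A|\,e^{|A|/k+1}
\end{equation*}
for a constant $C(\beta)$ absorbing $\kappa_{LR}$, $K$ and the $f_\beta$-integral; multiplying by $\varepsilon\beta$ and setting $\kappa(\beta)\coloneqq\beta\,C(\beta)$ gives the claimed bound.

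I expect the far-$Z$ estimate to be the main obstacle. In the short-range setting one truncates $\Phi_\beta^{H(s)}(V_Z)$ to a ball whose radius grows with $d(A,Z)$, at exponentially small cost, and uses exponential clustering; for long-range interactions both ingredients fail — the Lieb-Robinson light cone decays only polynomially, and only the weak, volume-exponential form of clustering~(\ref{def:decay-correlation}) is available — so a direct transcription produces error terms that do not sum when $\alpha$ is close to $D$. Circumventing this, by never applying clustering to regions of volume $\gg k$ and instead absorbing the long tails through the exponential decay of $f_\beta$ together with a multi-scale treatment of the matrix exponential, is the technical heart of the argument.
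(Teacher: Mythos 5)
Your overall architecture --- interpolate along $H(s)=H+\varepsilon sV$, differentiate with the belief-propagation identity \eqref{eq:diff_exponential}, split $V=\sum_Z V_Z$, treat near and far $Z$ separately, and finally sum over $Z$ using \eqref{eq:def-long-range} and the summability of $(1+r)^{-\alpha}$ for $\alpha>D$ --- is the same strategy as the paper's, merely organized as one global interpolation instead of the paper's two steps (LPPL for a single local term, Lem.~\ref{lem:decay-cor-lppl}, followed by a level-set telescoping, Lem.~\ref{lem:SLT}, which is why the theorem assumes clustering for all $\Gamma\subset\Lambda$; your single-path version only needs $\Gamma=\Lambda$). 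The near case and the final lattice sums are unproblematic. The gap is the far-$Z$ estimate, which is where all the difficulty of the long-range setting sits.

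Concretely, the mechanism you propose for $d(A,Z)=r$ large --- truncate $e^{-itH(s)}V_Ze^{itH(s)}$ via the Lieb-Robinson bound \eqref{eq:def-lieb-robinson-bound-prelim} and apply Def.~\ref{def:decay-correlation} between $O_A$ and the truncated operator --- cannot be ``arranged'' to work. To make the tail of the $t$-integral decay in $r$ you must keep times up to $|t|\sim\eta r/v$ (a cutoff at $t_0=O(1)$ leaves a remainder $\sim e^{-\pi t_0/\beta}$ independent of $r$), and then any truncation region that stays at distance $\Omega(r)$ from $A$ has volume $\Omega(r^{D})$, so the factor $e^{|B|/k}$ in Def.~\ref{def:decay-correlation} becomes $e^{\Omega(r^{D})/k}$ and overwhelms the $(1+r)^{-\alpha}$ decay; fattening $A$ instead hits the same obstruction through $e^{|A|/k}$. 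You flag exactly this, but the fix you gesture at (``never apply clustering to regions of volume $\gg k$'', a ``multi-scale treatment'' attributed to \cite{Kliesch.2014}) is not an argument: with only LR truncation and the decay of $f_\beta$ there is no route from $\mathrm{Cov}_{\rho(s)}\bigl(O_A,\Phi_\beta^{H(s)}(V_Z)\bigr)$ to covariances of small-support operators. The paper's actual resolution is different in kind: it expresses the $s$-derivative through the exact Duhamel identity of \cite{Kliesch.2014} as an integral of \emph{generalized} covariances $\mathrm{Cov}^\tau_{\rho(s)}(O_A,V_Z)$ (Eq.~\eqref{eq:proof-integral-equation}), and then invokes \cite[Eq.~C.22]{Kuwahara.2022}, which bounds $|\mathrm{Cov}^\tau-\mathrm{Cov}|$ by $\tfrac{2}{\beta}\int dt\,\tfrac{e^{-2\pi|t|/\beta}}{1-e^{-2\pi|t|/\beta}}\,\|[O_A(t),V_Z]\|$. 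The Lieb-Robinson bound is then applied to that \emph{commutator} of the original operators, so the clustering assumption is only ever used on $\mathrm{Cov}_{\rho(s)}(O_A,V_Z)$ with supports of size $|A|$ and $|Z|\le k$, and no volume-exponential factor appears. Without this (or an equivalent) reduction your far-$Z$ bound, and hence the stated theorem, does not follow; you would also need to note that the interpolated Hamiltonians $H+\varepsilon sV$ themselves satisfy the assumed Lieb-Robinson bound, which holds since they obey \eqref{eq:def-long-range} with a rescaled constant.
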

	
	Therefore, Theorem \ref{thm:stability} shows that in both regimes of As. \ref{def:regimes} 
	
	\begin{equation}\label{eq::long-range-stability}
		\Bigl|\hspace{-0.03cm}\tr\bigl[\hspace{-0.02cm}O_A \rho_{\beta}[H]\bigr] \hspace{-0.09cm}- \hspace{-0.05cm} \tr\bigl[\hspace{-0.02cm}O_A \rho_{\beta}[H\hspace{-0.05cm}+\hspace{-0.05cm}\varepsilon V]\bigr] \hspace{-0.05cm}\Bigr|  \hspace{-0.09cm}\leq \hspace{-0.08cm}\cO  \hspace{-0.09cm}\Big(  \hspace{-0.06cm}\varepsilon \|O_A\|e^{\frac{|A|}{k}}  \hspace{-0.06cm}\Big)
	\end{equation}
	for all $\varepsilon\geq0$ and local observables $O_A$. The proof of Thm.~\ref{thm:stability} for both cases of As.~\ref{def:regimes} is based on two steps: 
	\begin{itemize}
		\item Use of decay of correlations and the Lieb-Robinson bound to prove the intermediate result named \emph{Local perturbations perturb locally} (LPPL) (cf.~App.~\ref{app:corrtoLPPL}).
		\item Show that LPPL implies that a global error affects local expectations only locally.
	\end{itemize}

	These two steps are also required in the analogous proof for the case of finite-range interactions. However, the expected decay rate of correlations and the prefactors appearing in that case are smaller, which substantially simplifies both steps (cf.~\cite[Theorem 34]{capel2023lppl}). For long-range interactions, these aspects present more subtleties.
	
	For the first step of the proof, consider $\rho_\beta[H]$ the Gibbs state of $H$ in $\Lambda$ at inverse temperature $\beta < \infty$, and a local perturbation $V_B$ supported in $B$. Denote by  $\rho_\beta[H+V_B]$ the Gibbs state of the perturbed Hamiltonian. Then, if we consider an observable  $O_A $ supported in $A$, LPPL (see Def.~\ref{def:stability-local-perturbation} below) states that the expectations of $O_A$ with respect to $\rho_\beta[H]$ and $\rho_\beta[H+V_B]$ are practically indistinguishable when $A$ and $B$ are far apart. In other words, the expectation values of local observables supported far from the perturbation are not influenced by its presence.
	\begin{definition}[LPPL]\label{def:stability-local-perturbation}
		Given a finite lattice $\Lambda$ and an inverse temperature $\beta \geq 0$, we say that $H$ satisfies \textit{LPPL} if there is a constant $K'\geq0$ such that 
		\begin{align*}
			& |\tr[O_A \rho_\beta[H]] - \tr[O_A \rho_\beta[H+V_B]]| \\[1mm]
			&\hspace{2.5cm}\leq K' \|O_A\| \|V_B\| \frac{|A| |B| e^{(|A|+|B|)/k}}{(1+d(A, B))^{\alpha}}\,,
		\end{align*}
		for every $A, B \subset \Lambda$ (see \cref{fig1}), any local perturbation $V_B = V_B^\dagger \in \cB(\cH_B)$ and any $O_A \in \cB(\cH_A)$.
	\end{definition}
	The notion of LPPL concerns local perturbations of quantum systems and is often used to study the stability of Gibbs states \cite{KastoryanoEisert_2014,capel2023lppl} as well as ground states \cite{Bachmann_2011,De_Roeck_2015,Henheik2022}. Using this concept, we present our main technical lemma of independent interest: that LPPL holds in both regimes of As.~\ref{def:regimes} given decay of correlations. 
	\begin{lemma}[cf.~Lem.~\ref{lem:decay-cor-lppl}]\label{lem-main:decay-cor-lppl}
		For $A, B \subset \Lambda$ and $H$, $V_B$ as above,  if $H+sV_B$ satisfy either assumption in As.~\ref{def:regimes} and decay of correlations (\ref{def:decay-correlation}) uniformly in all $s\in[0,1]$, then, 
		\begin{align*}
			\big|\tr[O_A \rho_{\beta}[H]] &- \tr[O_A \rho_{\beta}[H+V_B]] \, \big| \\ & \leq \kappa(\beta) \|O_A\| \|V_B\| \frac{|A| |B| e^{(|A|+|B|)/k}}{(1+d(A, B))^{\alpha}}\,,
		\end{align*}
		where $O_A\in\cB(\cH_A)$ and $\kappa(\beta)\geq 0$ is polynomial in $\beta$ and in the constants of the Lieb-Robinson bound.
	\end{lemma}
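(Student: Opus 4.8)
The plan is to interpolate along the path $H(s)\coloneqq H+sV_B$, $s\in[0,1]$, and to control the expectation value via the belief-propagation identity \eqref{eq:diff_exponential}. Setting $\rho_\beta^{(s)}\coloneqq\rho_\beta[H(s)]$, differentiating $s\mapsto\tr[O_A\rho_\beta^{(s)}]$ and using \eqref{eq:diff_exponential} together with the cyclicity of the trace and $[\rho_\beta^{(s)},H(s)]=0$ yields
\begin{align*}
  \frac{d}{ds}\tr[O_A\rho_\beta^{(s)}]=-\frac{\beta}{2}\Bigl(&\mathrm{Cov}_{\rho_\beta^{(s)}}\bigl(O_A,\Phi_\beta^{H(s)}(V_B)\bigr)\\
  &+\mathrm{Cov}_{\rho_\beta^{(s)}}\bigl(\Phi_\beta^{H(s)}(V_B),O_A\bigr)\Bigr),
\end{align*}
so that, integrating over $s\in[0,1]$, the lemma reduces to a bound -- uniform in $s$ -- on the symmetrized covariance of $O_A$ with $\Phi_\beta^{H(s)}(V_B)$. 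The essential structural point is that this is \emph{linear} in $V_B$; this is what will produce the factor $\|V_B\|$ multiplying the spatial-decay factor, which a direct comparison of $\rho_\beta[H]$ and $\rho_\beta[H+V_B]$ (say via locality of temperature) could not deliver.

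\textbf{Peeling off the light cone.} I would then substitute the representation \eqref{eq:Phi} for $\Phi_\beta^{H(s)}(V_B)$, use that $\rho_\beta^{(s)}$ is $H(s)$-invariant to move the Heisenberg evolution onto $O_A$, and add and subtract the un-evolved term (recall $\int f_\beta=1$):
\begin{align*}
  \mathrm{Cov}_{\rho_\beta^{(s)}}&\bigl(O_A,\Phi_\beta^{H(s)}(V_B)\bigr)=\mathrm{Cov}_{\rho_\beta^{(s)}}(O_A,V_B)\\
  &+\!\int_{-\infty}^{\infty}\!\!dt\,f_\beta(t)\Bigl(\mathrm{Cov}_{\rho_\beta^{(s)}}\bigl(O_A(t),V_B\bigr)-\mathrm{Cov}_{\rho_\beta^{(s)}}(O_A,V_B)\Bigr),
\end{align*}
with $O_A(t)$ the evolution under $H(s)$. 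The first term is controlled directly by the decay of correlations (\cref{def:decay-correlation}) for $\rho_\beta^{(s)}$ and already has exactly the claimed form. For the remaining integral I would split at a scale $t_\ast\asymp d(A,B)/v$ set by the Lieb-Robinson velocity. For $|t|\le t_\ast$, the Lieb-Robinson bound \eqref{eq:def-lieb-robinson-bound-prelim} for $H(s)$ lets me decompose $O_A(t)-O_A$ into concentric shells around $A$, the shell at radius $r$ having norm $\lesssim\poly(\beta)\,g\,u\,|A|\,\|O_A\|\,(1+r)^{-\alpha}$ once the Lieb-Robinson polynomial $p(|t|)$ is integrated against $f_\beta$ (a finite moment, polynomial in $\beta$); each such shell sits at distance $\ge d(A,B)-r$ from $B$, so decay of correlations bounds its covariance with $V_B$. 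For $|t|>t_\ast$ the light cone has reached $B$ and I retreat to $|\mathrm{Cov}|\le 2\|O_A\|\|V_B\|$, paying only the exponential time tail $\int_{|t|>t_\ast}f_\beta(t)\,dt\lesssim e^{-\pi t_\ast/\beta}$.

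\textbf{Main obstacle.} The hard part -- and where the argument genuinely departs from the finite-range case -- is making the shell contributions small: the shell at radius $r$ is supported on a region of volume $\sim r^D$, so the prefactor $e^{(|A|+r^D)/k}$ in \cref{def:decay-correlation} competes against the polynomial shell norm $(1+r)^{-\alpha}$ and, for shells with $r$ close to $d(A,B)$, against the vanishing gap $d(A,B)-r$. Because the Lieb-Robinson bound decays only polynomially, a single-ball local approximation of $\Phi_\beta^{H(s)}(V_B)$ is too crude here, and one cannot afford a ball whose radius grows with $d(A,B)$. This is precisely the point at which the refined analysis of the (truncated) matrix exponential from \cite{Kliesch.2014} is needed: one controls $\Phi_\beta^{H(s)}(V_B)$ and its restrictions shell by shell, keeping track of a \emph{local} (interaction-type) decomposition of each shell so that the decay of correlations is applied only to its bounded-size pieces -- whose $e^{\cdot/k}$ penalty stays uniformly bounded in $d(A,B)$ -- and resumming over shells and over $t$. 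I expect the bookkeeping of these competing scales, and in particular verifying that the decay in $d(A,B)$ that survives is still of order $\alpha$ in the regime $\alpha\ge 2D+1$ where the bound of \cite{Kuwahara.2020lrb} applies, to be the main technical hurdle. Collecting the clean term $\mathrm{Cov}_{\rho_\beta^{(s)}}(O_A,V_B)$, the shell estimates and the time tail, and integrating over $s\in[0,1]$, then gives the stated bound with $\kappa(\beta)$ a polynomial in $\beta$ absorbing $g$, $u$ and the moments of $f_\beta$.
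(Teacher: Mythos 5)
Your first step is sound and matches the machinery the paper itself uses for the overlapping-support case: differentiating along $H(s)=H+sV_B$ with the belief-propagation identity \eqref{eq:diff_exponential} reduces the claim to a bound, uniform in $s\in[0,1]$, on the covariance of $O_A$ with $\Phi_\beta^{H(s)}(V_B)$, and the unevolved piece $\mathrm{Cov}_{\rho_\beta[H(s)]}(O_A,V_B)$ is indeed handled directly by \cref{def:decay-correlation}. The gap is in your treatment of the remainder $\int dt\, f_\beta(t)\bigl(\mathrm{Cov}(O_A(t),V_B)-\mathrm{Cov}(O_A,V_B)\bigr)$. Your plan is to expand $O_A(t)-O_A$ into shells of radius $r$ around $A$ and apply decay of correlations to each shell's covariance with $V_B$; but a shell at radius $r$ is supported on a region of volume of order $r^D$, so \cref{def:decay-correlation} contributes a prefactor $e^{c r^D/k}$ that overwhelms the polynomial Lieb-Robinson tail $p(|t|)(r-v|t|)^{-\alpha}$, and since the relevant radii grow up to $d(A,B)$ this divergence cannot be absorbed. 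You recognize exactly this obstacle, but the proposed cure --- a ``local, interaction-type decomposition'' of each shell into bounded-size pieces with uniformly bounded $e^{\cdot/k}$ penalty --- is only asserted: the Lieb-Robinson bound controls the operator norm of the whole shell, not of any local decomposition of it, and no construction with summable piece norms is given. This is not mere bookkeeping; any route that applies \cref{def:decay-correlation} to (approximations of) the time-evolved $O_A(t)$ on regions whose size grows with $d(A,B)$ hits the same exponential-in-volume factor. Your appeal to \cite{Kliesch.2014} also misattributes its role: it does not provide such a shell-wise truncation argument.

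The paper's proof avoids this entirely by never invoking clustering for time-evolved or enlarged operators. For $d(A,B)\geq 1$ it starts from the exact identity of \cite{Kliesch.2014}, writing the difference of expectation values as $\beta\iint_0^1 \mathrm{Cov}^\tau_{\rho_\beta[H(s)]}(O_A,V_B)\,ds\,d\tau$ in terms of a generalized covariance, and then uses the comparison bound of \cite[Eq.~C.22]{Kuwahara.2022}, which controls $|\mathrm{Cov}^\tau-\mathrm{Cov}|$ solely through $\frac{2}{\beta}\int dt\,\frac{e^{-2\pi|t|/\beta}}{1-e^{-2\pi|t|/\beta}}\,\|[O_A(t),V_B]\|$. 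That commutator is bounded by the Lieb-Robinson bound \eqref{eq:def-lieb-robinson-bound-prelim} for $|t|\leq \eta\, d(A,B)/v$ and trivially by $2\|O_A\|\|V_B\|$ otherwise, with the exponentially small time tail converted into $d(A,B)^{-\alpha}$ decay; decay of correlations is applied exactly once, to the plain $\mathrm{Cov}(O_A,V_B)$ with the original supports $A$ and $B$. To repair your argument you need this (or an equivalent) commutator-only control of the difference term in place of the shell-by-shell use of clustering; with that replacement, your interpolation framework does yield the stated bound with $\kappa(\beta)$ polynomial in $\beta$.
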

	
		The proof is given in App.~\ref{app:corrtoLPPL}, and it distinguishes between the two regimes of As.~\ref{def:regimes}, as we now summarize. We separate the argument into short- and long-distance contributions. For short distances, the contribution to the matrix exponential of $V_B$ is bounded directly using the perturbation formula in Eq.~\eqref{eq:diff_exponential}. For long distances, the same perturbation can be written in terms of a generalized covariance~\cite{Kliesch.2014} via Duhamel's formula, which we then relate to the correlation function in Eq.~\eqref{eq:def-correlation-function} using the Lieb--Robinson bound. The result follows by assuming decay of correlations to estimate the resulting integrals.
	
	At this point, a difference in the proof arises between the two cases of As.~\ref{def:regimes}. For $D < \alpha \leq 2D$, since we use a weaker Lieb-Robinson bound (Eq.~\eqref{eq:def-lieb-robinson-bound-prelim}), we need to assume a high-temperature regime $\beta < \beta^*$. In contrast, for $\alpha > 2D$ the stronger Lieb--Robinson bounds (Eqs.~\eqref{eq:def-lieb-robinson-bound-prelim-2d+1} and \eqref{eq:def-lieb-robinson-bound-prelim-2d}) allow us to bound these contributions independently of the temperature. Note, however, that for $2D < \alpha \leq 2D+1$ we obtain a weaker decay $\propto 1/r^{\alpha - D}$.
	
		Thus, if $\alpha > 2D$, LPPL can be established whenever decay of correlations holds, independently of the temperature. This includes, in particular, the case of $1D$ systems at arbitrary temperatures~\cite{capel2023lppl}. For $D>1$, however, this still relies on the validity of decay of correlations in a regime that may lie beyond currently known proofs~\cite{kim2024arealawslongrange}.

	As the second step toward proving Thm.~\ref{thm:stability}, we show that local expectation values are stable against perturbations by sums of small local terms in the underlying Hamiltonian, assuming LPPL. This result was previously proven for finite and short-range interactions in \cite{Rakovszky2024,capel2023lppl,Trivedi2024}.  
	\begin{figure}[t!]
		\begin{center}
			\includegraphics[scale=0.17]{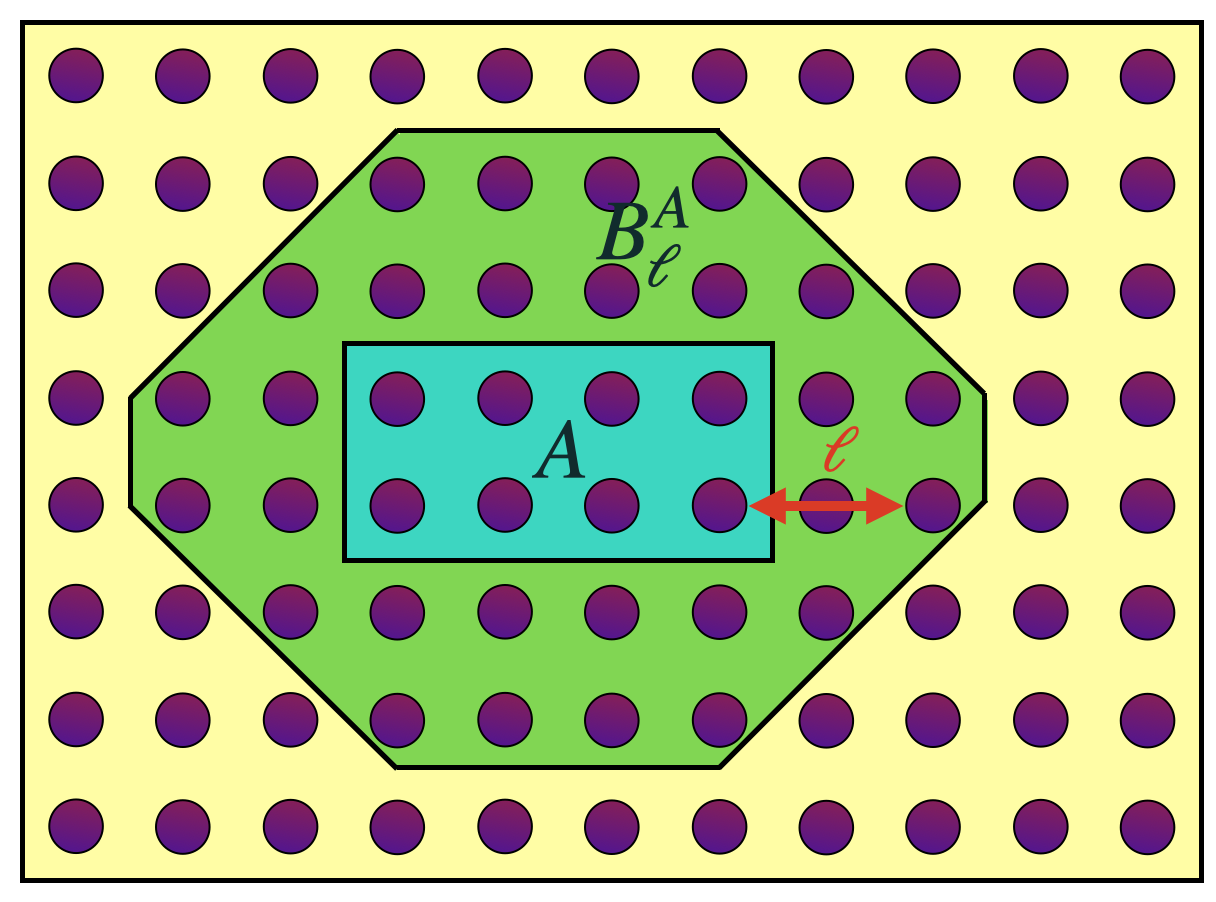}  \hspace{0.1cm}\includegraphics[scale=0.17]{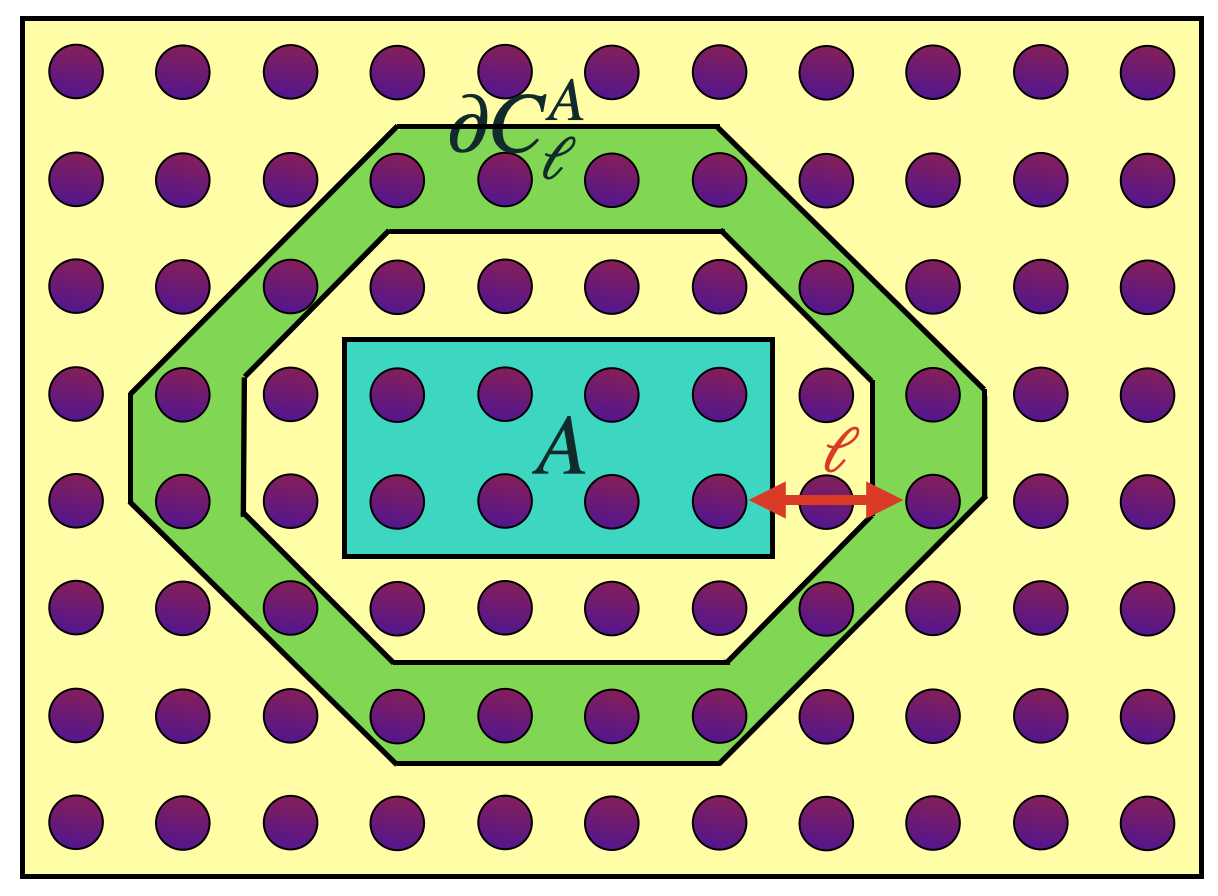}
			\caption{\justifying For a region $A \subset \Lambda$, graphical definition of $B_\ell^A$ and $\partial C^A_\ell$ for $\ell=2$.}
			\label{fig2}
		\end{center}
	\end{figure}  
	\begin{lemma}[cf.~Lem.~\ref{lem:lppl-local-indist}]\label{lem:SLT}
		Assume that $H + \varepsilon  V_\Gamma$ defined above, for any $V_\Gamma=\sum_{Z\subset \Gamma}V_Z$ with $\Gamma \subset \Lambda$, satisfies the LPPL (\ref{def:stability-local-perturbation}) for $\alpha>D$.  For a given $\varepsilon\geq0$, we have
		\begin{align*}
			|\tr[O_A \rho_{\beta}[H]] - &\tr[O_A \rho_{\beta}[H+\varepsilon V]]|\\
			& \leq\varepsilon g u^2 k K'  \|O_A\|\,|A|e^{|A|/k+1},
		\end{align*}
		where $O_A\in\cB(\cH)$ and $K' \geq 0$, which is polynomial in the Lieb Robinson bound constants, from Def.~\ref{def:stability-local-perturbation}. 
	\end{lemma}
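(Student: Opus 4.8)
The plan is to write the global error $\varepsilon V$ as a sum of local terms and peel them off one ``shell'' at a time, using LPPL to control each shell's contribution. Concretely, I would order the interaction terms $V_Z$ by their distance to the reference region $A$. Let $B^A_\ell$ denote the set of sites within distance $\ell$ of $A$, and let $\partial C^A_\ell$ collect the interaction terms $V_Z$ whose support first reaches distance $\ell$ from $A$ (i.e.\ $Z \subset B^A_\ell$ but $Z \not\subset B^A_{\ell-1}$), as in Fig.~\ref{fig2}. Writing $V = \sum_{\ell \geq 1} W_\ell$ with $W_\ell \coloneqq \sum_{Z \in \partial C^A_\ell} V_Z$, one telescopes
\[
\tr[O_A \rho_\beta[H]] - \tr[O_A\rho_\beta[H+\varepsilon V]]
= \sum_{\ell \geq 1}\Bigl(\tr\bigl[O_A \rho_\beta[H + \varepsilon V^{(\ell-1)}]\bigr] - \tr\bigl[O_A\rho_\beta[H+\varepsilon V^{(\ell)}]\bigr]\Bigr),
\]
where $V^{(\ell)} \coloneqq \sum_{m \leq \ell} W_m$ is the perturbation truncated to the first $\ell$ shells. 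At step $\ell$, the newly added term is $\varepsilon W_\ell$, which is supported in $B^A_\ell$ and hence lies at distance roughly $\ell$ from a slightly shrunk copy of $A$ — more precisely, one treats $W_\ell$ as a perturbation supported on the annulus around $A$ and applies LPPL with $B$ that annulus (or its outer part) at distance $\approx \ell-|A|$-ish from $A$. Each summand is then bounded by $K' \|O_A\| \cdot \varepsilon\|W_\ell\| \cdot \frac{|A|\,|B_\ell| e^{(|A|+|B_\ell|)/k}}{(1+\ell)^\alpha}$.

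The crux is to control two competing growth factors against the $(1+\ell)^{-\alpha}$ decay: the norm $\|W_\ell\|$ of the $\ell$-th shell, and the exponential $e^{|B_\ell|/k}$ coming from the size of the region on which the perturbation acts. The first is handled by the long-range assumption \eqref{eq:def-long-range}: summing $\|h_Z\|$ (here $\|V_Z\|$) over all $k$-local $Z$ touching a fixed site at distance $\geq \ell$ gives a bound like $g/(1+\ell)^\alpha$ per site, and summing over the $\mathcal{O}(\ell^{D-1})$ sites in the $\ell$-shell and over $A$ yields $\|W_\ell\| \lesssim g\,|A|\,u / (1+\ell)^{\alpha - D + 1}$ or similar — in any case something that, combined with the extra $(1+\ell)^{-\alpha}$ from LPPL, is summable in $\ell$ for $\alpha > D$. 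This is where the constant $u$ and the factor $gu^2$ in the statement come from: one power of $u$ from bounding $\|W_\ell\|$ and another from carrying out the sum over $\ell$ of the geometric/shell factors. The key obstacle — and the reason long-range is more delicate than the short-range proof of \cite[Theorem~34]{capel2023lppl} — is the $e^{|B_\ell|/k}$ factor: naively $|B_\ell|$ grows polynomially in $\ell$, so $e^{|B_\ell|/k}$ is super-polynomially large and would destroy summability. The fix is to apply LPPL not with $B = B^A_\ell$ but with $B$ the thin annular shell $\partial B^A_\ell$ itself (the actual support of $W_\ell$), whose size grows only like $\ell^{D-1}$; and crucially, to invoke LPPL for the Hamiltonian $H + \varepsilon V^{(\ell-1)}$, absorbing all the inner shells into the reference Hamiltonian so that the ``moving'' perturbation at each step is genuinely supported on one thin annulus. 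Even then $e^{\ell^{D-1}/k}$ beats any polynomial for $D \geq 2$; so one must be more careful — likely choosing the shells to have fixed \emph{thickness} (absorbing $\mathcal{O}(1)$-thick annuli), or exploiting that the effective ``$|B|$'' in LPPL can be taken to be the width of the annulus rather than its volume, by a further decomposition of each annulus into $\mathcal{O}(\ell^{D-1})$ blobs of bounded size and a union bound / triangle inequality over those blobs. I would carry out this block decomposition: split $W_\ell$ into $\sum_j V_{Z_j}$ grouped by the site $j$ nearest to $A$, apply LPPL once per site $j$ with $B$ a bounded-size neighborhood of $j$ (so $|B| = \mathcal{O}(1)$ and $e^{|B|/k} = \mathcal{O}(1)$), pick up $\|V_{Z_j}\| \lesssim g/(1+d(j,A))^\alpha$ from \eqref{eq:def-long-range}, and then sum over all sites $j \in \Lambda$ and over $A$.

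Assembling: the total error is $\leq \varepsilon K' \|O_A\| |A| e^{|A|/k} \sum_{j \in \Lambda} \sum_{a \in A}\frac{g\,\mathcal{O}(1)}{(1+d(j,a))^{2\alpha}}$ (one $(1+d)^{-\alpha}$ from $\|V_Z\|$, one from LPPL), and since $2\alpha > \alpha > D$ the inner sum over $j$ for each fixed $a$ is bounded by a constant $\lesssim u$ (recall $u$ is finite precisely when $\alpha > D$), so the sum over $a \in A$ gives another factor $|A|$ times $u$ — wait, more carefully the sum over $a$ and $j$ together, after using $d(j,A) \geq d(j,a) - \diam(A)$ or just bounding crudely, collapses to $|A| \cdot u$-type factors, and tracking the powers of $u$ and the exponential bookkeeping (the ``$+1$'' in the exponent absorbing a $\mathcal{O}(1)$ multiplicative slack) yields exactly $\varepsilon g u^2 K' \|O_A\| k|A| e^{|A|/k + 1}$. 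The single real subtlety to get right is the reduction from fat annuli to $\mathcal{O}(1)$-sized blocks so that no $e^{\mathrm{poly}(\ell)}$ factor ever appears; everything else is geometric-series bookkeeping using $\alpha > D$.
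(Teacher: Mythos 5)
Your overall strategy coincides with the paper's: telescope over the local terms of $\varepsilon V$, apply LPPL at each step to a perturbation of bounded support (size at most $k$, which is exactly what produces the factor $k e^{|A|/k+1}$ and avoids any $e^{|B_\ell|/k}$ blow-up), note that the intermediate Hamiltonians $H+\varepsilon V_\Gamma$ are covered by the hypothesis, and then sum the resulting series using $\alpha>D$. Your diagnosis of the fat-annulus/exponential problem and the fix by reducing to bounded-size supports is precisely the mechanism in the paper's proof.

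There is, however, a genuine error in how you make the series converge. You claim, citing \eqref{eq:def-long-range}, that $\|V_{Z_j}\|\lesssim g/(1+d(j,A))^{\alpha}$, so that each summand carries $(1+d(j,a))^{-2\alpha}$, ``one from $\|V_Z\|$, one from LPPL''; earlier you similarly assert $\|W_\ell\|\lesssim g\,|A|\,u/(1+\ell)^{\alpha-D+1}$. Equation \eqref{eq:def-long-range} does not give this: it bounds $\sum_{Z\ni\{j,j'\}}\|h_Z\|$ by a power of the distance between two sites \emph{inside the same interaction term}, not by the distance of the term to $A$. A small-diameter term far from $A$ has no smallness at all; for the paper's own perturbation $V=\sum_i\sigma_i^z$ every $V_Z$ has norm $1$ and $\|W_\ell\|\sim\ell^{D-1}$ grows with $\ell$. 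Hence the extra $\alpha$-power of decay in your final assembly is fictitious, and the only distance decay actually available per removed term is the single factor $(1+d(A,\cdot))^{-\alpha}$ from LPPL. The paper closes the sum differently: it indexes the removed terms by a pair of sites $(b,c)$ with $b\in Z$ the site of $Z$ closest to $A$ and $c\in\partial C_\ell^A$, uses $\sum_{Z\ni\{b,c\}}\|V_Z\|\le g/(1+d(b,c))^{\alpha}$ from \eqref{eq:def-long-range}, and then applies the convolution bound of Lem.~\ref{lem:convolution} to the product $(1+d(b,c))^{-\alpha}(1+d(A,b))^{-\alpha}$; this is what produces the factor $g u^2$ and the convergence for $\alpha>D$. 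Your argument can be repaired along these lines (or, more crudely, via the per-site bound $\sum_{Z\ni b}\|V_Z\|\le g$ combined with the LPPL decay alone, which is already summable for $\alpha>D$ though with different prefactors), but as written the step that makes the sum over far-away terms finite is not justified.
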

	\begin{proof}
		We start by defining the level sets (see \cref{fig2})
		\begin{equation*}
			\partial C_\ell^A=\{x\in A^c\,|\,d(A,x)=\ell\}\,
		\end{equation*}
		for all $\ell\in\N$ and 
		\begin{equation*}
			B_\ell^A=\{x\in A^c\,|\,d(A,x)\leq\ell\}\,.
		\end{equation*}
		Next, we delete the perturbation terms $V_Z$ level by level: First, we fix a level set by $\ell\in\N$ and consider all perturbations intersecting $\partial C_\ell^A$ and with an element $B_\ell^A$ of minimal distance to $A$. By using a telescopic sum, the level sets and the LPPL assumption, we show 
		\begin{align*}
			&|\tr[O_A \rho_\beta[H]] - \tr[O_A \rho_\beta[H+\varepsilon V]]\,|\\
			&\leq\sum_{\ell=0}^\infty\sum_{\substack{Z\subset \Lambda\,|\,Z\ni\{b,c\}\in B_\ell^A\times \partial C_{\ell}^A \\ \qquad \text{s.t.~}d(A,Z)=d(A,b)}} \hspace{-0.5cm} \!\!\!K' \varepsilon  \|V_Z\| \|O_A\|  \frac{k|A|e^{|A|/k+1}}{(1+d(A,b))^{\alpha}}\\
			&\leq k K' \varepsilon g \|O_A\|\,|A|e^{|A|/k+1}\\ 
			& \quad \quad \times \sum_{\ell=0}^\infty\sum_{c\in\partial C_{\ell}^A}\sum_{b\in B_\ell^A}\frac{1}{(1+d(b,c))^\alpha(1+d(A,b))^{\alpha}}\,.
		\end{align*}
		Then, we apply the uniform summability and convolution properties of the polynomial decay for $\alpha>D$ \cite{Nachtergaele_2006,hastings2010quasiadiabatic} (cf.~Lem.~\ref{lem:convolution}), which show
		\begin{align*}
			&|\tr[O_A \rho_\beta[H]] - \tr[O_A \rho_\beta[H+\varepsilon V]]| \\
			&\leq \varepsilon g K'  \|O_A\| k|A|e^{|A|/k+1}\sum_{\ell=0}^\infty\sum_{c\in\partial C_{\ell}^A}\frac{ u}{(1+d(A,c))^\alpha}\\
			&\leq \varepsilon g u^2 K'  \|O_A\| k|A|e^{|A|/k+1} .
		\end{align*}
	\end{proof}
	\begin{proof}[Proof of Thm.~\ref{thm:stability}]
		Since $V = \sum_{Z \subset\Lambda} V_Z$, for any of these $V_Z$,  by Lem.~\ref{lem-main:decay-cor-lppl}, assuming decay of correlations for the Gibbs states of $H+s\varepsilon V_Z$, and our assumptions in As.~\ref{def:regimes}, we have
		\begin{align*}
			\big|\tr[O_A \rho_{\beta}[H]] - &\tr[O_A \rho_{\beta}[H+\varepsilon V_Z]] \, \big| \\ & \leq \varepsilon \kappa(\beta)  \|O_A\| \|V_Z\| \frac{|A| k e^{(|A|+k)/k}}{(1+d(A, B_Z))^{\alpha}} \, ,
		\end{align*}
		where $B_Z$ is the support of $V_Z$, with cardinality $|B_Z|=k$. We conclude by using Lem.~\ref{lem:SLT}.
	\end{proof}     
	
	\subsection{Numerical results}
	\label{sec:numerical}
	Our analytical results hold in the long-range regime $\alpha > D$, and under additional assumptions of Lieb-Robinson bounds. However, multiple relevant quantum platforms have interactions with stronger long-range interactions, often even satisfying $\alpha \leq D$ \cite{reviewLR_RevModPhys.95.035002}. Since an analytical approach is still missing for this regime, this motivates employing numerical methods in its exploration. The purpose is twofold: to investigate a regime not covered by our analytical treatment and to further test the accuracy of our analytical bounds within their domain of applicability.
	
	We perform numerical simulations of the following system defined on a spin chain $(D = 1)$:
	\begin{equation}\label{eq:hamiltonian}
		H= H_{\text{LR,Is}}+ \varepsilon V \,,
	\end{equation}
	with the long-range transverse-field Ising model
	\begin{equation}
		H_{\text{LR,Is}} = \frac{J}{\mathcal{N}_{LR}} \sum_{i<j} \frac{\sigma_i^x \sigma_j ^x}{\vert i-j\vert^\alpha} +  h \sum_i  \sigma_i^z, 
		\label{eq::HLRising}
	\end{equation}
	with  $\mathcal{N}_{LR}= N^{-1}\sum_{i\leq j} \vert i-j\vert^{-\alpha}$, and the perturbation
	\begin{equation}
		V=  \sum_i  \sigma_i^z \, . 
	\end{equation}
	From an experimental perspective, this perturbation can simply correspond to a miscalibration of the magnetic field in the $z$ direction.
	\begin{figure*}
		\centering
		\includegraphics[width=0.95\linewidth]{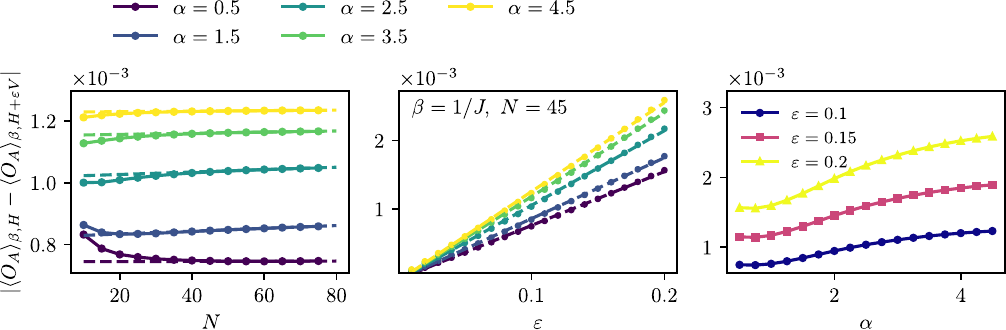}

		\caption{\justifying Absolute value of the difference in the expectation value of a local observable $O_A=\sigma_i^z \sigma_{i+1}^z$ at the middle of the system between perturbed and unperturbed Gibbs states of Hamiltonian in Eq.~\eqref{eq:hamiltonian}. This difference is plotted as a function of system size fixing $\varepsilon=0.1$ (left), and then with fixed system size N=45, against the magnitude of the perturbation $\varepsilon$ (middle) and against $\alpha$ (right). All plots are at fixed inverse temperature $1/J$. We show both numerical simulations (points) and linear fits (dashed lines). See Appendix \ref{app:sim} for the specific values and parameters of the simulations and further discussion of the results.}
		\label{fig:DifvsEverything}
	\end{figure*}
	
	The results are shown in Fig.~\ref{fig:DifvsEverything}, where
	we observe that the effects of global perturbations, when computing local expectations, tend to stabilize as the system size increases, as expected from the analytical results above. It is worth highlighting that this behavior persists even in the strong long-range regime $\alpha \leq D=1$, which goes beyond all our assumptions.
	
	Additionally, Fig.~\ref{fig:DifvsEverything} shows that the difference between expectation values grows linearly with $\varepsilon$, in agreement with the upper bound provided by Thm.~\ref{thm:stability}. This indicates that the bound is optimal with respect to the perturbation magnitude.
	We also show the effect of the perturbation as a function of $\alpha$, and we see that the more local the system (i.e., the larger the value of $\alpha$), the more pronounced the impact of perturbations becomes. For further discussion on the simulations performed as well as additional plots, we refer the reader to App.~\ref{app:num}.
	
	\section{Local indistinguishability and clustering}
	In order to obtain the results in Sec.~\ref{sec:analytic} it is crucial to consider the notions of decay of correlations and LPPL. We complete the picture of clustering in long-range systems by considering the important and closely related notion of \textit{local indistinguishability} \cite{Brandao2019}. 
	For simplicity, we here assume that the lattice is hypercubic, but the results hold for any other $D-$dimensional lattice simply by changing constant factors.
	This property provides a decay bound on the error made when local expectations of a global Gibbs state are approximated by a localized one. It is also known as \emph{locality of temperature} \cite{Kliesch.2014,Hern_ndez_Santana_2015}.
	\begin{figure}[t!]
		\begin{center}
			\includegraphics[scale=0.25]{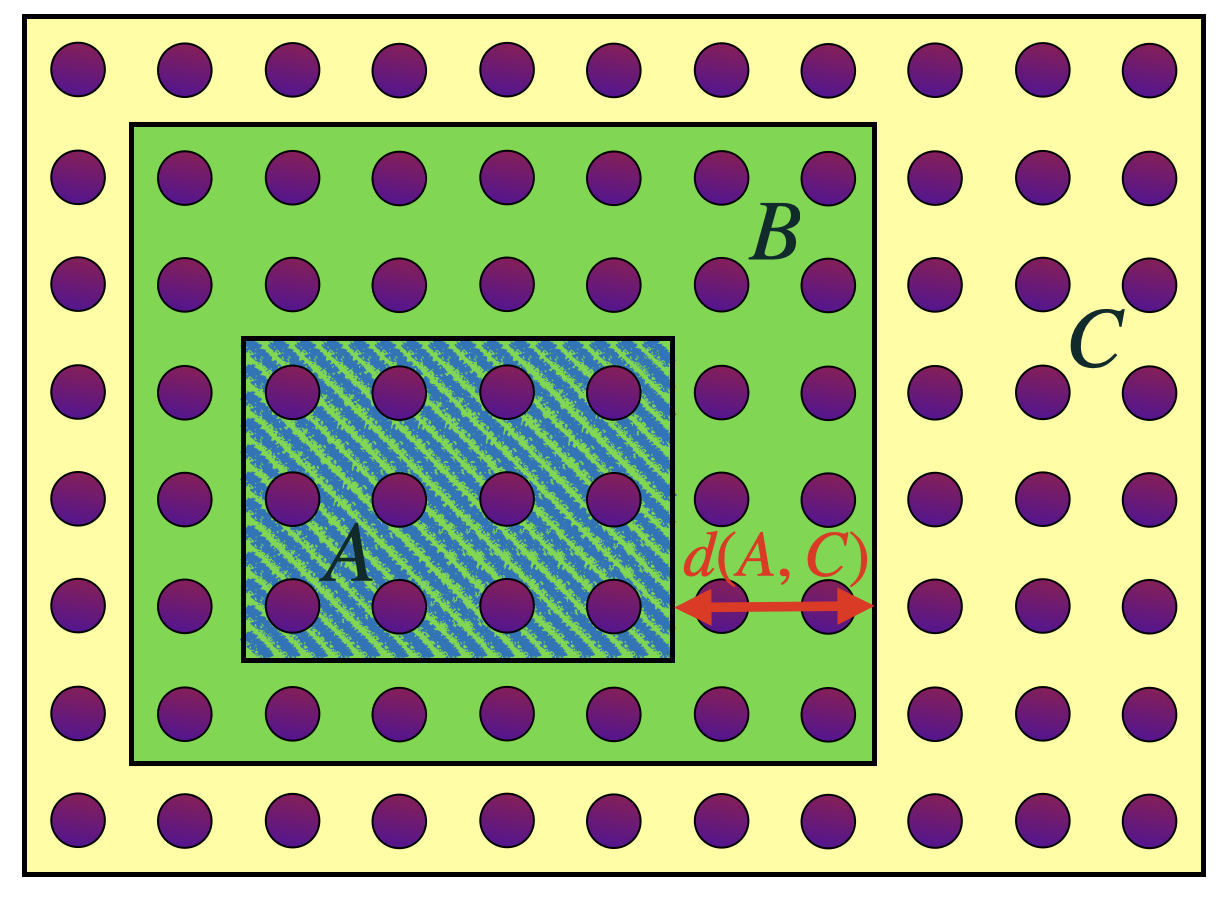}
			\caption{A square lattice $\Lambda$ split into $B$ and $C$. The region $A$, signaled with diagonal blue lines, is a subregion of $B$.}
			\label{fig4}
		\end{center}
	\end{figure}  
	\begin{definition}[Local indistinguishability]\label{def:local-indistinguishability}
		Consider $A \subset B \subset \Lambda$ as in Fig.~\ref{fig4} and $H$ as above. Then, the Gibbs state is locally indistinguishable if
		\begin{align*}
			\big|\tr[O_A \rho_\beta[H]] & - \tr[O_A \rho_\beta[H_B]] \, \big| \\ &\leq K'' \|O_A\| \frac{|A| e^{|A|/k}}{(1+d(A, B^c))^{\alpha}},
		\end{align*}
		for every $O_A \in \cB(\cH_A)$, with a constant $K''\geq0$, where $H_B \coloneqq \sum_{Z \subset B } h_Z$.
	\end{definition}
	Thus, expectation values of local observables can be computed using the Gibbs state on a smaller vicinity \cite{Kliesch.2014,Alhambra_2021}. It also has relevant implications for the preparation of Gibbs states \cite{Brandao2019}. We prove that this property also follows from LPPL, albeit with a slightly worse dependence on $\vert A \vert$ and $d(A, C)$ than that of Def.~\ref{def:local-indistinguishability}.
	
	\begin{lemma}(cf.~Lem.~\ref{lem:lppl-local-indist})\label{lem-main:lppl-local-indist}
		Assume that a Hamiltonian $H$ as in Eq.~(\ref{eq:k-form-hamiltonian}, \ref{eq:def-long-range}) for $\alpha> D$ satisfies the LPPL property (\ref{def:stability-local-perturbation}) for all $V\in\{H_E\,|\,E\subset\Lambda\}$. Then, for all $\delta\in(0,\alpha-D)$
		\begin{align*}
			|\tr[O_A \rho_\beta[H]]& - \tr[O_A \rho_\beta[H_B]]| \\[1mm] &\leq \kappa' \zeta(1+\delta) \|O_A\| \frac{|A|^2 e^{|A|/k}}{(1+d(A, C))^{\alpha-D-\delta}},
		\end{align*}
		where $A \subset B \subset \Lambda$, $H_B = \sum_{Z \subset B } h_Z$, $C \coloneqq \Lambda \setminus B $, $\kappa'\geq 0$ and $\zeta$ the Riemann zeta function, e.g.~$\zeta(2)=\frac{\pi^2}{6}$. 
	\end{lemma}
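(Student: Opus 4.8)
The plan is to derive local indistinguishability from LPPL by peeling off the Hamiltonian terms outside $B$ one shell at a time, exactly in the spirit of the proof of Lem.~\ref{lem:SLT}, but now summing the contributions \emph{toward} the complement $C = \Lambda \setminus B$ rather than over all of space. First I would introduce the level sets $\partial C_\ell^{B^c} = \{x \in B \mid d(x, C) = \ell\}$ and $E_\ell = \{Z \subset \Lambda \mid Z \cap C \neq \emptyset \text{ and } d(Z, A) \text{ minimal among remaining terms}\}$, so that removing all interaction terms $h_Z$ with $Z \cap C \neq \emptyset$ interpolates between $H$ and $H_B$. Writing this as a telescoping sum over shells indexed by the distance of the removed term to $A$, and applying the LPPL bound (Def.~\ref{def:stability-local-perturbation}) to each single-term perturbation $V_Z = h_Z$ with $|B_Z| \leq k$, I would obtain
\begin{equation*}
  |\tr[O_A \rho_\beta[H]] - \tr[O_A \rho_\beta[H_B]]| \leq K' \|O_A\| k |A| e^{|A|/k+1} \sum_{Z \cap C \neq \emptyset} \frac{\|h_Z\|}{(1 + d(A, Z))^\alpha}.
\end{equation*}

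The key technical step is then to bound the sum $\sum_{Z \cap C \neq \emptyset} \|h_Z\| / (1 + d(A,Z))^\alpha$. Pick any $j_Z \in Z \cap C$ and any $a_Z \in A$; since $d(A,Z) \geq \tfrac12(d(A, C))$ whenever the term $Z$ contributes — or, more carefully, $d(A,Z) + \mathrm{diam}(Z) \geq d(A,C)$ — and using $\sum_{Z \ni j} \|h_Z\| \leq J_{j,j} \leq g$ together with the long-range decay \eqref{eq:def-long-range}, I would split the distance: for a site $c \in C$ at distance $\ell = d(A,c)$ from $A$, the terms touching $c$ contribute at most $g/(1+\ell)^\alpha$ in norm, but I need the decay in terms of $d(A,C)$, not $d(A,c)$. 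This is where the loss of $D + \delta$ in the exponent enters: I would write $(1+\ell)^{-\alpha} \leq (1 + d(A,C))^{-(\alpha - D - \delta)} (1+\ell)^{-(D+\delta)}$ for $\ell \geq d(A,C)$, and then sum $\sum_{c \in C, d(A,c) = \ell} \sum_{\ell \geq 1} (1+\ell)^{-(D+\delta)}$; the number of sites at distance $\ell$ from the (boundary of the) region $A$ grows like $|A| \cdot \ell^{D-1}$ for a hypercubic lattice, so the sum over $\ell$ becomes a convergent series comparable to $\sum_\ell \ell^{D-1}/(1+\ell)^{D+\delta} \leq \zeta(1+\delta)$, contributing the stated $\zeta(1+\delta)$ factor and one extra power of $|A|$ from the surface-area count, giving the $|A|^2$ in the final bound.

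The main obstacle I anticipate is handling terms $Z$ that straddle the boundary — those with $Z \cap C \neq \emptyset$ but also $Z \cap A \neq \emptyset$ or $Z$ close to $A$ — since for these $d(A,Z)$ can be zero even though $Z$ reaches into $C$. The long-range assumption \eqref{eq:def-long-range} controls $J_{j,j'} = \sum_{Z \ni \{j,j'\}} \|h_Z\|$, so I would pair each such $Z$ with a site $j \in Z \cap A$ (or the nearest site of $A$) and a site $j' \in Z \cap C$, use $\|h_Z\| \leq J_{j,j'} \leq g(1+d(j,j'))^{-\alpha}$ with $d(j,j') \geq d(A,C)$, and absorb the sum over the $A$-endpoint into the factor $|A|$ and the sum over the $C$-endpoint into the shell sum above — being careful that each $Z$ is counted in a controlled way (at most $k^2$ times, or handled via a fixed choice of representative endpoints). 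The bookkeeping of which shell a multi-site term is assigned to, and ensuring the telescoping argument via LPPL is applied consistently as terms are deleted, is the delicate part; once the sum $\sum_{Z \cap C \neq \emptyset} \|h_Z\|(1+d(A,Z))^{-\alpha} \leq g u \cdot \zeta(1+\delta) |A| (1+d(A,C))^{-(\alpha-D-\delta)}$ is established (with $u$ absorbing the uniform-summability constant and the surface count contributing the extra $|A|$), the lemma follows immediately with $\kappa' = K' g u \, e$ up to the renaming of constants.
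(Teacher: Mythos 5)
Your overall route is the same as the paper's: telescope over the interaction terms that reach into $C$, apply LPPL (Def.~\ref{def:stability-local-perturbation}) to each removed term with $|B_Z|\le k$, and control the resulting weighted sum by a shell decomposition of $C$ (surface count $\lesssim |A|\,\ell^{D-1}$) together with the split $(1+\ell)^{-\alpha}\le (1+d(A,C))^{-(\alpha-D-\delta)}(1+\ell)^{-(D+\delta)}$, which produces exactly the $\zeta(1+\delta)$ factor, the exponent $\alpha-D-\delta$, and the extra power of $|A|$ in the statement. The gap is in the middle step, the reduction of $\sum_{Z\cap C\neq\emptyset}\|h_Z\|(1+d(A,Z))^{-\alpha}$ to that shell sum. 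Two of your intermediate claims are false as stated: a contributing term need not satisfy $d(A,Z)\ge\tfrac12 d(A,C)$ (a two-body term with one endpoint adjacent to $A$ and the other deep in $C$ has $d(A,Z)=0$), and the terms touching a site $c\in C$ at distance $\ell$ from $A$ have total norm bounded only by $J_{c,c}\le g$, not by $g/(1+\ell)^\alpha$. Your proposed fix covers only part of the problematic terms: choosing $j\in Z\cap A$ is possible only when $Z$ actually intersects $A$, and the parenthetical ``nearest site of $A$'' is not legitimate, since $J_{j,j'}$ only controls terms containing both $j$ and $j'$. The generic crossing term has its site nearest to $A$ lying in $B\setminus A$, possibly right next to $A$; for these neither the weight $(1+d(A,Z))^{-\alpha}$ nor the norm alone gives the needed decay, and the sum over that near endpoint runs over all of $B$, so it cannot be absorbed into the factor $|A|$.

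What is needed (and what the paper does) is to retain both decays simultaneously: for each crossing $Z$ pick $i\in Z\cap B$ with $d(A,i)=d(A,Z)$ and $j\in Z\cap C$, bound $\|h_Z\|\le J_{i,j}\le g(1+d(i,j))^{-\alpha}$ by \eqref{eq:def-long-range}, and then collapse the sum over $i\in B$ with the convolution bound of Lem.~\ref{lem:convolution}, $\sum_{i}(1+d(A,i))^{-\alpha}(1+d(i,j))^{-\alpha}\le u\,(1+d(A,j))^{-\alpha}$; only after this does one arrive at the shell sum over $j\in C$ that you then treat correctly. (A cruder case split --- either $d(A,i)\ge d(A,C)/2$ or $d(i,j)\ge d(A,C)/2$, using uniform summability of the other factor --- would also do, but your sketch as written contains neither argument.) A minor further remark: you need not delete the terms supported entirely inside $C$, since they do not affect the marginal on $B$; the paper telescopes only over terms with $Z\cap B\neq\emptyset\neq Z\cap C$, though including them is harmless because $d(A,Z)\ge d(A,C)$ for $Z\subset C$.
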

	The proof is shown in App.~\ref{app:lppltoloc}. It involves an iterative elimination of the Hamiltonian terms coupling $B$ with its exterior, which can be done through the LPPL property. Then, the total sum of terms can again be controlled by the convolution and uniform summability properties of the long-range decay. The resulting sum only decays with a weakened exponent $\alpha-D$. As in Sec.~\ref{sec:analytic}, these implications had already been proven for finite and short-range interactions \cite{Brandao2019,capel2023lppl,Rouze2024Learning}, but the proof in the long-range case is more subtle.
	
	To complete the picture, we also explore local indistinguishability numerically in Fig.~\ref{fig:locindist} for the model in Eq.~\eqref{eq::HLRising}. We again find that it holds for ranges of $\alpha$ beyond our present results, and into the strong long-range regime $\alpha \le D$. The decay exponents also appear to be larger than those predicted in the bounds and Def.~\ref{def:local-indistinguishability}. 
	\begin{figure}[t!]
		\centering
		\includegraphics[width=0.85\linewidth]{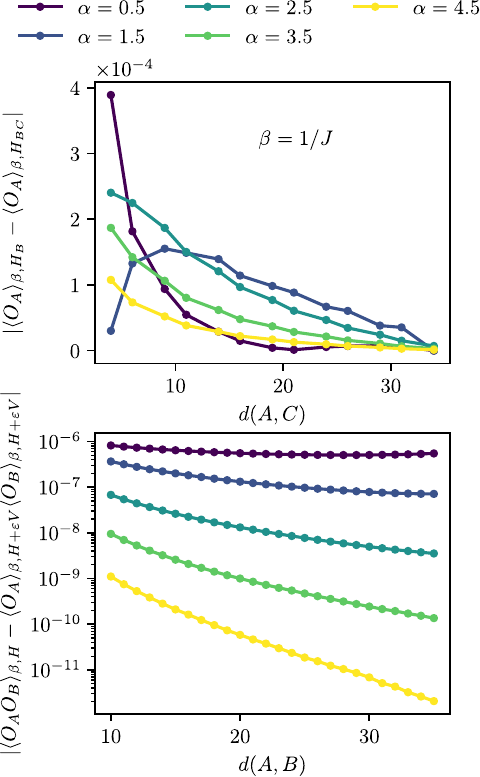}
		\caption{\justifying Test of local indistinguishability (top) and decay of correlations (bottom) in the model of Eq.~\eqref{eq::HLRising}.  For local indistinguishability, we show the difference between thermal mean values of the observable $O_A$ in the middle of the chain of $N=75$ sites in regions $B$ and $BC$, as illustrated in Fig.~\ref{fig4}, against the distance between the frontiers of both regions. We again see that it holds for a wider range of $\alpha$ than Lem.~\ref{lem-main:lppl-local-indist}. For the decay of correlations, we locate the observable $O_A=\sigma_i^z \sigma_{i+1}^z$ at $i=4$ and move $O_B$ to the right of the chain to test the dependence with distance. }
		\label{fig:locindist}
	\end{figure}

	An interesting feature of local indistinguishability is that it allows us to close the circle of implications with the starting point being decay of correlations, although also with a slower decay with the distance $d(A,B)$ than the original Def.~\ref{def:decay-correlation}. 
	\begin{lemma}(cf.~Lem.~\ref{lem:local-indist-decay-cor})\label{lem-main:local-indist-decay-cor}
		Assume that a Hamiltonian $H$ as in Eq.~(\ref{eq:k-form-hamiltonian}, \ref{eq:def-long-range}) satisfies the local indistinguishability (\ref{def:local-indistinguishability}) with $\alpha>2D$. Then,
		\begin{equation*}
			\begin{aligned}
				\mathrm{Cov}_{\rho_\beta}(O_A, O_B)
				&\leq \kappa''\|O_A\|\,\|O_B\|\, \\ \times &|A|\,|B|e^{(|A|+|B|)/k}\frac{1}{(1+d(A,B))^{\alpha-2D}}
			\end{aligned}
		\end{equation*}
		where $O_A, O_B\in\cB(\cH)$ and $\kappa''\geq 0$ a constant. 
	\end{lemma}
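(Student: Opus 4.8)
The plan is to localize the three traces entering the covariance onto two disjoint neighbourhoods of $A$ and $B$ and to use that on a \emph{tensor product} of two restricted Gibbs states the covariance vanishes identically. Write $\ell\coloneqq d(A,B)$ and set $m\coloneqq\lfloor\ell/3\rfloor$; if $\ell$ is below a fixed constant, or $A$ or $B$ is empty, the bound is immediate from $|\mathrm{Cov}_{\rho_\beta}(O_A,O_B)|\le2\|O_A\|\|O_B\|$, so I assume $\ell$ large. Let $\Lambda_A\coloneqq\{x:d(x,A)\le m\}$ and $\Lambda_B\coloneqq\{x:d(x,B)\le m\}$: then $A\subset\Lambda_A$, $B\subset\Lambda_B$, the balls are disjoint and separated by $d(\Lambda_A,\Lambda_B)\ge\ell-2m\ge\ell/3$, and $d(A,\Lambda\setminus\Lambda_A)$, $d(B,\Lambda\setminus\Lambda_B)$ and $d(A\cup B,\Lambda\setminus(\Lambda_A\cup\Lambda_B))$ are all at least $m+1$. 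Applying Def.~\ref{def:local-indistinguishability} for $H$ three times — with inner region $\Lambda_A$ and observable $O_A$, with inner region $\Lambda_B$ and observable $O_B$, and with inner region $\Lambda_A\cup\Lambda_B$ and observable $O_AO_B$ (supported on $A\cup B$, with $\|O_AO_B\|\le\|O_A\|\|O_B\|$) — replaces $\tr[O_A\rho_\beta[H]]$, $\tr[O_B\rho_\beta[H]]$, $\tr[O_AO_B\rho_\beta[H]]$ by the corresponding traces against $\rho_\beta[H_{\Lambda_A}]$, $\rho_\beta[H_{\Lambda_B}]$, $\rho_\beta[H_{\Lambda_A\cup\Lambda_B}]$, each up to an error bounded by $K''\|O_A\|\|O_B\|(|A|+|B|)e^{(|A|+|B|)/k}(1+m)^{-\alpha}$.

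Next I would decouple the two balls. Write $H_{\Lambda_A\cup\Lambda_B}=H_{\Lambda_A}+H_{\Lambda_B}+W$, where $W$ collects those $h_Z$ with $Z\subset\Lambda_A\cup\Lambda_B$ that intersect both $\Lambda_A$ and $\Lambda_B$. Each such $Z$ contains a pair $\{j,j'\}$ with $j\in\Lambda_A$, $j'\in\Lambda_B$, so by the long-range assumption \eqref{eq:def-long-range} and the tail/uniform-summability estimate for $\alpha>D$ (as in Lem.~\ref{lem:convolution}),
\begin{align*}
\|W\|&\le\sum_{j\in\Lambda_A}\sum_{j'\in\Lambda_B}J_{j,j'}\\
&\le g\sum_{j\in\Lambda_A}\ \sum_{j':\,d(j,j')\ge\ell/3}\frac{1}{(1+d(j,j'))^{\alpha}}\\
&\le c\,g\,|\Lambda_A|\,(\ell/3)^{-(\alpha-D)}\le c'\,g\,|A|\,\ell^{\,2D-\alpha},
\end{align*}
where I used $|\Lambda_A|\le c_D|A|(1+m)^D$; this is small for large $\ell$ precisely because $\alpha>2D$. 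Since $\rho_\beta[H_{\Lambda_A}+H_{\Lambda_B}]=\rho_\beta[H_{\Lambda_A}]\otimes\rho_\beta[H_{\Lambda_B}]$ is an exact product across the cut $\Lambda_A\mid\Lambda_B$, a standard Gibbs-state perturbation bound ($\|\rho_\beta[H'+W]-\rho_\beta[H']\|_1=O(\beta\|W\|)$ for $\beta\|W\|$ bounded) gives that $\tr[O_AO_B\rho_\beta[H_{\Lambda_A\cup\Lambda_B}]]$ differs from $\tr[O_A\rho_\beta[H_{\Lambda_A}]]\,\tr[O_B\rho_\beta[H_{\Lambda_B}]]$ by at most $O(\beta\|O_A\|\|O_B\|\|W\|)$.

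Combining the two steps, $\mathrm{Cov}_{\rho_\beta}(O_A,O_B)=\tr[O_AO_B\rho_\beta[H]]-\tr[O_A\rho_\beta[H]]\tr[O_B\rho_\beta[H]]$ becomes, after substitution, a difference of two copies of $\tr[O_A\rho_\beta[H_{\Lambda_A}]]\tr[O_B\rho_\beta[H_{\Lambda_B}]]$ (which cancel) plus the accumulated errors from the three localizations, the bilinear cross term, and the $W$-term. Using $1+m\ge(1+\ell)/3$, $\alpha\ge\alpha-2D$ and $|A|+|B|\le2|A||B|$, all error terms collapse into
\[
\mathrm{Cov}_{\rho_\beta}(O_A,O_B)\le\kappa''\,\|O_A\|\,\|O_B\|\,|A|\,|B|\,e^{(|A|+|B|)/k}\,(1+d(A,B))^{-(\alpha-2D)},
\]
with a constant $\kappa''$ depending on $K''$, $\beta$, $g$, $D$ and $k$, which is the claim.

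The main obstacle — and the reason for the weakened exponent — will be the estimate on the bridging term $W$: to localize $O_A$ via local indistinguishability one is forced to pay for a ball $\Lambda_A$ of volume $\sim d(A,B)^D$, while each site of that ball still couples to the distant ball with total strength $\sim d(A,B)^{-(\alpha-D)}$, so the product loses exactly $D$ powers of the distance — this is what forces $\alpha>2D$ and produces $\alpha-2D$ rather than $\alpha$. The rest is bookkeeping: tracking the powers of $|A|$, $|B|$ and of $e^{(|A|+|B|)/k}$ through the three uses of Def.~\ref{def:local-indistinguishability} and the perturbative step, and verifying the normalization conventions behind $\rho_\beta[H_{\Lambda_A}]\otimes\rho_\beta[H_{\Lambda_B}]=\rho_\beta[H_{\Lambda_A}+H_{\Lambda_B}]$.
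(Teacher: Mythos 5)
Your proposal is correct and follows essentially the same route as the paper's proof: buffer regions of radius $\sim d(A,B)/3$ around $A$ and $B$, three applications of local indistinguishability (to $O_A$, $O_B$, and $O_AO_B$), decoupling of the two buffered regions by bounding the bridging terms $W$ and invoking the $O(\beta\|W\|)$ Gibbs perturbation bound (which is exactly Eq.~\eqref{eq:simple-upper-bound-case0} in the paper), and the same bookkeeping that trades the buffer volume $\sim\ell^{D}$ against the tail decay to yield the exponent $\alpha-2D$. Your slightly different estimate of $\|W\|$ (per-site tail sums giving $g|A|\ell^{2D-\alpha}$ instead of the paper's $g|X||Y|(1+d(X,Y))^{-\alpha}$) gives the same exponent and changes nothing essential.
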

	The proof is shown in App.~\ref{app:loctodec}. It follows the proof for finite-range from \cite{capel2023lppl} by applying local indistinguishability to the joint observable $O_A O_B$, and to each of $O_A,O_B$ separately, done by considering buffer regions $X,Y$. To compare the resulting expectation values, we need to bound the contribution of terms that couple the buffer regions $X,Y$, which we have to pick large enough so that local indistinguishability holds, but small enough so that the terms that couple them are small. The optimal choice leads to the decay with exponent $\alpha-2D$, slower than that those in Def.~\ref{def:decay-correlation} and of Fig.~\ref{fig:locindist} (see also Fig.~\ref{fig:appendix} in the Appendix). This equivalence is expected to be physically meaningful whenever correlations decay with an effective exponent $\alpha_{\mathrm{cor}} > 2D$, as observed in certain long-range models~\cite{kitaevlrchain_PhysRevLett.119.110601}. In practice, this condition is met for microscopic exponents $\alpha > 2D$, as realized in experimental platforms such as Rydberg-atom arrays~\cite{reviewLR_RevModPhys.95.035002}.

	This last result closes a circle of implications between the different properties. In finite-range systems, where correlations often decay exponentially, it is known that the exponential behavior carries throughout the circle of implications \cite{capel2023lppl}. Here, instead, the long-range nature of the interactions prevents us from such a clean equivalence of decays, since the polynomial overheads we incur in with the proof affect the final scaling of the bound, unlike in finite range where decaying exponentials always dominate. See Fig.~\ref{fig3} for an illustration. 
	
	\vspace{0.5cm}
	
	\begin{figure}[t!]
		\vspace*{3ex}
		\begin{center}
			\includegraphics[scale=0.18]{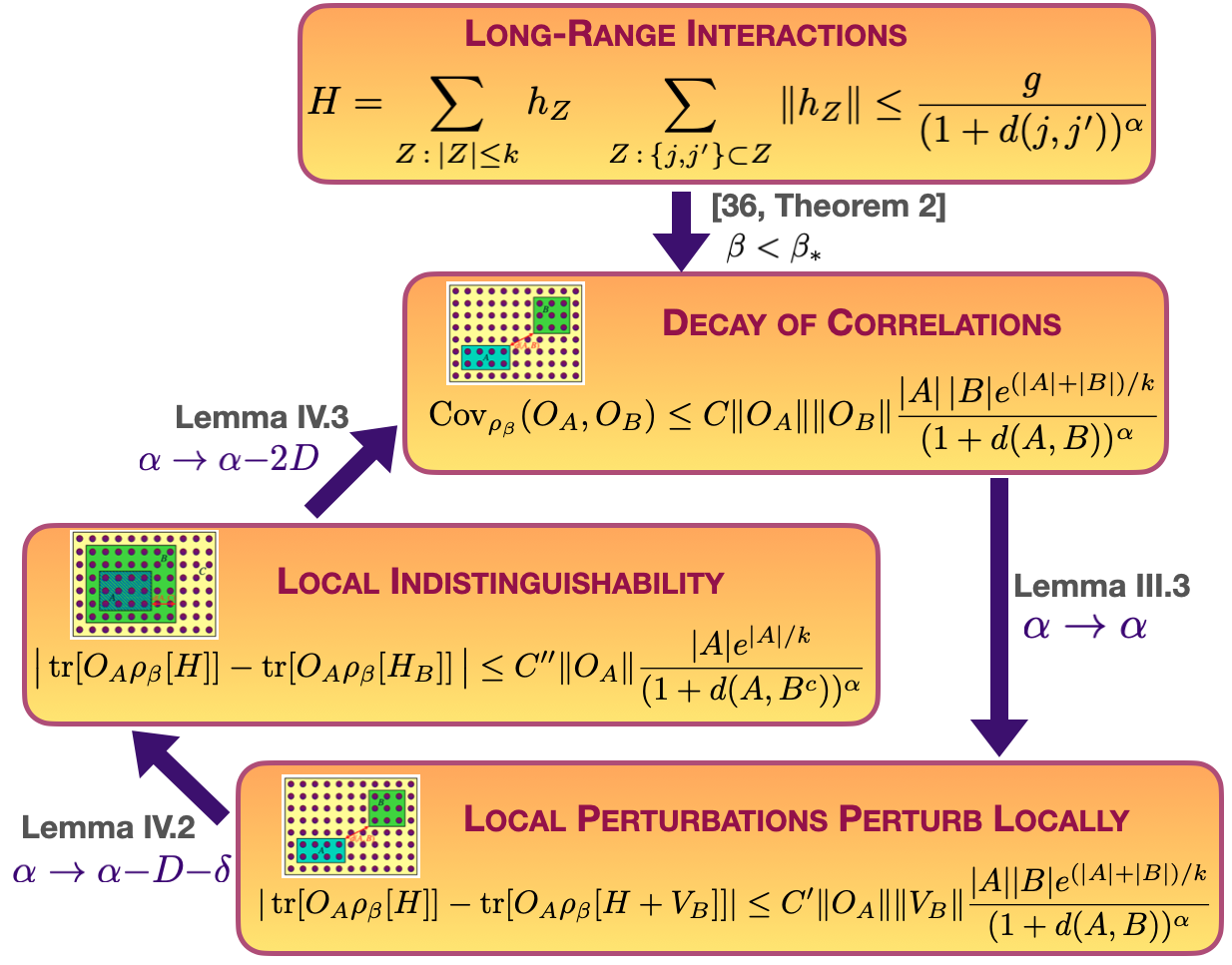}
			\caption{Equivalence between decay of correlations, LPPL and local indistinguishability for Hamiltonians with long-range interactions. Note that $\alpha$ changes in some of the implications, as indicated next to the corresponding arrows. As a result, completing a full cycle --- starting and ending, for example, at decay of correlations ---yields a different version of the property. At high enough temperature, decay of correlations, and subsequently all properties, hold.}
			\label{fig3}
		\end{center}
	\end{figure}
	
	\section{Conclusion}
	We have studied the relation between the stability of expectation values to global errors and various properties related to clustering of correlations, with the extra technical difficulty of assuming that the interactions are long-range. This difficulty required a different proof method to that of finite-range systems, and introduces several overheads. The most natural next step is to find arguments that at the very least reach the regimes of stability and clustering that we see in the numerical examples studied. Other directions in which to extend the results are to long-range bosonic \cite{BosonsLemm2023}, fermionic \cite{GongGuaita2023,Trivedi2024} or open quantum systems \cite{Roon_2024,kashyap2024quantumadvantage}.
	
	With our results we give a piece of evidence of the robustness to coherent errors of analog simulation experiments. They thus motivate the quantum simulation of Gibbs states of long-range models. A potential further motivation is that, even at high temperatures, it is unclear whether it is possible to efficiently calculate local expectation values and partition functions \cite{segovia2025hightemperature}, as opposed to what happens in finite-range interactions \cite{Mann2021,Harrow_2020,Haah2024}.
	
	We expect that our results could also be of use in the context of learning Gibbs states of long-range Hamiltonians within phases of matter as in \cite{Rouze2024Learning}. There, relations between properties akin to those here, but for finite-range interactions, were crucial in the proof.
	\begin{acknowledgments}
		TM and AC acknowledge the support of the Deutsche Forschungsgemeinschaft (DFG, German Research Foundation) - Project-ID 470903074 - TRR 352 and funding by the Federal Ministry of Education and Research (BMBF) and the Baden-Württemberg Ministry of Science as part of the Excellence Strategy of the German Federal and State Governments. The work of JSS was supported by the Spanish Research Council, CSIC, through the JAE-PREDOC 2024 and JAE-INTRO 2023 programs. AMA acknowledges support from the Spanish Agencia Estatal de Investigaci\'on through the grants ``IFT Centro de Excelencia Severo Ochoa CEX2020-001007-S", ``PCI2024-153448" and ``Ram\'on y Cajal RyC2021-031610-I", financed by MCIN/AEI/10.13039/501100011033 and the European Union NextGenerationEU/PRTR. This project was funded within the QuantERA II Programme that has received funding from the EU’s H2020 research and innovation programme under the GA No 101017733. We also acknowledge support from the Open Access Publishing Fund of the University of T\"{u}bingen.
	\end{acknowledgments}
	
	\sloppy
	\bibliography{references}

@book{Stanley.1986combinatorics,
  title = {Enumerative Combinatorics},
  ISBN = {9781461597636},
  DOI = {10.1007/978-1-4615-9763-6},
  publisher = {Springer US},
  author = {Stanley,  Richard P.},
  year = {1986}
}

@Article{kimura2024decaycorrelations1D,
author={Kimura, Yusuke and Kuwahara, Tomotaka},
title={Clustering Theorem in 1D Long-Range Interacting Systems at Arbitrary Temperatures},
journal={Communications in Mathematical Physics},
year={2025},
month={Feb},
day={20},
volume={406},
number={3},
pages={65},
abstract={This paper delves into a fundamental aspect of quantum statistical mechanics---the absence of thermal phase transitions in one-dimensional (1D) systems. Originating from Ising's analysis of the 1D spin chain, this concept has been pivotal in understanding 1D quantum phases, especially those with finite-range interactions, as extended by Araki. In this work, we focus on quantum long-range interactions and successfully derive a clustering theorem applicable to a wide range of interaction decays at arbitrary temperatures. This theorem applies to any interaction forms that decay faster than {\$}{\$}r^{\{}-2{\}}{\$}{\$}and does not rely on translation invariance or infinite system size assumptions. Also, we rigorously established that the temperature dependence of the correlation length is given by {\$}{\$}e^{\{}{\backslash}mathrm{\{}const.{\}} {\backslash}beta {\}}{\$}{\$}, which is the same as the classical cases. Our findings indicate the absence of phase transitions in 1D systems with super-polynomially decaying interactions, thereby expanding upon previous theoretical research. To overcome significant technical challenges originating from the divergence of the imaginary-time Lieb--Robinson bound, we utilize the quantum belief propagation to refine the cluster expansion method. This approach allowed us to address divergence issues effectively and contributed to a deeper understanding of low-temperature behaviors in 1D quantum systems.},
issn={1432-0916},
doi={10.1007/s00220-025-05242-4},
url={https://doi.org/10.1007/s00220-025-05242-4}
}

@article{kim2024arealawslongrange,
   title={Thermal Area Law in Long-Range Interacting Systems},
   volume={134},
   ISSN={1079-7114},
   url={http://dx.doi.org/10.1103/PhysRevLett.134.020402},
   DOI={10.1103/physrevlett.134.020402},
   number={2},
   journal={Physical Review Letters},
   publisher={American Physical Society (APS)},
   author={Kim, Donghoon and Kuwahara, Tomotaka and Saito, Keiji},
   year={2025},
   month=jan
}

@article{Kliesch.2014,
  title = {Locality of Temperature},
  volume = {4},
  ISSN = {2160-3308},
  DOI = {10.1103/physrevx.4.031019},
  number = {3},
  journal = {Physical Review X},
  publisher = {American Physical Society (APS)},
  author = {Kliesch,  M. and Gogolin,  C. and Kastoryano,  M. J. and Riera,  A. and Eisert,  J.},
  year = {2014}
}

@article{kashyap2024quantumadvantage,
  title = {Accuracy Guarantees and Quantum Advantage in Analog Open Quantum Simulation with and without Noise},
  volume = {15},
  ISSN = {2160-3308},
  DOI = {10.1103/physrevx.15.021017},
  number = {2},
  journal = {Physical Review X},
  publisher = {American Physical Society (APS)},
  author = {Kashyap,  Vikram and Styliaris,  Georgios and Mouradian,  Sara and Cirac,  J. Ignacio and Trivedi,  Rahul},
  year = {2025}
}

@article{capel2023lppl,
  title = {From Decay of Correlations to Locality and Stability of the Gibbs State},
  volume = {406},
  ISSN = {1432-0916},
  DOI = {10.1007/s00220-024-05198-x},
  number = {2},
  journal = {Communications in Mathematical Physics},
  publisher = {Springer Science and Business Media LLC},
  author = {Capel,  \'{A}ngela and Moscolari,  Massimo and Teufel,  Stefan and Wessel,  Tom},
  year = {2025}
}

@article{hastings2006gapdecaycorrelations,
	author = {Hastings, Matthew B. and Koma, Tohru},
	year = {2006},
	doi = {10.1007/s00220-006-0030-4},
	eprint = {math-ph/0507008},
	eprinttype = {arXiv},
	journal = {Communications in Mathematical Physics},
	number = {3},
	pages = {781--804},
	title = {Spectral Gap and Exponential Decay of Correlations},
	volume = {265}}

@inproceedings{cai2023randomerrors,
  author =	{Cai, Yiyi and Tong, Yu and Preskill, John},
  title =	{{Stochastic Error Cancellation in Analog Quantum Simulation}},
  booktitle =	{19th Conference on the Theory of Quantum Computation, Communication and Cryptography (TQC 2024)},
  series =	{Leibniz International Proceedings in Informatics (LIPIcs)},
  ISBN =	{978-3-95977-328-7},
  year =	{2024},
  volume =	{310},
  editor =	{Magniez, Fr\'{e}d\'{e}ric and Grilo, Alex Bredariol},
  publisher =	{Schloss Dagstuhl -- Leibniz-Zentrum f{\"u}r Informatik},
  address =	{Dagstuhl, Germany},
  doi =		{10.4230/LIPIcs.TQC.2024.2}
}

@article{Kuwahara.2022,
  title = {Exponential Clustering of Bipartite Quantum Entanglement at Arbitrary Temperatures},
  volume = {12},
  ISSN = {2160-3308},
  DOI = {10.1103/physrevx.12.021022},
  number = {2},
  journal = {Physical Review X},
  publisher = {American Physical Society (APS)},
  author = {Kuwahara,  Tomotaka and Saito,  Keiji},
  year = {2022}
}

@article{Kuwahara.2020lrb,
  title = {Strictly Linear Light Cones in Long-Range Interacting Systems of Arbitrary Dimensions},
  volume = {10},
  ISSN = {2160-3308},
  DOI = {10.1103/physrevx.10.031010},
  number = {3},
  journal = {Physical Review X},
  publisher = {American Physical Society (APS)},
  author = {Kuwahara,  Tomotaka and Saito,  Keiji},
  year = {2020}
}

@article{Hauke_2016,
   title={Measuring multipartite entanglement through dynamic susceptibilities},
   volume={12},
   ISSN={1745-2481},
   url={http://dx.doi.org/10.1038/nphys3700},
   DOI={10.1038/nphys3700},
   number={8},
   journal={Nature Physics},
   publisher={Springer Science and Business Media LLC},
   author={Hauke, Philipp and Heyl, Markus and Tagliacozzo, Luca and Zoller, Peter},
   year={2016},
   month=mar, pages={778–782} 
}

@Article{Sarovar2017,
author={Sarovar, Mohan
and Zhang, Jun
and Zeng, Lishan},
title={Reliability of analog quantum simulation},
journal={EPJ Quantum Technology},
year={2017},
month={Jan},
day={03},
volume={4},
number={1},
pages={1},
abstract={Analog quantum simulators (AQS) will likely be the first nontrivial application of quantum technology for predictive simulation. However, there remain questions regarding the degree of confidence that can be placed in the results of AQS since they do not naturally incorporate error correction. Specifically, how do we know whether an analog simulation of a quantum model will produce predictions that agree with the ideal model in the presence of inevitable imperfections? At the same time there is a widely held expectation that certain quantum simulation questions will be robust to errors and perturbations in the underlying hardware. Resolving these two points of view is a critical step in making the most of this promising technology. In this work we formalize the notion of AQS reliability by determining sensitivity of AQS outputs to underlying parameters, and formulate conditions for robust simulation. Our approach naturally reveals the importance of model symmetries in dictating the robust properties. To demonstrate the approach, we characterize the robust features of a variety of quantum many-body models.},
issn={2196-0763},
doi={10.1140/epjqt/s40507-016-0054-4},
url={https://doi.org/10.1140/epjqt/s40507-016-0054-4}
}

@Article{Trivedi2024,
author={Trivedi, Rahul and Franco Rubio, Adrian and Cirac, J. Ignacio},
title={Quantum advantage and stability to errors in analogue quantum simulators},
journal={Nature Communications},
year={2024},
month={Aug},
day={02},
volume={15},
number={1},
abstract={Several quantum hardware platforms, while being unable to perform fully fault-tolerant quantum computation, can still be operated as analogue quantum simulators for addressing many-body problems. However, due to the presence of errors, it is not clear to what extent those devices can provide us with an advantage with respect to classical computers. In this work, we make progress on this problem for noisy analogue quantum simulators computing physically relevant properties of many-body systems both in equilibrium and undergoing dynamics. We first formulate a system-size independent notion of stability against extensive errors, which we prove for Gaussian fermion models, as well as for a restricted class of spin systems. Remarkably, for the Gaussian fermion models, our analysis shows the stability of critical models which have long-range correlations. Furthermore, we analyze how this stability may lead to a quantum advantage, for the problem of computing the thermodynamic limit of many-body models, in the presence of a constant error rate and without any explicit error correction.},
issn={2041-1723},
doi={10.1038/s41467-024-50750-x},
url={https://doi.org/10.1038/s41467-024-50750-x}
}

@Article{Daley2022,
author={Daley, Andrew J.
and Bloch, Immanuel
and Kokail, Christian
and Flannigan, Stuart
and Pearson, Natalie
and Troyer, Matthias
and Zoller, Peter},
title={Practical quantum advantage in quantum simulation},
journal={Nature},
year={2022},
month={Jul},
day={01},
volume={607},
number={7920},
pages={667-676},
abstract={The development of quantum computing across several technologies and platforms has reached the point of having an advantage over classical computers for an artificial problem, a point known as `quantum advantage'. As a next step along the development of this technology, it is now important to discuss `practical quantum advantage', the point at which quantum devices will solve problems of practical interest that are not tractable for traditional supercomputers. Many of the most promising short-term applications of quantum computers fall under the umbrella of quantum simulation: modelling the quantum properties of microscopic particles that are directly relevant to modern materials science, high-energy physics and quantum chemistry. This would impact several important real-world applications, such as developing materials for batteries, industrial catalysis or nitrogen fixing. Much as aerodynamics can be studied either through simulations on a digital computer or in a wind tunnel, quantum simulation can be performed not only on future fault-tolerant digital quantum computers but also already today through special-purpose analogue quantum simulators. Here we overview the state of the art and future perspectives for quantum simulation, arguing that a first practical quantum advantage already exists in the case of specialized applications of analogue devices, and that fully digital devices open a full range of applications but require further development of fault-tolerant hardware. Hybrid digital--analogue devices that exist today already promise substantial flexibility in near-term applications.},
issn={1476-4687},
doi={10.1038/s41586-022-04940-6},
url={https://doi.org/10.1038/s41586-022-04940-6}
}

@article{Hastings.2007,
  title = {Quantum belief propagation: An algorithm for thermal quantum systems},
  volume = {76},
  ISSN = {1550-235X},
  url = {http://dx.doi.org/10.1103/PhysRevB.76.201102},
  DOI = {10.1103/physrevb.76.201102},
  number = {20},
  journal = {Physical Review B},
  publisher = {American Physical Society (APS)},
  author = {Hastings,  M. B.},
  year = {2007}
}

@article{reviewLR_RevModPhys.95.035002,
  title = {Long-range interacting quantum systems},
  volume = {95},
  ISSN = {1539-0756},
  url = {http://dx.doi.org/10.1103/RevModPhys.95.035002},
  DOI = {10.1103/revmodphys.95.035002},
  number = {3},
  journal = {Reviews of Modern Physics},
  publisher = {American Physical Society (APS)},
  author = {Defenu,  Nicolò and Donner,  Tobias and Macrì,  Tommaso and Pagano,  Guido and Ruffo,  Stefano and Trombettoni,  Andrea},
  year = {2023},
  month = aug 
}

@article{Tran2019,
  title = {Locality and Digital Quantum Simulation of Power-Law Interactions},
  volume = {9},
  ISSN = {2160-3308},
  url = {http://dx.doi.org/10.1103/PhysRevX.9.031006},
  DOI = {10.1103/physrevx.9.031006},
  number = {3},
  journal = {Physical Review X},
  publisher = {American Physical Society (APS)},
  author = {Tran,  Minh C. and Guo,  Andrew Y. and Su,  Yuan and Garrison,  James R. and Eldredge,  Zachary and Foss-Feig,  Michael and Childs,  Andrew M. and Gorshkov,  Alexey V.},
  year = {2019},
  month = jul 
}

@article{Tran2021,
  title = {Lieb-Robinson Light Cone for Power-Law Interactions},
  volume = {127},
  ISSN = {1079-7114},
  url = {http://dx.doi.org/10.1103/PhysRevLett.127.160401},
  DOI = {10.1103/physrevlett.127.160401},
  number = {16},
  journal = {Physical Review Letters},
  publisher = {American Physical Society (APS)},
  author = {Tran,  Minh C. and Guo,  Andrew Y. and Baldwin,  Christopher L. and Ehrenberg,  Adam and Gorshkov,  Alexey V. and Lucas,  Andrew},
  year = {2021},
  month = oct 
}

@article{Tran2020,
  title = {Hierarchy of Linear Light Cones with Long-Range Interactions},
  volume = {10},
  ISSN = {2160-3308},
  url = {http://dx.doi.org/10.1103/PhysRevX.10.031009},
  DOI = {10.1103/physrevx.10.031009},
  number = {3},
  journal = {Physical Review X},
  publisher = {American Physical Society (APS)},
  author = {Tran,  Minh C. and Chen,  Chi-Fang and Ehrenberg,  Adam and Guo,  Andrew Y. and Deshpande,  Abhinav and Hong,  Yifan and Gong,  Zhe-Xuan and Gorshkov,  Alexey V. and Lucas,  Andrew},
  year = {2020},
  month = jul 
}

@article{Chen2019,
  title = {Finite Speed of Quantum Scrambling with Long Range Interactions},
  volume = {123},
  ISSN = {1079-7114},
  url = {http://dx.doi.org/10.1103/PhysRevLett.123.250605},
  DOI = {10.1103/physrevlett.123.250605},
  number = {25},
  journal = {Physical Review Letters},
  publisher = {American Physical Society (APS)},
  author = {Chen,  Chi-Fang and Lucas,  Andrew},
  year = {2019},
  month = dec 
}

@article{Kim_2012,
   title={Perturbative analysis of topological entanglement entropy from conditional independence},
   volume={86},
   ISSN={1550-235X},
   url={http://dx.doi.org/10.1103/PhysRevB.86.245116},
   DOI={10.1103/physrevb.86.245116},
   number={24},
   journal={Physical Review B},
   publisher={American Physical Society (APS)},
   author={Kim, Isaac H.},
   year={2012},
   month=dec
}

@article{GongGuaita2023,
  title = {Long-Range Free Fermions: Lieb-Robinson Bound,  Clustering Properties,  and Topological Phases},
  volume = {130},
  ISSN = {1079-7114},
  url = {http://dx.doi.org/10.1103/PhysRevLett.130.070401},
  DOI = {10.1103/physrevlett.130.070401},
  number = {7},
  journal = {Physical Review Letters},
  publisher = {American Physical Society (APS)},
  author = {Gong,  Zongping and Guaita,  Tommaso and Cirac,  J. Ignacio},
  year = {2023},
  month = feb 
}

@article{BosonsLemm2023,
  title = {Information propagation in long-range quantum many-body systems},
  author = {Lemm, Marius and Rubiliani, Carla and Sigal, Israel Michael and Zhang, Jingxuan},
  journal = {Phys. Rev. A},
  volume = {108},
  issue = {6},
  pages = {L060401},
  numpages = {6},
  year = {2023},
  month = {Dec},
  publisher = {American Physical Society},
  doi = {10.1103/PhysRevA.108.L060401},
  url = {https://link.aps.org/doi/10.1103/PhysRevA.108.L060401}
}

@article{Roon_2024,
  title = {On quasi-locality and decay of correlations for long-range models of open quantum spin systems},
  volume = {57},
  ISSN = {1751-8121},
  url = {http://dx.doi.org/10.1088/1751-8121/ad8609},
  DOI = {10.1088/1751-8121/ad8609},
  number = {44},
  journal = {Journal of Physics A: Mathematical and Theoretical},
  publisher = {IOP Publishing},
  author = {Roon,  Eric B and Sims,  Robert},
  year = {2024},
  month = oct
}

@article{Bachmann_2011,
   title={Automorphic Equivalence within Gapped Phases of Quantum Lattice Systems},
   volume={309},
   ISSN={1432-0916},
   url={http://dx.doi.org/10.1007/s00220-011-1380-0},
   DOI={10.1007/s00220-011-1380-0},
   number={3},
   journal={Communications in Mathematical Physics},
   publisher={Springer Science and Business Media LLC},
   author={Bachmann, Sven and Michalakis, Spyridon and Nachtergaele, Bruno and Sims, Robert},
   year={2011},
   month=nov, 
   pages={835–871}
}

@article{De_Roeck_2015,
   title={Local perturbations perturb—exponentially–locally},
   volume={56},
   ISSN={1089-7658},
   url={http://dx.doi.org/10.1063/1.4922507},
   DOI={10.1063/1.4922507},
   number={6},
   journal={Journal of Mathematical Physics},
   publisher={AIP Publishing},
   author={De Roeck, W. and Schütz, M.},
   year={2015},
   month=jun 
}

@Article{Henheik2022,
author={Henheik, Joscha and Teufel, Stefan and Wessel, Tom},
title={Local stability of ground states in locally gapped and weakly interacting quantum spin systems},
journal={Letters in Mathematical Physics},
year={2022},
month={Jan},
day={18},
volume={112},
number={1},
pages={9},
abstract={Based on a result by Yarotsky (J Stat Phys 118, 2005), we prove that localized but otherwise arbitrary perturbations of weakly interacting quantum spin systems with uniformly gapped on-site terms change the ground state of such a system only locally, even if they close the spectral gap. We call this a strong version of the local perturbations perturb locally (LPPL) principle which is known to hold for much more general gapped systems, but only for perturbations that do not close the spectral gap of the Hamiltonian. We also extend this strong LPPL-principle to Hamiltonians that have the appropriate structure of gapped on-site terms and weak interactions only locally in some region of space. While our results are technically corollaries to a theorem of Yarotsky, we expect that the paradigm of systems with a locally gapped ground state that is completely insensitive to the form of the Hamiltonian elsewhere extends to other situations and has important physical consequences.},
issn={1573-0530},
doi={10.1007/s11005-021-01494-y},
url={https://doi.org/10.1007/s11005-021-01494-y}
}

@article{Nachtergaele_2006,
   title={Propagation of Correlations in Quantum Lattice Systems},
   volume={124},
   ISSN={1572-9613},
   url={http://dx.doi.org/10.1007/s10955-006-9143-6},
   DOI={10.1007/s10955-006-9143-6},
   number={1},
   journal={Journal of Statistical Physics},
   publisher={Springer Science and Business Media LLC},
   author={Nachtergaele, Bruno and Ogata, Yoshiko and Sims, Robert},
   year={2006},
   month=jul,
   pages={1–13} 
}

@article{KastoryanoEisert_2014,
   title={Rapid mixing implies exponential decay of correlations},
   volume={54},
   ISSN={1089-7658},
   url={http://dx.doi.org/10.1063/1.4822481},
   DOI={10.1063/1.4822481},
   number={10},
   journal={Journal of Mathematical Physics},
   publisher={AIP Publishing},
   author={Kastoryano, Michael J. and Eisert, Jens},
   year={2013},
   month=oct 
}

@misc{hastings2010quasiadiabatic,
  doi = {10.48550/ARXIV.1001.5280},
  url = {https://arxiv.org/abs/1001.5280},
  author = {Hastings,  M. B.},
  title = {Quasi-adiabatic Continuation for Disordered Systems: Applications to Correlations,  Lieb-Schultz-Mattis,  and Hall Conductance},
  publisher = {arXiv},
  year = {2010}
}

@article{exp_qchemistry1_Navickas_2025,
   title={Experimental Quantum Simulation of Chemical Dynamics},
   ISSN={1520-5126},
   url={http://dx.doi.org/10.1021/jacs.5c03336},
   DOI={10.1021/jacs.5c03336},
   journal={Journal of the American Chemical Society},
   publisher={American Chemical Society (ACS)},
   author={Navickas, Tomas and MacDonell, Ryan J. and Valahu, Christophe H. and Olaya-Agudelo, Vanessa C. and Scuccimarra, Frank and Millican, Maverick J. and Matsos, Vassili G. and Nourse, Henry L. and Rao, Arjun D. and Biercuk, Michael J. and Hempel, Cornelius and Kassal, Ivan and Tan, Ting Rei},
   year={2025},
   month=may
}

@article{expHighEnergy1_Aidelsburger2022,
  doi = {10.1098/rsta.2021.0064},
  url = {https://doi.org/10.1098/rsta.2021.0064},
  year = {2022},
  month = jan,
  publisher = {The Royal Society},
  volume = {380},
  number = {2214},
  author = {Monika Aidelsburger and Luca Barbiero and Alejandro Bermudez and Titas Chanda and Alexandre Dauphin and Daniel Gonz{\'a}lez-Cuadra and Przemys{\l}aw R. Grzybowski and Simon Hands and Fred Jendrzejewski and Johannes J{\"u}nemann and Gediminas Juzeli{\=u}nas and others},
  title = {Cold atoms meet lattice gauge theory},
  journal = {Philosophical Transactions of the Royal Society A: Mathematical, Physical and Engineering Sciences}
}

@article{expCondensedMatter,
author = {Sylvain de Léséleuc  and Vincent Lienhard  and Pascal Scholl  and Daniel Barredo  and Sebastian Weber  and Nicolai Lang  and Hans Peter Büchler  and Thierry Lahaye  and Antoine Browaeys },
title = {Observation of a symmetry-protected topological phase of interacting bosons with Rydberg atoms},
journal = {Science},
volume = {365},
number = {6455},
pages = {775-780},
year = {2019},
doi = {10.1126/science.aav9105},
abstract = {Most topologically nontrivial systems discovered to date consist of noninteracting particles. Their properties can therefore be explained within a single-particle picture. De Léséleuc et al. engineered a topological phase of bosonic atoms in which interactions play a crucial role. The atoms, which were in highly excited Rydberg states, were held in an array of optical tweezers. Depending on the spatial arrangement of the tweezers, the dipole-dipole interactions between the atoms gave rise to two configurations with different topological properties. Science, this issue p. 775 Bosonic Rydberg atoms in an array of optical tweezers form a topological many-body state. The concept of topological phases is a powerful framework for characterizing ground states of quantum many-body systems that goes beyond the paradigm of symmetry breaking. Topological phases can appear in condensed-matter systems naturally, whereas the implementation and study of such quantum many-body ground states in artificial matter require careful engineering. Here, we report the experimental realization of a symmetry-protected topological phase of interacting bosons in a one-dimensional lattice and demonstrate a robust ground state degeneracy attributed to protected zero-energy edge states. The experimental setup is based on atoms trapped in an array of optical tweezers and excited into Rydberg levels, which gives rise to hard-core bosons with an effective hopping generated by dipolar exchange interaction.}}

@article{Alhambra_2021,
   title={Locally Accurate Tensor Networks for Thermal States and Time Evolution},
   volume={2},
   ISSN={2691-3399},
   url={http://dx.doi.org/10.1103/PRXQuantum.2.040331},
   DOI={10.1103/prxquantum.2.040331},
   number={4},
   journal={PRX Quantum},
   publisher={American Physical Society (APS)},
   author={Alhambra, \'{A}lvaro M. and Cirac, J. Ignacio},
   year={2021}
}

@Article{Brandao2019,
author={Brand{\~a}o, Fernando G. S. L. and Kastoryano, Michael J.},
title={Finite Correlation Length Implies Efficient Preparation of Quantum Thermal States},
journal={Communications in Mathematical Physics},
year={2019},
month={Jan},
day={01},
volume={365},
number={1},
pages={1-16},
abstract={Preparing quantum thermal states on a quantum computer is in general a difficult task. We provide a procedure to prepare a thermal state on a quantum computer with a logarithmic depth circuit of local quantum channels assuming that the thermal state correlations satisfy the following two properties: (i) the correlations between two regions are exponentially decaying in the distance between the regions, and (ii) the thermal state is an approximate Markov state for shielded regions. We require both properties to hold for the thermal state of the Hamiltonian on any induced subgraph of the original lattice. Assumption (ii) is satisfied for all commuting Gibbs states, while assumption (i) is satisfied for every model above a critical temperature. Both assumptions are satisfied in one spatial dimension. Moreover, both assumptions are expected to hold above the thermal phase transition for models without any topological order at finite temperature. As a building block, we show that exponential decay of correlation (for thermal states of Hamiltonians on all induced subgraphs) is sufficient to efficiently estimate the expectation value of a local observable. Our proof uses quantum belief propagation, a recent strengthening of strong sub-additivity, and naturally breaks down for states with topological order.},
issn={1432-0916},
doi={10.1007/s00220-018-3150-8},
url={https://doi.org/10.1007/s00220-018-3150-8}
}

@Article{Schuckert2025,
author={Schuckert, Alexander and Katz, Or and Feng, Lei and Crane, Eleanor and De, Arinjoy and Hafezi, Mohammad and Gorshkov, Alexey V. and Monroe, Christopher},
title={Observation of a finite-energy phase transition in a one-dimensional quantum simulator},
journal={Nature Physics},
year={2025},
month={Mar},
volume={21},
number={3},
pages={374-379},
abstract={Equilibrium phase transitions in many-body systems have been predicted and observed in two and three spatial dimensions but have long been thought not to exist in one-dimensional systems. It was suggested that a phase transition in one dimension can occur in the presence of long-range interactions. However, an experimental realization has so far not been achieved due to the requirement to both realize interactions over sufficiently long distances and to prepare equilibrium states. Here we demonstrate a finite-energy phase transition in one dimension by implementing a long-range interacting model in a trapped-ion quantum simulator. We show that finite-energy states can be generated by time-evolving initial product states and letting them thermalize under the dynamics of a many-body Hamiltonian. By preparing initial states with different energies, we study the finite-energy phase diagram of a long-range interacting quantum system. We observe a ferromagnetic equilibrium phase transition as well as a crossover from a low-energy polarized paramagnet to a high-energy unpolarized paramagnet, in agreement with numerical simulations. Our work presents a scheme for preparing finite-energy states in quantum simulation platforms, enabling access to phases at finite energy density.},
issn={1745-2481},
doi={10.1038/s41567-024-02751-2},
url={https://doi.org/10.1038/s41567-024-02751-2}
}

@misc{javier2025engineering,
  doi = {10.48550/ARXIV.2506.07250},
  url = {https://arxiv.org/abs/2506.07250},
  author = {Arg\"{u}ello-Luengo,  Javier},
  title = {Engineering and harnessing long-range interactions for atomic quantum simulators},
  publisher = {arXiv},
  year = {2025},
}

@misc{segovia2025hightemperature,
  doi = {10.48550/ARXIV.2504.20901},
  url = {https://arxiv.org/abs/2504.20901},
  author = {S\'{a}nchez-Segovia,  Jorge and Schneider,  Jan T. and Alhambra,  \'{A}lvaro M.},
  title = {High-temperature partition functions and classical simulability of long-range quantum systems},
  publisher = {arXiv},
  year = {2025}
}

@misc{rouze2024optimalquantum,
  doi = {10.48550/ARXIV.2411.04885},
  url = {https://arxiv.org/abs/2411.04885},
  author = {Rouz\'{e},  Cambyse and Fran\c{c}a,  Daniel Stilck and Alhambra,  \'{A}lvaro M.},
  title = {Optimal quantum algorithm for Gibbs state preparation},
  publisher = {arXiv},
  year = {2024}
}

@article{Rouze2024Learning,
   title={Efficient learning of ground and thermal states within phases of matter},
   volume={15},
   ISSN={2041-1723},
   url={http://dx.doi.org/10.1038/s41467-024-51439-x},
   DOI={10.1038/s41467-024-51439-x},
   number={1},
   journal={Nature Communications},
   publisher={Springer Science and Business Media LLC},
   author={Rouzé, Cambyse and Stilck França, Daniel and Onorati, Emilio and Watson, James D.},
   year={2024},
   month=sep
}

@article{Mann2021,
  title = {Efficient algorithms for approximating quantum partition functions},
  volume = {62},
  ISSN = {1089-7658},
  url = {http://dx.doi.org/10.1063/5.0013689},
  DOI = {10.1063/5.0013689},
  number = {2},
  journal = {Journal of Mathematical Physics},
  publisher = {AIP Publishing},
  author = {Mann,  Ryan L. and Helmuth,  Tyler},
  year = {2021},
  month = feb 
}

@article{Harrow_2020,
   title={Classical algorithms, correlation decay, and complex zeros of partition functions of Quantum many-body systems},
   ISBN={9781450369794},
   url={http://dx.doi.org/10.1145/3357713.3384322},
   DOI={10.1145/3357713.3384322},
   journal={Proceedings of the 52nd Annual ACM SIGACT Symposium on Theory of Computing},
   publisher={ACM},
   author={Harrow, Aram W. and Mehraban, Saeed and Soleimanifar, Mehdi},
   year={2020},
   month={Jun}
}

@article{Rakovszky2024,
  title = {Defining Stable Phases of Open Quantum Systems},
  volume = {14},
  ISSN = {2160-3308},
  url = {http://dx.doi.org/10.1103/PhysRevX.14.041031},
  DOI = {10.1103/physrevx.14.041031},
  number = {4},
  journal = {Physical Review X},
  publisher = {American Physical Society (APS)},
  author = {Rakovszky,  Tibor and Gopalakrishnan,  Sarang and von Keyserlingk,  Curt},
  year = {2024},
  month = nov 
}

@Article{Haah2024,
author={Haah, Jeongwan
and Kothari, Robin
and Tang, Ewin},
title={Learning quantum Hamiltonians from high-temperature Gibbs states and real-time evolutions},
journal={Nature Physics},
year={2024},
month={Jun},
volume={20},
number={6},
pages={1027-1031},
abstract={The behaviour of a system is determined by its Hamiltonian. In many cases, the exact Hamiltonian is not known and has to be extracted by analysing the outcome of measurements. We study the problem of learning a local Hamiltonian H to a given precision, supposing either we are given copies of its Gibbs state $\rho${\thinspace}={\thinspace}e−$\beta$H/Tr(e−$\beta$H) at a known inverse temperature $\beta$ or we have access to unitary real-time evolution e−itH for a known evolution time t. Improving on recent results, we show how to learn the coefficients of a local Hamiltonian H to error $\epsilon$ with S{\thinspace}={\thinspace}O(logN/($\beta$$\epsilon$)2) Gibbs states or with Q{\thinspace}={\thinspace}O(logN/(t$\epsilon$)2) runs of the real-time evolution, where N is the number of qubits in the system and if $\beta${\thinspace}<{\thinspace}$\beta$c and t{\thinspace}<{\thinspace}tc for some critical inverse temperature $\beta$c and critical evolution time tc. We design a classical post-processing algorithm with time complexity linear in the sample size in both cases, namely, O(NS) and O(NQ). In the Gibbs-state input case, we prove a matching lower bound, showing that our algorithm's sample complexity is optimal, and hence, our time complexity is also optimal.},
issn={1745-2481},
doi={10.1038/s41567-023-02376-x},
url={https://doi.org/10.1038/s41567-023-02376-x}
}

@article{Altman2021,
  title = {Quantum Simulators: Architectures and Opportunities},
  volume = {2},
  ISSN = {2691-3399},
  DOI = {10.1103/prxquantum.2.017003},
  number = {1},
  journal = {PRX Quantum},
  publisher = {American Physical Society (APS)},
  author = {Altman,  Ehud and Brown,  Kenneth R. and Carleo,  Giuseppe and Carr,  Lincoln D. and Demler,  Eugene and Chin,  Cheng and DeMarco,  Brian and Economou,  Sophia E. and Eriksson,  Mark A. and Fu,  Kai-Mei C. and Greiner,  Markus and Hazzard,  Kaden R.A. and Hulet,  Randall G. and Kollár,  Alicia J. and Lev,  Benjamin L. and Lukin,  Mikhail D. and Ma,  Ruichao and Mi,  Xiao and Misra,  Shashank and Monroe,  Christopher and Murch,  Kater and Nazario,  Zaira and Ni,  Kang-Kuen and Potter,  Andrew C. and Roushan,  Pedram and Saffman,  Mark and Schleier-Smith,  Monika and Siddiqi,  Irfan and Simmonds,  Raymond and Singh,  Meenakshi and Spielman,  I.B. and Temme,  Kristan and Weiss,  David S. and Vučković,  Jelena and Vuletić,  Vladan and Ye,  Jun and Zwierlein,  Martin},
  year = {2021}
}

@article{Georgescu2014,
  title = {Quantum simulation},
  volume = {86},
  ISSN = {1539-0756},
  DOI = {10.1103/revmodphys.86.153},
  number = {1},
  journal = {Reviews of Modern Physics},
  publisher = {American Physical Society (APS)},
  author = {Georgescu,  I. M. and Ashhab,  S. and Nori,  Franco},
  year = {2014},
  pages = {153–185}
}

@article{Flannigan_2022,
doi = {10.1088/2058-9565/ac88f5},
url = {https://dx.doi.org/10.1088/2058-9565/ac88f5},
year = {2022},
month = {aug},
publisher = {IOP Publishing},
volume = {7},
number = {4},
pages = {045025},
author = {Flannigan, S and Pearson, N and Low, G H and Buyskikh, A and Bloch, I and Zoller, P and Troyer, M and Daley, A J},
title = {Propagation of errors and quantitative quantum simulation with quantum advantage},
journal = {Quantum Science and Technology},
abstract = {The rapid development in hardware for quantum computing and simulation has led to much interest in problems where these devices can exceed the capabilities of existing classical computers and known methods. Approaching this for problems that go beyond testing the performance of a quantum device is an important step, and quantum simulation of many-body quench dynamics is one of the most promising candidates for early practical quantum advantage. We analyse the requirements for quantitatively reliable quantum simulation beyond the capabilities of existing classical methods for analogue quantum simulators with neutral atoms in optical lattices and trapped ions. Considering the primary sources of error in analogue devices and how they propagate after a quench in studies of the Hubbard or long-range transverse field Ising model, we identify the level of error expected in quantities we extract from experiments. We conclude for models that are directly implementable that regimes of practical quantum advantage are attained in current experiments with analogue simulators. We also identify the hardware requirements to reach the same level of accuracy with future fault-tolerant digital quantum simulation. Verification techniques are already available to test the assumptions we make here, and demonstrating these in experiments will be an important next step.}
}

@article{LiuAdvantage2025,
  title = {Efficiently Verifiable Quantum Advantage on Near-Term Analog Quantum Simulators},
  volume = {6},
  ISSN = {2691-3399},
  url = {http://dx.doi.org/10.1103/PRXQuantum.6.010341},
  DOI = {10.1103/prxquantum.6.010341},
  number = {1},
  journal = {PRX Quantum},
  publisher = {American Physical Society (APS)},
  author = {Liu,  Zhenning and Devulapalli,  Dhruv and Hangleiter,  Dominik and Liu,  Yi-Kai and Koll\'{a}r,  Alicia J. and Gorshkov,  Alexey V. and Childs,  Andrew M.},
  year = {2025},
  month = mar 
}

@article{Marthaler2017,
  title = {Estimating the Error of an Analog Quantum Simulator by Additional Measurements},
  author = {Schwenk, Iris and Zanker, Sebastian and Reiner, Jan-Michael and Lepp\"akangas, Juha and Marthaler, Michael},
  journal = {Phys. Rev. Lett.},
  volume = {119},
  issue = {24},
  pages = {240502},
  numpages = {6},
  year = {2017},
  month = {Dec},
  publisher = {American Physical Society},
  doi = {10.1103/PhysRevLett.119.240502},
  url = {https://link.aps.org/doi/10.1103/PhysRevLett.119.240502}
}

@article{Poggi2020,
  title = {Quantifying the Sensitivity to Errors in Analog Quantum Simulation},
  volume = {1},
  ISSN = {2691-3399},
  url = {http://dx.doi.org/10.1103/PRXQuantum.1.020308},
  DOI = {10.1103/prxquantum.1.020308},
  number = {2},
  journal = {PRX Quantum},
  publisher = {American Physical Society (APS)},
  author = {Poggi,  Pablo M. and Lysne,  Nathan K. and Kuper,  Kevin W. and Deutsch,  Ivan H. and Jessen,  Poul S.},
  year = {2020},
  month = nov 
}

@article{Hauke_2013,
   title={Spread of Correlations in Long-Range Interacting Quantum Systems},
   volume={111},
   ISSN={1079-7114},
   url={http://dx.doi.org/10.1103/PhysRevLett.111.207202},
   DOI={10.1103/physrevlett.111.207202},
   number={20},
   journal={Physical Review Letters},
   publisher={American Physical Society (APS)},
   author={Hauke, P. and Tagliacozzo, L.},
   year={2013},
   month=nov
}

@article{jan_PhysRevResearch.3.L012022,
  title = {Spreading of correlations and entanglement in the long-range transverse Ising chain},
  volume = {3},
  ISSN = {2643-1564},
  url = {http://dx.doi.org/10.1103/PhysRevResearch.3.L012022},
  DOI = {10.1103/physrevresearch.3.l012022},
  number = {1},
  journal = {Physical Review Research},
  publisher = {American Physical Society (APS)},
  author = {Schneider,  J. T. and Despres,  J. and Thomson,  S. J. and Tagliacozzo,  L. and Sanchez-Palencia,  L.},
  year = {2021},
  month = mar 
}

@article{Cevolani_2016,
  title = {Spreading of correlations in exactly solvable quantum models with long-range interactions in arbitrary dimensions},
  volume = {18},
  ISSN = {1367-2630},
  url = {http://dx.doi.org/10.1088/1367-2630/18/9/093002},
  DOI = {10.1088/1367-2630/18/9/093002},
  number = {9},
  journal = {New Journal of Physics},
  publisher = {IOP Publishing},
  author = {Cevolani,  Lorenzo and Carleo,  Giuseppe and Sanchez-Palencia,  Laurent},
  year = {2016},
  month = sep
}

@article{Eisert_2013,
  title = {Breakdown of Quasilocality in Long-Range Quantum Lattice Models},
  volume = {111},
  ISSN = {1079-7114},
  url = {http://dx.doi.org/10.1103/PhysRevLett.111.260401},
  DOI = {10.1103/physrevlett.111.260401},
  number = {26},
  journal = {Physical Review Letters},
  publisher = {American Physical Society (APS)},
  author = {Eisert,  Jens and van den Worm,  Mauritz and Manmana,  Salvatore R. and Kastner,  Michael},
  year = {2013},
  month = dec 
}

@article{Joshi_2022,
   title={Observing emergent hydrodynamics in a long-range quantum magnet},
   volume={376},
   ISSN={1095-9203},
   url={http://dx.doi.org/10.1126/science.abk2400},
   DOI={10.1126/science.abk2400},
   number={6594},
   journal={Science},
   publisher={American Association for the Advancement of Science (AAAS)},
   author={Joshi, M. K. and Kranzl, F. and Schuckert, A. and Lovas, I. and Maier, C. and Blatt, R. and Knap, M. and Roos, C. F.},
   year={2022},
   month=may, pages={720–724}
}

@article{Neyenhuis2017,
author = {Brian Neyenhuis  and Jiehang Zhang  and Paul W. Hess  and Jacob Smith  and Aaron C. Lee  and Phil Richerme  and Zhe-Xuan Gong  and Alexey V. Gorshkov  and Christopher Monroe },
title = {Observation of prethermalization in long-range interacting spin chains},
journal = {Science Advances},
volume = {3},
number = {8},
pages = {e1700672},
year = {2017},
doi = {10.1126/sciadv.1700672},
abstract = {Many-body interactions could lead to quantum thermalization, but long-range interactions provide an alternative. Although statistical mechanics describes thermal equilibrium states, these states may or may not emerge dynamically for a subsystem of an isolated quantum many-body system. For instance, quantum systems that are near-integrable usually fail to thermalize in an experimentally realistic time scale, and instead relax to quasi-stationary prethermal states that can be described by statistical mechanics, when approximately conserved quantities are included in a generalized Gibbs ensemble (GGE). We experimentally study the relaxation dynamics of a chain of up to 22 spins evolving under a long-range transverse-field Ising Hamiltonian following a sudden quench. For sufficiently long-range interactions, the system relaxes to a new type of prethermal state that retains a strong memory of the initial conditions. However, the prethermal state in this case cannot be described by a standard GGE; it rather arises from an emergent double-well potential felt by the spin excitations. This result shows that prethermalization occurs in a broader context than previously thought, and reveals new challenges for a generic understanding of the thermalization of quantum systems, particularly in the presence of long-range interactions.}}

@article{Hern_ndez_Santana_2015,
  title = {Locality of temperature in spin chains},
  volume = {17},
  ISSN = {1367-2630},
  url = {http://dx.doi.org/10.1088/1367-2630/17/8/085007},
  DOI = {10.1088/1367-2630/17/8/085007},
  number = {8},
  journal = {New Journal of Physics},
  publisher = {IOP Publishing},
  author = {Hern\'{a}ndez-Santana,  Senaida and Riera,  Arnau and Hovhannisyan,  Karen V and Perarnau-Llobet,  Martí and Tagliacozzo,  Luca and Acín,  Antonio},
  year = {2015},
  month = aug
}

@article{Pirvu2010,
  title = {Matrix product operator representations},
  volume = {12},
  ISSN = {1367-2630},
  DOI = {10.1088/1367-2630/12/2/025012},
  number = {2},
  journal = {New Journal of Physics},
  publisher = {IOP Publishing},
  author = {Pirvu,  B and Murg,  V and Cirac,  J I and Verstraete,  F},
  year = {2010}
}

@article{Fishman2022,
  title = {The ITensor Software Library for Tensor Network Calculations},
  url = {http://dx.doi.org/10.21468/SciPostPhysCodeb.4},
  DOI = {10.21468/scipostphyscodeb.4},
  journal = {SciPost Physics Codebases},
  publisher = {Stichting SciPost},
  author = {Fishman,  Matthew and White,  Steven and Stoudenmire,  Edwin},
  year = {2022}
}

@article{MPOTEv_Paeckel_2019,
  title = {Time-evolution methods for matrix-product states},
  volume = {411},
  ISSN = {0003-4916},
  url = {http://dx.doi.org/10.1016/j.aop.2019.167998},
  DOI = {10.1016/j.aop.2019.167998},
  journal = {Annals of Physics},
  publisher = {Elsevier BV},
  author = {Paeckel,  Sebastian and K\"{o}hler,  Thomas and Swoboda,  Andreas and Manmana,  Salvatore R. and Schollw\"{o}ck,  Ulrich and Hubig,  Claudius},
  year = {2019},
  month = dec
}

@misc{Chertkov.2024,
  doi = {10.48550/ARXIV.2410.10794},
  author = {Chertkov,  Eli and Chen,  Yi-Hsiang and Lubasch,  Michael and Hayes,  David and Foss-Feig,  Michael},
  title = {Robustness of near-thermal dynamics on digital quantum computers},
  publisher = {arXiv},
  year = {2024}
}

@article{PRXQuantum.4.020348,
  title = {Locality of Gapped Ground States in Systems with Power-Law-Decaying Interactions},
  author = {Wang, Zhiyuan and Hazzard, Kaden R. A.},
  journal = {PRX Quantum},
  volume = {4},
  issue = {2},
  pages = {020348},
  numpages = {14},
  year = {2023},
  month = {Jun},
  publisher = {American Physical Society},
  doi = {10.1103/PRXQuantum.4.020348},
  url = {https://link.aps.org/doi/10.1103/PRXQuantum.4.020348}
}

@article{kitaevlrchain_PhysRevLett.119.110601,
	title = {Correlation Decay in Fermionic Lattice Systems with Power-Law Interactions at Nonzero Temperature},
	author = {Hern\'andez-Santana, Senaida and Gogolin, Christian and Cirac, J. Ignacio and Ac\'{\i}n, Antonio},
	journal = {Phys. Rev. Lett.},
	volume = {119},
	issue = {11},
	pages = {110601},
	numpages = {6},
	year = {2017},
	month = {Sep},
	publisher = {American Physical Society},
	doi = {10.1103/PhysRevLett.119.110601},
	url = {https://link.aps.org/doi/10.1103/PhysRevLett.119.110601}
}
	\newpage
	\appendix
	\widetext
	\section{Technical proofs}
	We now show the proofs of the relations between decay of correlations (Definition \ref{def:decay-correlation}), stability against local perturbations (Definition \ref{def:stability-local-perturbation}), and local indistinguishability (Definition \ref{def:local-indistinguishability}).

	\subsection{From decay of correlations to LPPL}\label{app:corrtoLPPL}
	To cover the cases in \Cref{def:regimes}, we first establish a slightly more abstract result for any Lieb–Robinson bound that admits the following structure for $t\leq\frac{d(A,B)}{2v}\eqqcolon b$ with $v=2gu$ given disjoint regions $A,B\subset\Z^D$:
		\begin{equation}\label{eq:lieb-robinson-bound-appx}
			\|[O_A(t),O_B]\|\leq \kappa_{LR}\|O_A\|\|O_B\|\frac{g(t)}{(1+d(A,B))^{\alpha}}\,.
		\end{equation}
		for any $O_A \in \cB(\cH_A) $ and $O_B \in \cB(\cH_B) $, $\kappa_{LR}$ a constant depending on $\alpha, g$ and $D$, and with $O_A(t)\coloneqq e^{itH}O_Ae^{-itH}$. Moreover, assume that $g:[0,\infty)\rightarrow[0,\infty)$ is a integrable function satisfying
		\begin{equation*}
			\int_{0}^{b} \frac{e^{-2\pi t/\beta}g(t)}{1 - e^{-2\pi t /\beta}} dt\leq I_c(\beta)
		\end{equation*}
		given a polynomial $I_c(\beta)$ independent  of $b$. In the different cases of Assumption~\ref{def:regimes}, \Cref{cor:decay-cor-lppl-temp}-\ref{cor:decay-cor-lppl-2D-2D+1} directly follow proving the result ``From decay of correlations to LPPL'' for the specific cases.
		\begin{lemma}\label{lem:decay-cor-lppl}
			Let $\Lambda$ be a finite lattice and $A,B\subset \Lambda$. Consider the Hamiltonian $H$ defined as in \eqref{eq:k-form-hamiltonian} and \eqref{eq:def-long-range}, and for any $V_B=V_B^\dagger$ supported locally on $B\subset \Lambda$, define $H+sV_B$ for any  $s\in[0,1]$. Assume that for a given $\alpha >0$, $H+sV_B$ satisfies decay of correlations (\ref{def:decay-correlation}) and a Lieb-Robinson bound of the form \eqref{eq:lieb-robinson-bound-appx} uniformly in all $s\in[0,1]$. Then,
			\begin{equation*}
				\big|\tr[O_A \rho_{\beta}[H]] - \tr[O_A \rho_{\beta}[H+V_B]] \, \big| \leq \kappa \left(\beta\right) \|O_A\| \|V_B\| \frac{|A| |B| e^{(|A|+|B|)/k}}{(1+d(A, B))^{\alpha}}\,,
			\end{equation*}
			where $O_A\in\cB(\cH)$ and $\kappa \left(\beta\right)\geq 0$ a constant depending in particular polynomial on $\beta$ but also on the constants of the Lieb-Robinson bound.
	\end{lemma}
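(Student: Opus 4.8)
The plan is to split into two regimes according to the distance $d(A,B)$, exactly as the authors sketch in the main text, and treat each with a different bound on the interpolating matrix exponential. Write $\rho_\beta[H+sV_B]$ for the Gibbs state of $H(s) := H + sV_B$, and set $f(s) := \tr[O_A\,\rho_\beta[H+sV_B]]$, so that the quantity of interest is $|f(1)-f(0)| = |\int_0^1 f'(s)\,ds|$. Using the quantum belief propagation differential equation \eqref{eq:diff_exponential} and the normalization $\tr[\rho_\beta[H(s)]]=1$, one computes
\begin{equation*}
    f'(s) = -\tfrac{\beta}{2}\,\tr\!\bigl[\rho_\beta[H(s)]\,\{O_A,\,\Phi_\beta^{H(s)}(V_B)\}\bigr] + \tfrac{\beta}{2}\,\tr[\rho_\beta[H(s)]O_A]\cdot 2\tr[\rho_\beta[H(s)]\Phi_\beta^{H(s)}(V_B)],
\end{equation*}
which, after symmetrizing the anticommutator, is precisely $-\tfrac{\beta}{2}$ times a sum of two covariances $\mathrm{Cov}_{\rho_\beta[H(s)]}(O_A,\Phi_\beta^{H(s)}(V_B))$ of the form in \eqref{eq:def-correlation-function} (one for each ordering). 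For the \textbf{short-distance regime}, i.e.\ $d(A,B)$ below some fixed threshold, I would simply bound $|f'(s)| \le \beta\,\|O_A\|\,\|\Phi_\beta^{H(s)}(V_B)\| \le \beta\,\|O_A\|\,\|V_B\|$ using \eqref{eq:normPhi}, and then absorb the resulting constant into $c$ by noting that $(1+d(A,B))^{-\alpha}$ is bounded below by a positive constant on this finite range (and $|A||B|e^{(|A|+|B|)/k}\ge 1$). This disposes of the easy case.

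For the \textbf{long-distance regime} the point is that $\Phi_\beta^{H(s)}(V_B)$ is an operator that is \emph{approximately} supported near $B$, so its covariance with $O_A$ decays. The obstacle is that $\Phi_\beta^{H(s)}(V_B) = \int f_\beta(t)\,e^{-itH(s)}V_B e^{itH(s)}\,dt$ is genuinely global; one must control the ``tail'' of the time integral. The plan here follows the Kliesch et al.\ \cite{Kliesch.2014} strategy: express the covariance $\mathrm{Cov}_{\rho_\beta[H(s)]}(O_A,\Phi_\beta^{H(s)}(V_B))$ via Duhamel/integral representations as a generalized covariance, and estimate it by (i) applying the decay-of-correlations hypothesis \eqref{def:decay-correlation} directly to the piece of $\Phi_\beta^{H(s)}(V_B)$ coming from $|t|$ small — where $e^{-itH(s)}V_B e^{itH(s)}$ is still well-localized up to Lieb-Robinson leakage — and (ii) for $|t|$ large, using the exponential decay of $f_\beta(t)$ (it decays like $e^{-\pi|t|/\beta}$ for large $t$) together with the crude bound $\|e^{-itH(s)}V_B e^{itH(s)}\| = \|V_B\|$ to make the tail contribution negligible. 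Concretely, one splits $\Phi_\beta^{H(s)}(V_B) = \int_{|t|\le T} + \int_{|t|>T}$; the second term contributes at most $\|V_B\|\cdot\int_{|t|>T}f_\beta(t)\,dt \lesssim \|V_B\|\,e^{-\pi T/\beta}$, while the first, by the Lieb-Robinson bound \eqref{eq:def-lieb-robinson-bound-prelim}, is within $\kappa_{LR}\|V_B\| \, p(T)/(d(A,B)-vT)^\alpha$ of being exactly supported in a region not reaching $A$, so that its covariance with $O_A$ is controlled by the decay-of-correlations assumption applied on an enlarged region of size $|B| + O(vT)$. Choosing $T \asymp \log(d(A,B))$ (or $T$ a small fraction of $d(A,B)/v$) balances the exponentially small tail against the polynomially decaying leakage and yields, after the convolution/uniform-summability estimates for $\alpha>1$, a bound of the form $c\,\|O_A\|\,\|V_B\|\,|A|\,(|B|+\mathrm{polylog})\,e^{(|A|+|B|)/k}/(1+d(A,B))^\alpha$, absorbing the polylog and the remaining constants (polynomial in $\beta$) into $c$.

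The \textbf{main obstacle} I anticipate is the bookkeeping in the long-distance case: one must ensure that enlarging the support of the time-evolved $V_B$ by the Lieb-Robinson light cone does not spoil the $e^{|B|/k}$ factor (it only adds $e^{O(T)/k}$, harmless for $T$ logarithmic) and, crucially, that the decay-of-correlations bound, which is stated for fixed regions $A,B$, is applied with the correct effective region and that the leftover polynomial-in-$t$ factor $p(T)$ from the Lieb-Robinson bound integrates against $f_\beta(t)$ to something finite and polynomial in $\beta$. Integrating $f'(s)$ over $s\in[0,1]$ — legitimate because the hypotheses hold uniformly in $s$ — and combining the two regimes then gives the claimed inequality. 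One final subtlety worth flagging: the covariance in Def.~\ref{def:decay-correlation} is stated for the ordering $\tr[\rho O_A O_B]$; to handle the symmetrized anticommutator one either notes the bound is symmetric in the two orderings (taking adjoints, using self-adjointness of $\rho$) or applies the definition to $O_A$ and to $O_A^\dagger$ separately, which changes nothing up to the constant.
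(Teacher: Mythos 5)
Your reduction to $\int_0^1 f'(s)\,ds$ via quantum belief propagation and your treatment of the short-distance regime coincide with the paper's Case I, and the remark about operator orderings is immaterial. The gap is in the long-distance regime. You propose to cut the time integral in $\Phi_\beta^{H(s)}(V_B)$ at $|t|\le T$, localize the truncated part onto the light-cone-enlarged region $B'\supset B$ with $d(B,(B')^c)\approx vT$, and then apply Definition~\ref{def:decay-correlation} to the pair $(A,B')$. But the clustering hypothesis then contributes the factor $|B'|\,e^{(|A|+|B'|)/k}$, not $|B|\,e^{(|A|+|B|)/k}$, and the enlargement is not additive of size $O(vT)$ except in $D=1$: in general $|B'|$ can be of order $|B|(1+vT)^{D}$. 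Moreover, the tail $\int_{|t|>T}f_\beta(t)\,dt\lesssim e^{-\pi T/\beta}$ must itself be $\lesssim (1+d(A,B))^{-\alpha}$, which forces $T\gtrsim \alpha\beta\log(1+d(A,B))$. With that choice, $e^{|B'|/k}$ grows at least like $(1+d(A,B))^{c\,\beta v/k}$ in one dimension and quasi-polynomially (of order $e^{|B|(\beta\log d(A,B))^{D}/k}$) for $D\ge 2$, so the claimed decay exponent $\alpha$ with a constant merely polynomial in $\beta$ is not recovered: the exponent itself is degraded in a $\beta$-dependent way, or destroyed outright. The alternative $T=\eta\, d(A,B)/v$ makes the tail exponentially small but makes $e^{|B'|/k}$ exponential in $d(A,B)^{D}$. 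No choice of $T$ balances, so the light-cone-localization step, which is the core of your long-distance argument, fails; the Lieb-Robinson leakage bookkeeping you flag as the "main obstacle" is not the real problem.

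The paper avoids ever applying clustering to an enlarged region. It uses the exact identity of \cite{Kliesch.2014}, writing the difference of expectation values as $\beta\iint_0^1 \mathrm{Cov}^{\tau}_{\rho_\beta[H(s)]}(O_A,V_B)\,ds\,d\tau$ with the \emph{original} $V_B$, and then the comparison bound of \cite{Kuwahara.2022}, $|\mathrm{Cov}^{\tau}-\mathrm{Cov}|\le \tfrac{2}{\beta}\int dt\,\tfrac{e^{-2\pi|t|/\beta}}{1-e^{-2\pi|t|/\beta}}\,\|[O_A(t),V_B]\|$, in which the Lieb-Robinson bound \eqref{eq:def-lieb-robinson-bound-prelim} is inserted for $|t|\le \eta\, d(A,B)/v$ and the trivial bound $2\|O_A\|\|V_B\|$ otherwise; the exponentially decaying kernel then produces the full $(1+d(A,B))^{-\alpha}$ decay with $\beta$-polynomial constants, and Definition~\ref{def:decay-correlation} is invoked only for the ordinary covariance of $O_A$ and $V_B$ on the original regions $A,B$. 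Note also that the seemingly natural repair of your route, moving the time evolution onto $O_A$ using invariance of $\rho_\beta[H(s)]$, does not work either, since $\|O_A(-t)-O_A\|$ does not decay with $d(A,B)$; some comparison identity of the Kubo--Mori/generalized-covariance type is genuinely needed here.
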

	We start by listing the corollaries that follow, before presenting a detailed proof of \Cref{lem:decay-cor-lppl}.
	\begin{corollary}\label{cor:decay-cor-lppl-temp}
			Let $\Lambda$ be a finite lattice and $A,B\subset \Lambda$. Consider the Hamiltonian $H$ defined as in \eqref{eq:k-form-hamiltonian} and \eqref{eq:def-long-range}, and for any $V_B=V_B^\dagger$ supported locally on $B\subset \Lambda$, define $H+sV_B$ for any  $s\in[0,1]$. Assume that $\beta < \beta^*= \frac{1}{8 u g k}$ and $\alpha >D$. Then,
			\begin{equation*}
				\big|\tr[O_A \rho_{\beta}[H]] - \tr[O_A \rho_{\beta}[H+V_B]] \, \big| \leq \kappa \left(\beta\right) \|O_A\| \|V_B\| \frac{|A| |B| e^{(|A|+|B|)/k}}{(1+d(A, B))^{\alpha}}\,,
			\end{equation*}
			where $O_A\in\cB(\cH)$ and $\kappa \left(\beta\right)\geq 0$ a constant depending in particular polynomial on $\beta$.
	\end{corollary}
	\begin{proof}
			Due to the assumption, $\beta < \beta^*= \frac{1}{8 u g k}$ and $\alpha>D$, the work \cite{hastings2006gapdecaycorrelations} proved the Lieb-Robinson bound
			\begin{equation}\label{eq:def-lieb-robinson-bound-prelim2}
				\|[O_A(t),O_B]\|\leq \kappa_{LR}\|O_A\|\|O_B\|\frac{e^{v|t|}-1}{(1+d(A,B))^{\alpha}},
			\end{equation}
			for any $O_A \in \cB(\cH_A) $ and $O_B \in \cB(\cH_B) $, with $v= 2gu$, and $\kappa_{LR}$ a constant depending on $\alpha, g$ and $D$, regions $A,B \subset \mathbb{Z}^D$ and with $O_A(t)\coloneqq e^{itH}O_Ae^{-itH}$. Moreover, the recent result in \cite[Thm.~2]{kim2024arealawslongrange} proves decay of correlations (see Def.~\ref{def:decay-correlation}). To apply \Cref{lem:decay-cor-lppl}, we need to bound the integral
			\begin{equation*}
				\int_{0}^{b} \frac{e^{-2\pi t/\beta}g(t)}{1 - e^{-2\pi t /\beta}} dt\leq I_c(\beta)
			\end{equation*}
			for $g(t)=e^{vt}-1$. In a first step, we use the inequality $\frac{1}{1 - e^{-x}} \leq 1 + \frac{1}{x}$, which shows
			\begin{align*}
				\int_{0}^{b} \frac{e^{-2\pi t/\beta}g(t)}{1 - e^{-2\pi t /\beta}} dt &\leq \int_0^{b} e^{-2\pi t / \beta} \left( 1+ \frac{\beta}{2 \pi t }\right)  \left(e^{v t}-1\right) dt
				= \int_0^{b} e^{-2\pi t / \beta} \left(e^{v t}-1\right) dt+ \int_0^{b} e^{-2\pi t / \beta} \left(e^{v t}-1\right) \frac{\beta}{2 \pi t }dt \,.
			\end{align*}
			Next we drop the negative term in the first integral and apply $e^{vt}-1 = vt \int_{0}^{1} e^{s v t} d s \leq vt e^{vt}$, showing
			\begin{align}\label{eq:upper-bound-lrb-integral}
				\int_{0}^{b} \frac{e^{-2\pi t/\beta}g(t)}{1 - e^{-2\pi t /\beta}} dt &\leq \left(1+ \frac{v\beta}{2\pi} \right)\!\! \int_0^{b}\!e^{-\left( 2\pi  / \beta- v\right)t} dt = \left(1+ \frac{v\beta}{2\pi} \right) \frac{\beta}{2\pi -v\beta} \left(1-e^{-\left(2\pi/ \beta-v\right)b} \right) \leq \beta \left(1+ \frac{v\beta}{2\pi} \right) ,
			\end{align}
			where in the last inequality we assumed that we are in the high-temperature regime $\beta < \frac{1}{8 u g k}$ so $2\pi- v\beta \ge 1$. Since the above bound is true for any $b>0$, it completes the proof with $b=\frac{d(A,B)}{2v}$ assumed in Eq.~\eqref{eq:lieb-robinson-bound-appx}.
	\end{proof}
	
	\begin{corollary}\label{cor:decay-cor-lppl-2D+1}
			Let $\Lambda$ be a finite lattice and $A,B\subset \Lambda$. Consider the Hamiltonian $H$ defined as in \eqref{eq:k-form-hamiltonian} and \eqref{eq:def-long-range}, and for any $V_B=V_B^\dagger$ supported locally on $B\subset \Lambda$, define $H+sV_B$ for any  $s\in[0,1]$. Assume that $\alpha >2D+1$. Then, 
			\begin{equation*}
				\big|\tr[O_A \rho_{\beta}[H]] - \tr[O_A \rho_{\beta}[H+V_B]] \, \big| \leq \kappa \left(\beta\right) \|O_A\| \|V_B\| \frac{|A| |B| e^{(|A|+|B|)/k}}{(1+d(A, B))^{\alpha}}\,,
			\end{equation*}
			where $O_A\in\cB(\cH)$ and $\kappa \left(\beta\right)\geq 0$ a constant depending in particular polynomial on $\beta$.
	\end{corollary}
	\begin{proof}
		Due to the assumption $\alpha>2D+1$, the paper \cite{Kuwahara.2020lrb} proves that
		\begin{equation*}
			\|[O_A(t),O_B]\|\leq \kappa_{LR}\|O_A\|\|O_B\|\frac{p(|t|)}{(d(A,B)\!-\!v|t|)^{\alpha}},
		\end{equation*}
		for $p(t)\coloneqq|t|^{2D+1}$, any $O_A \in \cB(\cH_A) $ and $O_B \in \cB(\cH_B)$, and $\kappa_{LR}$ a constant depending on $\alpha, g$ and $D$, regions $A,B \subset \mathbb{Z}^D$ and with $O_A(t)\coloneqq e^{itH}O_Ae^{-itH}$. Moreover, the recent result in \cite[Thm.~2]{kim2024arealawslongrange} proves decay of correlations (see Def.~\ref{def:decay-correlation}). To apply \Cref{lem:decay-cor-lppl}, we use $b=\frac{d(A,B)}{2v}$ so that 
			\begin{equation*}
				\|[O_A(t),O_B]\|\leq \kappa_{LR}\|O_A\|\|O_B\|\frac{p(|t|)}{(\frac{1}{2}d(A,B)\!)^{\alpha}}\leq\kappa_{LR}4^{\alpha}\|O_A\|\|O_B\|\frac{p(|t|)}{(1+d(A,B)\!)^{\alpha}},
			\end{equation*}
			where we used $x\geq \frac{1}{2}(x+1)$ for $x\geq1$ in the last inequality. Next, we bound the following integral
			\begin{equation*}
				\int_{0}^{b} \frac{e^{-2\pi t/\beta}g(t)}{1 - e^{-2\pi t /\beta}} dt\leq I_c(\beta)
			\end{equation*}
			for $g(t)=p(t)$. In a first step, we use the inequality $\frac{1}{1 - e^{-x}} \leq 1 + \frac{1}{x}$, which shows
		\begin{equation}\label{eq:int-parts_TI}
			\int_{0}^{b} \frac{e^{-\mu t} t^{\ell}}{1 - e^{-\mu t}} dt \leq \int_{0}^{b} e^{-\mu t} \Bigl(t + \frac{1}{\mu}\Bigr) t^{\ell - 1} dt,
		\end{equation}
		for $\ell = 2D+1$ and $\mu = \frac{2\pi}{\beta}$. For the first integral, by iteratively applying integration by parts, we obtain
		\begin{equation}\label{eq:int-parts_iteration}
			\begin{aligned}
				\int_{0}^{b} e^{-\mu t} t^{\ell} dt &= \frac{-1}{\mu} \Bigl(e^{-\mu b} b^{\ell} - \ell \int_{0}^{b} e^{-\mu t} t^{\ell - 1} dt \Bigr) \\
				&= -e^{-\mu b} \sum_{j=1}^{\ell} \frac{1}{\mu^j} \frac{\ell! b^{\ell - j + 1}}{(\ell - j + 1)!} + \frac{\ell!}{\mu^{\ell}} \int_{0}^{b} e^{-\mu t} dt \\
				&= -e^{-\mu b} \sum_{j=1}^{\ell + 1} \frac{1}{\mu^j} \frac{\ell! b^{\ell - j + 1}}{(\ell - j + 1)!} + \frac{\ell!}{\mu^{\ell + 1}} \\
				&\leq \frac{\ell!}{\mu^{\ell + 1}}
			\end{aligned}
		\end{equation}
		which holds for all $b>0$ and completes the proof.
	\end{proof}
	\begin{corollary}\label{cor:decay-cor-lppl-2D-2D+1}
			Let $\Lambda$ be a finite lattice and $A,B\subset \Lambda$. Consider the Hamiltonian $H$ defined as in \eqref{eq:k-form-hamiltonian} and \eqref{eq:def-long-range}, and for any $V_B=V_B^\dagger$ supported locally on $B\subset \Lambda$, define $H+sV_B$ for any  $s\in[0,1]$. Assume that $2D<\alpha\leq 2D+1$. Then,
			\begin{equation*}
				\big|\tr[O_A \rho_{\beta}[H]] - \tr[O_A \rho_{\beta}[H+V_B]] \, \big| \leq \kappa \left(\beta\right) \|O_A\| \|V_B\| \frac{|A| |B| e^{(|A|+|B|)/k}}{(1+d(A, B))^{\alpha-D}}\,,
			\end{equation*}
			where $O_A\in\cB(\cH)$ and $\kappa \left(\beta\right)\geq 0$ a constant depending in a particular polynomial on $\beta$.
		\end{corollary}
		\begin{proof}
			Due to the assumption $2D < \alpha \leq 2D+1$, the paper \cite{Tran2021} provides the following Lieb-Robinson bound 
			\begin{align*}
				\|[O_A(t),O_B]\|\leq& \|O_A\|\|O_B\|\Bigl[ \kappa_{LR,2} \left(\frac{t}{d(A,B)^{\alpha-2D+\epsilon}}\right)^{\frac{\alpha-D}{\alpha-2D} - \frac{\epsilon}{2}}+ \kappa_{LR,3} \left(\frac{t}{d(A,B)^{\alpha-D}}\right) \Bigr],
			\end{align*}
			for an arbitrarily small $\epsilon>0$. For $\epsilon\leq2$ 
			\begin{equation*}
				\begin{aligned}
					(\alpha-2D+\epsilon)\Bigl(\frac{\alpha-D}{\alpha-2D}-\frac{\epsilon}{2}\Bigr)-(\alpha-D)=\epsilon\Bigl(\frac{\alpha-D}{\alpha-2D}-\frac{\alpha-2D}{2}-\frac{\epsilon}{2}\Bigr)=\frac{\epsilon}{2}\Bigl(\frac{\alpha}{\alpha-2D}-\epsilon\Bigr)\geq\frac{\epsilon}{2}\Bigl(\frac{2D}{1}-\epsilon\Bigr)\geq0
				\end{aligned}
			\end{equation*}
			so that 
			\begin{align*}
				\|[O_A(t),O_B]\
				\leq& 2^{\alpha-D}\max\{\kappa_{LR,2},\kappa_{LR,3}\}\|O_A\|\|O_B\|  \biggl(\frac{t^{\frac{\alpha-D}{\alpha-2D} - \frac{\epsilon}{2}}+t}{(1+d(A,B))^{\alpha-D}}\biggr),
			\end{align*}
			where we used $x\geq \frac{1}{2}(x+1)$ for $x\geq1$ in the last inequality. Next, we choose
			\begin{equation*}
				\epsilon = 2\Bigl(\frac{D}{\alpha-2D}-\Bigl\lfloor\frac{D}{\alpha-2D}\Bigr\rfloor\Bigr)\in(0,2)
			\end{equation*}
			so that with $\frac{\alpha-D}{\alpha-2D}-\frac{\epsilon}{2}=1-\frac{\epsilon}{2}+\frac{D}{\alpha-2D}$
			\begin{align*}
				\|[O_A(t),O_B]\|\leq& 2^{\alpha-D}\max\{\kappa_{LR,2},\kappa_{LR,3}\}\|O_A\|\|O_B\|  \biggl(\frac{t^{1+\lfloor\frac{\alpha-D}{\alpha-2D}\rfloor}+t}{(1+d(A,B))^{\alpha-D}}\biggr).
			\end{align*}
			To apply \Cref{lem:decay-cor-lppl} for the shifted long-range decay $\alpha-D$, we required to bound the following integral
			\begin{equation*}
				\int_{0}^{b} \frac{e^{-2\pi t/\beta}g(t)}{1 - e^{-2\pi t /\beta}} dt\leq I_c(\beta)\qquad\text{for}\qquad g(t)\coloneqq t^{1+\lfloor\frac{D}{\alpha-2D} \rfloor}+t\,.
			\end{equation*}
			Then, Eq.~(\ref{eq:int-parts_TI}-\ref{eq:int-parts_iteration}) show
			\begin{equation*}
				I_c(\beta)=\frac{2!}{\mu^3}+\frac{1!}{\mu}+\frac{\ell!}{\mu^{\ell+1}}+\frac{\ell!}{\mu^\ell}
			\end{equation*}
			for $\ell = \lfloor\frac{D}{\alpha-2D} \rfloor$, $\mu = \frac{2\pi}{\beta}$, and for all $b>0$. This finishes the proof of the statement.
	\end{proof}
	
	\begin{proof}[Proof of \Cref{lem:decay-cor-lppl}]
		The proof splits into two cases:\\[0.1cm]
		\underline{\textbf{Case I: $d(O_A,V_B)=0$}}. 
		We start with the case~$A\cap B\neq \emptyset$. For that we combine the proof idea used in \cite[Thm.~1]{Kliesch.2014} with the quantum belief propagation presented in \cite[Sec.~10]{capel2023lppl}. Denote $H(s)\coloneqq H+s V_{B}$ and $Z_s(\beta)\coloneqq \tr[e^{-\beta H(s)}]$. Using  the fundamental theorem of calculus, \eqref{eq:diff_exponential} for the derivative of $e^{-\beta H(s)}$, and the chain rule in the derivative of $\rho_\beta[H(s)]$,  then 
		\begin{equation*}
			\begin{aligned}
				\Bigl|&\tr[\rho_{\beta}[H]O_A]-\tr \big[\rho_{\beta}[H+V_B]O_A \big]\Bigr|\\
				&\hspace{2cm}=\Bigl|\tr \Big[O_A\int_0^1\frac{d}{ds}\rho_{\beta}[H(s)]ds \Big]\Bigr|\\
				&\hspace{2cm}=\Bigl|\tr \Big[O_A\int_0^1\Bigl(\frac{1}{Z_{s}(\beta)}\frac{d}{ds}e^{-\beta H(s)}-\rho_{\beta}[H(s)]\frac{1}{Z_{s}(\beta)}\tr \Big[\frac{d}{ds}e^{-\beta H(s)}\Big]\Bigr)ds\Big]\Bigr|\\
				&\hspace{2cm}=\frac{\beta}{2}\Bigl|\tr \Big[O_A\int_0^1\Bigl(\bigl\{\rho_{\beta}[H(s)],\Phi^{H(s)}_{\beta}(V_B)\bigr\}-2\rho_{\beta}[H(s)]\tr[\rho_{\beta}[H(s)]\Phi^{H(s)}_{\beta}(V_B)]\Bigr)ds \Big]\Bigr|\\
				&\hspace{2cm}\leq \frac{8}{\pi}\|O_A\|\,\|V_B\|\int_{0}^{\infty}\log(\frac{e^{\pi t/\beta}+1}{e^{\pi t/\beta}-1})dt\, ,
			\end{aligned}
		\end{equation*}
		where the explicit expression for $\Phi^{H(s)}_{\beta}(V_B)$ is given in \eqref{eq:Phi} and the last bound follows from \eqref{eq:normPhi}. Next, we rewrite the integral by substitution with $x=e^{-\pi t/\beta}$
		\begin{equation*}
			\begin{aligned}
				\int_{0}^{\infty}\log(\frac{e^{\pi t/\beta}+1}{e^{\pi t/\beta}-1})dt
				&=-\frac{\beta}{\pi}\int_{0}^{\infty}\log(\frac{1+e^{-\pi t/\beta}}{1-e^{-\pi t/\beta}})\frac{-\frac{\pi}{\beta}e^{-\pi t/\beta}}{e^{-\pi t/\beta}}dt\\
				&=-\frac{\beta}{\pi}\int_{1}^{0}\log(\frac{1+x}{1-x})\frac{1}{x}dx\\
				&=\frac{\beta}{\pi}\int_{0}^{1}\biggl(\frac{\log(1+x)}{x}-\frac{\log(1-x)}{x}\biggr)dx\\
				&=\frac{\beta}{\pi}\biggl(\int_{0}^{-1}\frac{\log(1-x)}{x}dx-\int_{0}^{1}\frac{\log(1-x)}{x}dx\biggr)\\
				&=\frac{\beta}{\pi}\bigl(-\mathrm{Li}_2(-1)+\mathrm{Li}_2(1)\bigr) \, ,
			\end{aligned}
		\end{equation*}
		where $\mathrm{Li}_2$ is the dilogarithm (or Spence's) function, which admits a series representation for $|x|\leq 1$ by
		$
		\mathrm{Li}_2(x)=\sum_{k=1}^\infty\frac{x^k}{k^2} \,
		$,
		so that
		\begin{equation*}
			\begin{aligned}
				\int_{0}^{\infty}\log(\frac{e^{\pi t/\beta}+1}{e^{\pi t/\beta}-1})dt=\frac{\beta}{\pi}\frac{3\pi^2}{12}=\frac{\beta\pi}{4},
			\end{aligned}
		\end{equation*}
		because of 
		\begin{equation*}
			-\mathrm{Li}_2(-1)=\sum_{k=1}^\infty\frac{(-1)^{k+1}}{k^2}=\frac{\pi^2}{6}-2\sum_{k=1}^\infty\frac{1}{(2k)^2}=\frac{\pi^2}{12} \, .
		\end{equation*}
		Together this finishes the first part of the proof 
		\begin{equation}\label{eq:simple-upper-bound-case0}
			\begin{aligned}
				\Bigl|&\tr[\rho_{\beta}[H]O_A]-\tr[\rho_{\beta}[H+V_B]O_A]\Bigr|\leq 2\beta \|O_A\|\,\|V_B\|\,.
			\end{aligned}
		\end{equation}

		\noindent \underline{\textbf{Case II: $d(O_A,V_B)\geq1$}}.             Similarly, we apply the identity \cite[Thm.~1]{Kliesch.2014}
		\begin{equation}\label{eq:proof-integral-equation}
			\Bigl|\tr[\rho_{\beta}[H]O_A]-\tr[\rho_{\beta}[H+V_B]O_A]\Bigr|=\Bigl|\beta\iint_0^1\mathrm{Cov}_{\rho_\beta[H(s)]}^\tau(O_A,V_B)ds d\tau\Bigr|,
		\end{equation}
		
		where
		
		\begin{equation}\mathrm{Cov}_\rho^{\tau}(O_A,O_B) = \mathrm{Tr}\!\left( \rho^{\tau} O_A \rho^{1-\tau} O_B \right) - \mathrm{Tr}\!\left( \rho O_A \right)\mathrm{Tr}\!\left( \rho O_B \right),
		\end{equation}
		
		and then relate the generalized covariance to the standard covariance. Here, we use the upper bound proven in \cite[Eq.~C.22]{Kuwahara.2022}, so that for all $\tau \in [0,1]$ (see also \cite{Hauke_2016}) 
		\begin{equation*}
			\left|\mathrm{Cov}_{\rho_\beta[H(s)]}^\tau(O_A, V_B) - \mathrm{Cov}_{\rho_\beta[H(s)]}(O_A, V_B)\right| \leq \frac{2}{\beta} \int_{-\infty}^{\infty} \frac{e^{-2\pi|t|/\beta}}{1 - e^{-2\pi|t|/\beta}} \|[O_A(t), V_B]\| dt,
		\end{equation*}
		where $O_A(t) \coloneqq e^{itH(s)} O_A e^{-itH(s)}$. Next, the long-range Lieb-Robinson bound assumed in the statement (see Eq.~(\ref{eq:def-lieb-robinson-bound-prelim})) for $b= \frac{d(A,B)}{2v}$ with $v=2gu$ shows that
		\begin{equation}\label{eq:proof-apply-lrb}
			\begin{aligned}
				|\mathrm{Cov}_{\rho_\beta[H(s)]}^\tau&(O_A, V_B) - \mathrm{Cov}_{\rho_\beta[H(s)]}(O_A, V_B)|/(\|O_A\| \|V_B\|) \\
				&\leq \frac{2}{\beta}\int_{-\infty}^{\infty} \frac{e^{-2\pi|t|/\beta}}{1 - e^{-2\pi|t|/\beta}} 
				\begin{cases}
					\kappa_{LR} |A| |B|\frac{g(t)}{(1+d(A,B))^{\alpha}}, & \text{if } b \geq |t|, \\
					2, & \text{otherwise}
				\end{cases} dt \\
				&\leq \frac{4}{\beta}\biggl( \frac{\kappa_{LR}|A| |B|}{\bigl(1+d(A,B)\bigr)^{\alpha}} \underbrace{\int_{0}^{b} \frac{e^{-2\pi t/\beta}g(t)}{1 - e^{-2\pi t /\beta}} dt}_{\mathbf{I.1}} + 2\underbrace{\int_{b}^{\infty} \frac{e^{-2\pi t /\beta}}{1 - e^{-2\pi t /\beta}} dt}_{\mathbf{I.2}} \biggr).
			\end{aligned}
		\end{equation}
		By assumption, we can directly apply $\mathbf{I.1}\leq I_c$ and continue with substituting $x = e^{-2\pi t/ \beta}$ in $\mathbf{I.2}$:
		\begin{equation*}
			\frac{2\pi}{\beta} \int_{b}^{\infty} \frac{e^{-2\pi t /\beta}}{1 - e^{-2\pi t /\beta}} dt= -\int_{e^{-2\pi b /\beta}}^{0} \frac{1}{1 - x} dx= \ln\biggl(1 + \frac{e^{-2\pi b /\beta}}{1 - e^{-2\pi b /\beta}}\biggr)\,.
		\end{equation*}
		Then, 
		\begin{equation}\label{eq:upper-bound-constant-integral_TI}
			\begin{aligned}
				\int_{b}^{\infty} \frac{e^{-2\pi t /\beta}}{1 - e^{-2\pi t /\beta}} dt&\leq \frac{\beta}{2\pi}\,\frac{e^{-2\pi b /\beta}}{1-e^{-2\pi b /\beta}}=\frac{\beta}{2\pi}e^{-2\pi b /\beta}\Bigl(1+\frac{1}{e^{2\pi b /\beta}-1}\Bigr)\leq\frac{\beta}{2\pi}e^{-2\pi b /\beta}\Bigl(1+\frac{\beta}{2\pi b}\Bigr)
			\end{aligned}
		\end{equation}
		converging exponentially with the distance $d(A,B)$ ($b=\tilde{b}d(A,B)$) and temperature $\beta^{-1}$. Next, we upper bound Eq.~(\ref{eq:proof-apply-lrb}) with the help of the bounds on $\mathbf{I.1}$ and $\mathbf{I.2}$, which shows
		\begin{equation}
			\begin{aligned}
				|\mathrm{Cov}_{\rho_\beta[H(s)]}^\tau&(O_A, V_B) - \mathrm{Cov}_{\rho_\beta[H(s)]}(O_A, V_B)|/(\|O_A\| \|V_B\|) \\
				&\leq \frac{4}{\beta}\biggl( \frac{\kappa_{LR}|A| |B|}{\bigl(1+d(A,B)\bigr)^{\alpha}} I_c(\beta) +  \frac{\beta}{\pi} e^{-\frac{2\pi d(A,B)}{\beta v}} \left( 1+\frac{v\beta}{2\pi d(A,B) }\right)  \biggr)\\
				&\leq \frac{c_3 \left(\beta\right)}{\beta} |A||B|  \frac{1}{\left(1+d(A,B)\right)^\alpha} ,
				\label{eq:proof-final-bound-cov}
			\end{aligned}
		\end{equation}
		where $c_3 \left(\beta\right)$ is polynomial in $\beta$. Finally, perturbing the long-range Hamiltonian locally by $V$ supported on $B \subset \Lambda$ for $d(A,B)\geq1$ can be upper bounded using \eqref{eq:proof-integral-equation} combined with the bound \eqref{eq:proof-final-bound-cov} and the assumption of the decay of correlations (see Def.~\ref{def:decay-correlation}):
		\begin{equation}
			\begin{aligned}
				|\tr[O_A \rho_{\beta[H]}] - \tr[O_A \rho_{\beta[H+V]}]| &\leq \Bigl|\beta\iint_0^1\mathrm{Cov}_{\rho_\beta[H(s)]}^\tau(O_A,V_B)ds d\tau\Bigr|\\
				&\leq c_3 \left(\beta\right)\|O_A\| \|V_B\| |A||B| \frac{1}{\left(1+d(A,B)\right)^\alpha}+ \Bigl|\beta\int_0^1\mathrm{Cov}_{\rho_\beta[H(s)]}(O_A,V_B)ds\Bigr|\\
				&\leq c_3 \left(\beta\right) \|O_A\| \|V_B\||A||B| \frac{1}{\left(1+d(A,B)\right)^\alpha}+ \beta C \|O_A\| \|V_B\| \frac{|A|\,|B| e^{(|A|+|B|)/k}}{(1+d(A, B))^\alpha}\\
				&\leq \kappa \left(\beta\right) \|O_A\| \|V_B\| \frac{|A| |B| e^{(|A|+|B|)/k}}{(1+d(A, B))^{\alpha}} \, ,
			\end{aligned}
		\end{equation}
		where $\kappa (\beta)$ is a polynomial in $\beta$ and also in the constants of the Lieb-Robinson bound. Notice that we did not assume the temperatures to be high at this point but only decay of correlations and the Lieb-Robinson bound from Eq.~\eqref{eq:lieb-robinson-bound-appx}.
	\end{proof}
	
	\subsection{From LPPL to local indistinguishability} \label{app:lppltoloc}
	
	\begin{lemma}\label{lem:lppl-local-indist}
		Assume that the Hamiltonians $H$ defined in Equation (\ref{eq:def-long-range}) for a given $\alpha> D$ satisfies the LPPL property (\ref{def:stability-local-perturbation}) for all $V\in\{H_E\,|\,E\subset\Lambda\}$. Then, for all $\delta\in(0,\alpha-D)$
		\begin{equation*}
			|\tr[O_A \rho_\beta[H] b] - \tr[O_A \rho_\beta[H_B]]| \leq \kappa' \zeta(1+\delta) \|O_A\| \frac{|A|^2 e^{|A|/k}}{(1+d(A, C))^{\alpha-D-\delta}},
		\end{equation*}
		where $A \subset B \subset \Lambda$, $H_B = \sum_{Z \subset B } h_Z$, and $\zeta$ the Riemann zeta function, e.g.~$\zeta(2)=\frac{\pi^2}{6}$.
	\end{lemma}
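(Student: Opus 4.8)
The plan is to interpolate between the Gibbs states of $H$ and of $H_B$ by deleting, one at a time, the Hamiltonian terms that couple $B$ to its complement $C\coloneqq\Lambda\setminus B$, namely the terms $h_Z$ with $Z\not\subset B$; this mirrors the proof of Lem.~\ref{lem:SLT}, with the perturbation now being $H-H_B=\sum_{Z:\,Z\not\subset B}h_Z$. Enumerating these crossing terms and removing them successively, a telescoping sum, the LPPL property (applied at each intermediate Hamiltonian along the path), and the bound $|Z|\,e^{(|A|+|Z|)/k}\le k\,e^{|A|/k+1}$ following from $|Z|\le k$, yield
\begin{equation*}
\bigl|\tr[O_A\rho_\beta[H]]-\tr[O_A\rho_\beta[H_B]]\bigr|\;\le\;K'\,\|O_A\|\,k\,|A|\,e^{|A|/k+1}\sum_{Z\,:\,Z\not\subset B}\frac{\|h_Z\|}{(1+d(A,Z))^{\alpha}}\,.
\end{equation*}
Everything thus reduces to showing that the crossing-term sum on the right is bounded by a constant times $\zeta(1+\delta)\,|A|\,(1+d(A,C))^{-(\alpha-D-\delta)}$.

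The key observation is that a crossing term $h_Z$ is small precisely when its support gets close to $A$, because such a $Z$ must also reach a far-away site of $C$. For each $Z$ with $Z\not\subset B$, choose $j_Z\in Z$ with $d(A,j_Z)=d(A,Z)$ and an arbitrary $c_Z\in Z\cap C$; since $Z\supset\{j_Z,c_Z\}$, the masses $\|h_Z\|$ of all crossing terms assigned to a fixed pair $(j,c)$ sum to at most $J_{j,c}\le g\,(1+d(j,c))^{-\alpha}$ by \eqref{eq:def-long-range}. Regrouping over pairs $(j,c)$ with $c\in C$ therefore bounds the crossing-term sum by $g\sum_{j\in\Lambda}(1+d(A,j))^{-\alpha}\sum_{c\in C}(1+d(j,c))^{-\alpha}$. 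Performing the inner sum over $c\in C$ first --- which keeps it independent of $|A|$ --- and using $d(j,c)\ge d(j,C)$ together with the fact that a sphere of radius $r$ on the hypercubic lattice has at most $c_D(1+r)^{D-1}$ sites, one obtains $\sum_{c\in C}(1+d(j,c))^{-\alpha}\le c_D\sum_{r\ge d(j,C)}(1+r)^{D-1-\alpha}$; the factorization $(1+r)^{D-1-\alpha}=(1+r)^{-(1+\delta)}(1+r)^{-(\alpha-D-\delta)}$ together with the monotonicity $(1+r)^{-(\alpha-D-\delta)}\le(1+d(j,C))^{-(\alpha-D-\delta)}$, valid exactly because $0<\delta<\alpha-D$, then bounds this by $c_D\,\zeta(1+\delta)\,(1+d(j,C))^{-(\alpha-D-\delta)}$. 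This single step is responsible for both the Riemann-zeta factor and the weakened decay exponent.

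It then remains to sum $(1+d(A,j))^{-\alpha}(1+d(j,C))^{-(\alpha-D-\delta)}$ over $j\in\Lambda$ and extract the dependence on $d(A,C)$. Writing $R$ for $d(A,C)$, split according to whether $d(A,j)\le R/2$: in the first case $d(j,C)\ge R-d(A,j)\ge R/2$, so the second factor is at most $2^{\alpha-D-\delta}(1+R)^{-(\alpha-D-\delta)}$ while $\sum_j(1+d(A,j))^{-\alpha}\le|A|\,u$ by uniform summability (cf.~Lem.~\ref{lem:convolution}); in the second case one bounds the second factor by $1$ and uses $\sum_{j:\,d(A,j)>R/2}(1+d(A,j))^{-\alpha}\le|A|\,c_D\sum_{r>R/2}(1+r)^{D-1-\alpha}\le|A|\,c_D\,\zeta(1+\delta)\,2^{\alpha-D-\delta}(1+R)^{-(\alpha-D-\delta)}$ by the same tail estimate. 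Collecting the pieces, the crossing-term sum is at most a constant depending on $g,u,c_D,\alpha,D$ times $\zeta(1+\delta)\,|A|\,(1+d(A,C))^{-(\alpha-D-\delta)}$, and multiplying by the prefactor $K'\|O_A\|\,k\,|A|\,e^{|A|/k+1}$ from the first display gives the claim with $\kappa'$ absorbing all the constants.

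The main obstacle is this closing chain of polynomial tail estimates. For finite- or short-range interactions (cf.~\cite{capel2023lppl,Brandao2019,Rouze2024Learning}) the analogous argument loses nothing, since exponential decay survives both the regrouping and the summation of the $\sim\ell^{D-1}$ boundary terms at each distance $\ell$; here the polynomial decay only marginally survives --- $\sum_\ell\ell^{D-1-\alpha}$ converges iff $\alpha>D$ --- and squeezing an honest power of $(1+d(A,C))$ out of the tail rather than merely a constant is what forces the loss $\alpha\to\alpha-D-\delta$ and the appearance of $\zeta(1+\delta)$. A secondary subtlety is the bookkeeping of the $|A|$-factors: carrying out the $c\in C$ sum before the $j$ sum is what keeps the crossing-term sum linear in $|A|$, so that the final observable-dependence is the stated $|A|^2e^{|A|/k}$ rather than a higher power.
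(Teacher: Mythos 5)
Your proposal is correct and takes essentially the same route as the paper: telescopically deleting the terms outside $H_B$ via LPPL (with the same $|Z|\le k$ bookkeeping giving $k|A|e^{|A|/k+1}$), regrouping the crossing interactions into site pairs via \eqref{eq:def-long-range}, and then controlling the resulting double polynomial sum by surface counting (cf.\ \eqref{eq:surface-bound}) together with the $(1+\delta)$-tail split that produces $\zeta(1+\delta)$ and the weakened exponent $\alpha-D-\delta$; the paper merely evaluates the double sum in the other order, collapsing the intermediate-site sum with the convolution Lemma~\ref{lem:convolution} and then using level sets around $A$, where you instead sum over $C$ around each site $j$ and split the remaining $j$-sum at $d(A,j)\le d(A,C)/2$. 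One cosmetic point: your second case reuses the $\zeta(1+\delta)$ tail estimate, so as written your final bound carries $\zeta(1+\delta)^2$ rather than the stated $\kappa'\zeta(1+\delta)$ with $\delta$-independent $\kappa'$; this is harmless and disappears if you bound that tail by an integral comparison instead.
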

	\begin{proof}
		First, we define $C=\Lambda\backslash B$. In a first step, we number all Hamiltonian terms $h_Z$ with $Z\cap B\neq \emptyset\neq Z\cap C$ (that is, those of $H-H_B$) from $1$ to $M\in\N$, i.e.~$\{h_1,...,h_M\}=\{h_Z\,|\,Z\cap B\neq \emptyset\neq Z\cap C\}$. Let us set $h_0=h_{M+1}=0$. Then, the telescopic sum together with LPPL show
		\begin{equation*}
			\begin{aligned}
				\left|\tr[O_A \rho_\beta[H]] - \tr[O_A \rho_\beta[H_B]]\right|&\leq\sum_{j=0}^{M}\bigg|\tr\Bigl[O_A \rho_\beta \Big[H-\sum_{i=1}^j h_i\Big]\Bigr] - \tr\Bigl[O_A \rho_\beta\Big[H-\sum_{i=1}^{j+1} h_i\Big]\Bigr]\bigg|\\
				&\leq K' \sum_{j=1}^{M} \|O_A\| \|h_{j}\| \frac{|A|k e^{|A|/k+1}}{(1+d(A, Z_j))^{\alpha}} \, ,
			\end{aligned}
		\end{equation*}
		where $Z_j$ is the support of $h_j$ for all $j\in\{1,...,M\}$ and by assumption $|Z_j|\leq k$. Next, we restructure the sum over $1,...,M$ by all $Z$ such that $\{i,j\}\in Z$ with $i\in B$, $j\in C$ and $d(A,Z)=d(A,i)$.
		\begin{equation*}
			\begin{aligned}
				|\tr[O_A \rho_\beta[H]] - \tr[O_A \rho_\beta[H_B]]|
				&\leq K' |A|k e^{|A|/k+1}\|O_A\|\sum_{\substack{Z\,|\,Z\ni\{i,j\}\in B\times C, \\ \quad\,d(A,i)=d(A,Z)}}\frac{\|h_{Z}\|}{(1+d(A, i))^{\alpha}}\\
				&\leq K' |A|k e^{|A|/k+1}g\|O_A\|\sum_{\{i,j\}\in B\times C}\frac{1}{(1+d(A, i))^{\alpha}(1+d(i,j))^{\alpha}}\\
				&\leq K' |A|k e^{|A|/k+1}g\|O_A\|\sum_{j\in C}\frac{  u_B}{(1+d(A, j))^{\alpha}}\\
			\end{aligned}
		\end{equation*}
		where we used \eqref{eq:def-long-range} in the second inequality and applied Lem.~\ref{lem:convolution} in the last step as well as the fact that if $\ell\in A$ is such that $d(A,i)=d(\ell,i)$, then $(1+ d(\ell,j))^\alpha\leq (1+ d(A,j))^\alpha$. Here, $u_B=\max_{j\in \Lambda}\sum_{i\in B}\frac{2^\alpha}{(1+d(i,j))^\alpha} \le u$. Next, we consider the level sets
		\begin{equation*}
			\partial C_{\ell}^{A}=\{j\in C\,|\,d(A,j)=\ell\},
		\end{equation*}
		which have cardinality upper bounded by 
		\begin{equation}
			|\partial C_\ell^A|\leq |A|2^{D}\binom{D+\ell-1}{\ell}\leq |A|2^D\frac{(D+\ell-1)^{D-1}}{(D-1)!}\leq 2|A|e^2(D+\ell-1)^{D-1}
		\end{equation}
		due to \eqref{eq:surface-bound}. Therefore,
		\begin{equation*}
			\begin{aligned}
				|\tr[O_A \rho_\beta[H]] - \tr[O_A \rho_\beta[H_B]]|
				&\leq 2 K' e^3 |A|^2k e^{|A|/k}g u_B \|O_A\|\sum_{\ell =d(A,C)+1}^\infty\frac{(D+\ell-2)^{D-1}}{\ell^{\alpha}}\,.
			\end{aligned}
		\end{equation*}
		Next, we upper bound the series.
		The case $D=1$ is trivial because
		\begin{equation*}
			\begin{aligned}
				\sum_{\ell =d(A,C)+1}^\infty\frac{1}{\ell^{\alpha}}
				&\leq\frac{1}{(d(A,C)+1)^{\alpha-1-\delta}}\sum_{\ell=1}^\infty\frac{1}{\ell^{1+\delta}}\\
				&\leq\frac{1}{(d(A,C)+1)^{\alpha-1-\delta}}\zeta(1+\delta)\\
				&\overset{\delta=1}{=}\frac{1}{(d(A,C)+1)^{\alpha-2}}\frac{\pi^2}{6}\,.
			\end{aligned}
		\end{equation*}
		Note that the last line is only valid if $\alpha-D>1$, which allows for the choice $\delta=1$. Otherwise, we stop the argument one line before.
		The case $D\geq2$ is bounded as follows
		\begin{equation*}
			\begin{aligned}
				\sum_{\ell =d(A,C)+1}^\infty\frac{(D+\ell-2)^{D-1}}{\ell^{\alpha}}&=(D-1)^{D-1}\sum_{\ell =d(A,C)+1}^\infty\frac{(1+\frac{\ell-1}{D-1})^{D-1}}{\ell^{\alpha}}\\
				&\leq(D-1)^{D-1}\sum_{\ell =d(A,C)+1}^\infty\frac{1}{\ell^{\alpha-D+1}}\\
				&\leq\frac{1}{(d(A,C)+1)^{\alpha-D-\delta}}\sum_{\ell=1}^\infty\frac{1}{\ell^{1+\delta}}\\
				&\leq\frac{1}{(d(A,C)+1)^{\alpha-D-\delta}}\zeta(1+\delta)\\
				&\overset{\delta=1}{=}\frac{1}{(d(A,C)+1)^{\alpha-D-1}}\frac{\pi^2}{6} \, ,
			\end{aligned}
		\end{equation*}
		so that
		\begin{equation*}
			\begin{aligned}
				|\tr[O_A \rho_\beta[H]] - \tr[O_A \rho_\beta[H_B]]|
				&\leq 2 K' e^3 |A|^2k e^{|A|/k}g u \|O_A\|\frac{1}{(d(A,C)+1)^{\alpha-D-\delta}}\zeta(1+\delta) \, .
			\end{aligned}
		\end{equation*}
		Denoting $\kappa' = 2 K' e^3 kg u$, we conclude.
	\end{proof}
	
	\subsection{From local indistinguishability to decay of correlations}\label{app:loctodec}
	
	\begin{figure}[t!]
		\begin{center}
			\includegraphics[scale=0.25]{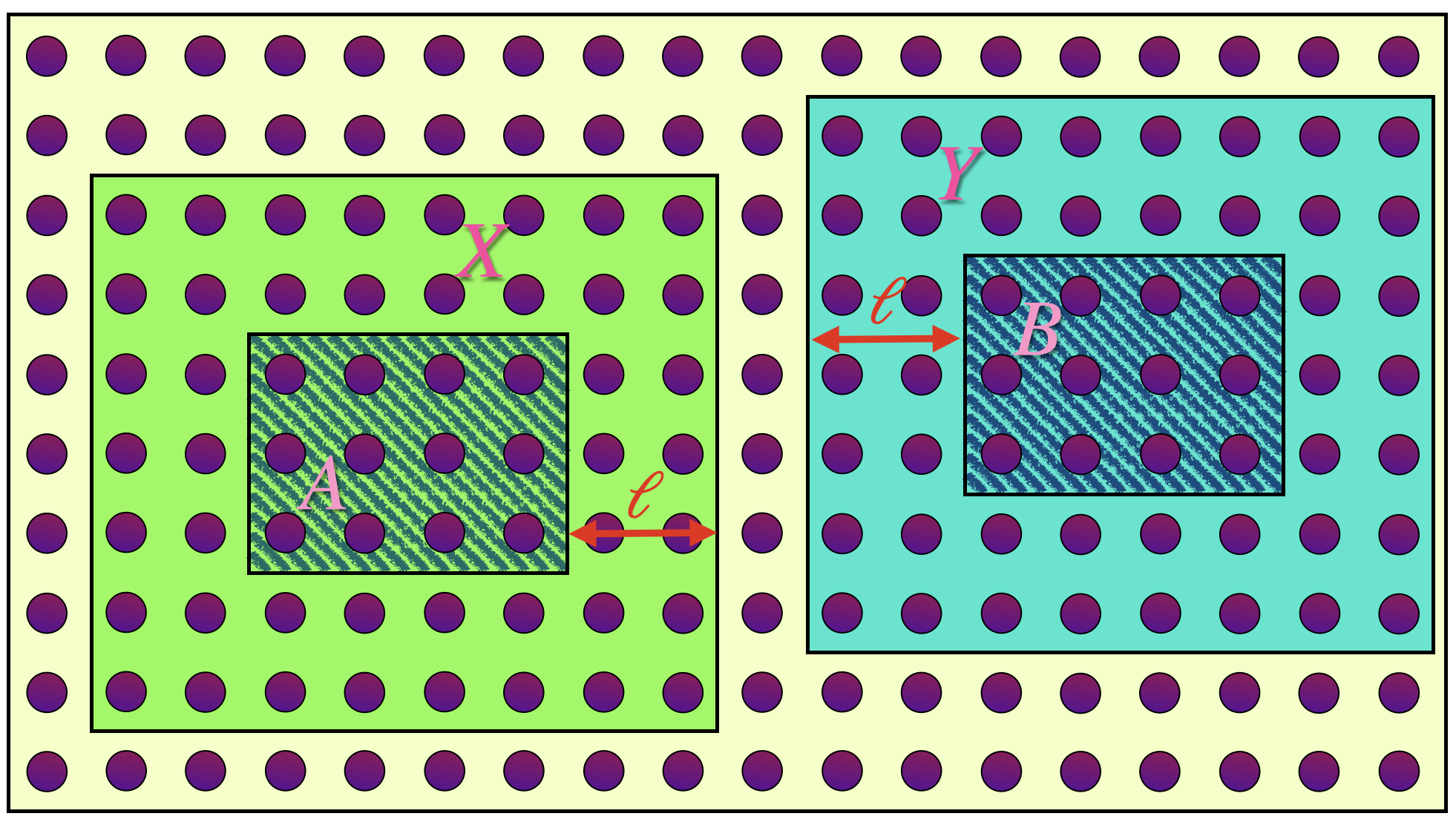}
			\caption{Subsets $A\subset X\subset \Lambda$ and $B\subset Y\subset \Lambda$ so that $d(A,X^c)=\ell =d(B,Y^c)$, for $\ell=3$, and $2\ell\leq d(A,B)$. }
			\label{fig5}
		\end{center}
	\end{figure}  
	
	\begin{lemma}\label{lem:local-indist-decay-cor}
		Assume that the Hamiltonians $H$ defined in Equation (\ref{eq:def-long-range}) satisfy the local indistinguishability (\ref{def:local-indistinguishability}) for a given $\alpha>2D$. Then,
		\begin{equation*}
			\begin{aligned}
				\mathrm{Cov}_{\rho_\beta}(O_A, O_B)
				&\leq \kappa''\|O_A\|\,\|O_B\|\,|A|\,|B|e^{(|A|+|B|)/k}\frac{1}{(1+d(A,B))^{\alpha-2D}}
			\end{aligned}
		\end{equation*}
		where $O_A\in\cB(\cH_A), O_B\in\cB(\cH_B)$ and $\kappa''\geq 0$ a constant.
	\end{lemma}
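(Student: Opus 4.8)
The plan is to adapt the finite-range argument of \cite{capel2023lppl} (apply local indistinguishability to the joint observable and to each observable separately, through suitable buffer regions), the new ingredient being that in the long-range setting the interaction terms coupling the two buffers can never be eliminated and must be estimated explicitly. Fix $\ell=\lfloor d(A,B)/4\rfloor$ and set $X=\{x\in\Lambda:d(x,A)\le\ell\}$ and $Y=\{y\in\Lambda:d(y,B)\le\ell\}$. Since $A,B$ are disjoint we have $2\ell<d(A,B)$, hence $X\cap Y=\emptyset$, $A\subset X$, $B\subset Y$, $d(A,X^c)\ge\ell$ and $d(B,Y^c)\ge\ell$ (this is the configuration of Fig.~\ref{fig5}). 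When $d(A,B)$ is below a fixed constant the asserted bound is trivial from $|\mathrm{Cov}_{\rho_\beta}(O_A,O_B)|\le 2\|O_A\|\|O_B\|$ and an enlargement of $\kappa''$, so I may assume $\ell\ge 1$.

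Writing $\mathrm{Cov}_{\rho_\beta}(O_A,O_B)=\tr[O_AO_B\rho_\beta[H]]-\tr[O_A\rho_\beta[H]]\tr[O_B\rho_\beta[H]]$, I would compare both summands to the reference quantity $M:=\tr[O_A\rho_\beta[H_X]]\,\tr[O_B\rho_\beta[H_Y]]$. For the product term, applying local indistinguishability (Def.~\ref{def:local-indistinguishability}) to $O_A$ on $A\subset X$ and to $O_B$ on $B\subset Y$, and using $|\tr[O_B\rho_\beta[H]]|\le\|O_B\|$, $|\tr[O_A\rho_\beta[H_X]]|\le\|O_A\|$, gives a bound of order $K''\|O_A\|\|O_B\|(|A|+|B|)e^{(|A|+|B|)/k}(1+\ell)^{-\alpha}$. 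For the correlated term I pass $H\to H_{X\cup Y}\to H_X+H_Y$: the first step is controlled by local indistinguishability applied to $O_AO_B$ on $A\cup B\subset X\cup Y$ (using $d(A\cup B,(X\cup Y)^c)\ge\ell$ and $\|O_AO_B\|\le\|O_A\|\|O_B\|$); for the second step I note that $W:=H_{X\cup Y}-(H_X+H_Y)=\sum_{Z\subset X\cup Y:\,Z\cap X\neq\emptyset\neq Z\cap Y}h_Z$ is self-adjoint, so the general belief-propagation estimate \eqref{eq:simple-upper-bound-case0} (Case~I of Lem.~\ref{lem:decay-cor-lppl}, which holds for arbitrary perturbations) yields $|\tr[O_AO_B(\rho_\beta[H_{X\cup Y}]-\rho_\beta[H_X+H_Y])]|\le 2\beta\|O_A\|\|O_B\|\,\|W\|$. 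Finally, since $X$ and $Y$ are disjoint and $H_X$, $H_Y$ act only on them, $\rho_\beta[H_X+H_Y]$ factorizes and $\tr[O_AO_B\rho_\beta[H_X+H_Y]]=M$, so $M$ is the common reference and the triangle inequality combines the two estimates.

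It then remains to bound $\|W\|$. Every $Z$ appearing in $W$ meets both $X$ and $Y$, so $\|W\|\le\sum_{Z:\,Z\cap X\neq\emptyset\neq Z\cap Y}\|h_Z\|\le\sum_{i\in X}\sum_{j\in Y}J_{i,j}\le g\sum_{i\in X}\sum_{j\in Y}(1+d(i,j))^{-\alpha}$ by \eqref{eq:def-long-range}. Each pair $(i,j)\in X\times Y$ satisfies $d(i,j)\ge d(A,B)-2\ell\ge d(A,B)/2$, so by the uniform summability of the polynomial tail for $\alpha>D$ (Lem.~\ref{lem:convolution}), $\sum_{j\in Y}(1+d(i,j))^{-\alpha}\le c_{D,\alpha}\,(1+d(A,B)/2)^{-(\alpha-D)}$; together with $|X|\le c_D|A|(1+\ell)^D$ this gives $\|W\|\le c\,g\,|A|\,(1+\ell)^D(1+d(A,B))^{-(\alpha-D)}$. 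Inserting $\ell=\lfloor d(A,B)/4\rfloor$, one has $(1+\ell)^{-\alpha}\le 8^\alpha(1+d(A,B))^{-\alpha}\le 8^\alpha(1+d(A,B))^{-(\alpha-2D)}$ and $(1+\ell)^D(1+d(A,B))^{-(\alpha-D)}\le 2^D(1+d(A,B))^{-(\alpha-2D)}$; collecting the constants into $\kappa''$ (depending on $K''$, $\beta$, $g$, $k$, $D$, $\alpha$) and using $|A|+|B|\le 2|A||B|$, $|A|\le|A||B|$, $1\le e^{(|A|+|B|)/k}$ yields the claim.

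The main obstacle is exactly the term $W$: in the finite-range case one chooses the buffers wide enough that no interaction couples $X$ and $Y$, so $W=0$ and no degradation occurs, whereas here such couplings are unavoidable and their size competes directly with the local-indistinguishability error — the former scales as $\ell^{D}/d(A,B)^{\alpha-D}$ and the latter as $\ell^{-\alpha}$. Balancing the buffer width $\ell$ against $d(A,B)$ (any fixed fraction works) is what costs the exponent $2D$ and is precisely why the hypothesis $\alpha>2D$ is required for the bound to be nontrivial.
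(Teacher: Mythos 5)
Your proposal is correct and follows essentially the same route as the paper's proof: buffer regions $X,Y$ of width $\ell\propto d(A,B)$, local indistinguishability applied to $O_AO_B$ and to $O_A,O_B$ separately, the tensor-product approximation $\rho_\beta[H_{X\cup Y}]\approx\rho_\beta[H_X]\otimes\rho_\beta[H_Y]$ controlled by the belief-propagation bound \eqref{eq:simple-upper-bound-case0} applied to $W=H_{X\cup Y}-H_X-H_Y$, and the same balancing that yields the exponent $\alpha-2D$. The only (harmless) deviation is cosmetic: you bound $\|W\|$ via a tail sum over $Y$ giving $|A|\ell^{D}d^{-(\alpha-D)}$, whereas the paper uses the cruder $g|X||Y|(1+d(X,Y))^{-\alpha}\sim|A||B|\ell^{2D}d^{-\alpha}$; both lead to the same final decay.
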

	\begin{proof}
		We follow the proof of \cite[Theorem 21]{capel2023lppl} and extend the steps to long-range systems. The case $d(A,B)\leq3$ is direct by Hölder's inequality. Assuming that $d(A,B)\geq4$, there are subsets $A\subset X\subset \Lambda$ and $B\subset Y\subset \Lambda$ so that $d(A,X^c)=\ell\geq1$, $d(B,Y^c)=\ell\geq1$ and $2\ell\leq d(A,B)$ (see \cref{fig5}). Then, the assumption of local indistinguishability shows
		\begin{equation}\label{eq:proof-step-local-indist}
			\begin{aligned}
				|\tr[\rho_\beta[H]O_AO_B]-\tr[\rho_\beta[H_{X\cup Y}]O_AO_B]|&\leq K'' \|O_A\|\|O_B\| \frac{(|A|+|B|) e^{(|A|+|B|)/k}}{(1+\min\{d(A, X^c),d(B, Y^c)\})^{\alpha}},\\
				|\tr[\rho_\beta[H]O_A]-\tr[\rho_\beta[H_{X}]O_A]|&\leq K'' \|O_A\| \frac{|A| e^{|A|/k}}{(1+d(A, X^c))^{\alpha}},\\
				|\tr[\rho_\beta[H]O_B]-\tr[\rho_\beta[H_{Y}]O_B]|&\leq K'' \|O_B\| \frac{|B| e^{|B|/k}}{(1+d(B, Y^c))^{\alpha}}\,.
			\end{aligned}
		\end{equation}
		To combine the first with the last equation, we approximate $\rho_\beta[H_{X\cup Y}]$ by $\rho_\beta[H_{X}]\otimes\rho_\beta[H_{Y}]$ in the next step. For that, we identify the interactions of the Hamiltonian between $X$ and $Y$. These are given by
		\begin{equation*}
			\begin{aligned}
				\|H_{X\cup Y}-H_{X}-H_{Y}\|&\leq\sum_{\{j,j'\}\in X\times Y}\sum_{Z : \{j,j'\}\in Z}\|h_Z\|\\
				&\leq\sum_{\{j,j'\}\in X\times Y}\frac{g}{(1+d(j,j'))^\alpha}\\
				&\leq\frac{g |X||Y|}{(1+d(X,Y'))^\alpha}
			\end{aligned}
		\end{equation*}
		which implies by Eq.~\eqref{eq:simple-upper-bound-case0}
		\begin{equation}\label{eq:proof-step-tensor-approx}
			\begin{aligned}
				|\tr[\rho_{\beta}[H_{X\cup Y}]O_R]-\tr[\rho_{\beta}[H_{X}]\otimes\rho_{\beta}[H_{X}]O_R]|&\leq2\beta \|O_R\|\,\|H_{X\cup Y}-H_{X}-H_{Y}\|\\
				&\leq2\beta \|O_R\|\,\frac{g |X||Y|}{(1+d(X,Y))^\alpha}
			\end{aligned}
		\end{equation}
		for any $R\subset \Lambda$. Combining the bounds Eq.~\eqref{eq:proof-step-local-indist} and (\ref{eq:proof-step-tensor-approx}) proves
		\begin{equation*}
			\begin{aligned}
				\mathrm{Cov}_{\rho_\beta}&(O_A, O_B)\\
				&=\tr[\rho_\beta[H]O_AO_B]-\tr[\rho_\beta[H]O_A]\tr[\rho_\beta[H]O_B]\\
				&\leq\tr[\rho_\beta[H_{X\cup Y}]O_AO_B]-\tr[\rho_\beta[H_X]O_A]\tr[\rho_\beta[H_Y]O_B]\\
				&\qquad+K'' \|O_A\|\|O_B\|\Bigl( \frac{(|A|+|B|) e^{(|A|+|B|)/k}}{(1+\min\{d(A, X^c),d(B, Y^c)\})^{\alpha}}+\frac{|A| e^{|A|/k}}{(1+d(A, X^c))^{\alpha}}+\frac{|B| e^{|B|/k}}{(1+d(B, Y^c))^{\alpha}}\Bigr)\\
				&\leq \|O_A\|\|O_B\|\,\Bigl( \frac{ 2 \beta g |X||Y|}{(1+d(X,Y))^\alpha}+K''\frac{(|A|+|B|) e^{(|A|+|B|)/k}+|A| e^{|A|/k}+|B| e^{|B|/k}}{(1+\min\{d(A, X^c),d(B, Y^c)\})^{\alpha}}\Bigr)\\
				&\leq \|O_A\|\|O_B\|\,\Bigl( \frac{ 2 \beta g |X||Y|}{(1+d(X,Y))^\alpha}+K''\frac{4|A||B| e^{(|A|+|B|)/k}}{(1+\min\{d(A, X^c),d(B, Y^c)\})^{\alpha}}\Bigr)\\
				&\leq \|O_A\|\,\|O_B\|\,|A|\,|B|e^{(|A|+|B|)/k} ( 2 \beta g+4 K'')\,\Bigl(\frac{\ell^{2D}}{(1+d(A,B)-2\ell)^\alpha}+\frac{1}{(1+\ell)^{\alpha}}\Bigr)\,,
			\end{aligned}
		\end{equation*}
		where, we choose $X=\{x\in\Lambda\,|\,d(A,x)\leq \ell\}$ and $Y=\{y\in\Lambda\,|\,d(B,y)\leq \ell\}$ for an $\ell\in\N$ such that $2\ell\leq d(A,B)$ and by triangle inequality
		\begin{equation*}
			d(A,B)\leq \min_{cx\in X, y\in Y}\bigl(d(A,x)+d(x,y)+d(y,B)\bigr)\leq d(X,Y)+2\ell
		\end{equation*}
		for all $x\in X$ and $y\in Y$. Moreover, the sets $X,Y$ satisfy $|X|\leq |A|\ell^{D}$ and $|Y|\leq |B|\ell^{D}$ assuming that the lattice is hypercubic. Next, we choose $\ell=\lfloor\frac{1}{3}d(A,B)\rfloor$, so in particular $\ell\leq\frac{1}{3}d(A,B)$ which shows
		\begin{equation*}
			\begin{aligned}
				\mathrm{Cov}_{\rho_\beta}(O_A, O_B)
				&\leq \|O_A\|\,\|O_B\|\,|A|\,|B|e^{(|A|+|B|)/k} (2 \beta g+4K'')\,\Bigl(\frac{\ell^{2D}}{(1+d(A,B)-2\ell)^\alpha}+\frac{1}{(1+\ell)^{\alpha}}\Bigr)\\
				&\leq \|O_A\|\,\|O_B\|\,|A|\,|B|e^{(|A|+|B|)/k} (2 \beta g+4K'')\,3^{\alpha-2D}2\frac{1}{(1+d(A,B))^{\alpha-2D}}
			\end{aligned}
		\end{equation*}
		with the help of 
		\begin{equation*}
			\begin{aligned}
				\frac{\ell^{2D}}{(1+d(A,B)-2\ell)^\alpha}\leq\frac{\bigl(\frac{d(A,B)}{3}\bigr)^{2D}}{(1+\frac{1}{3}d(A,B))^\alpha}\leq3^{\alpha-2D}\frac{(1+d(A,B))^{2D}}{(1+d(A,B))^\alpha}
			\end{aligned}
		\end{equation*}
		which finishes the proof taking $\kappa''=(2 \beta g+4K'')3^{\alpha-2D}2$.
	\end{proof}
	
	\section{Auxiliary results}
	This appendix contains two auxiliary results that are used in the proofs of the main results of the paper. We start with the first auxiliary result on the convolution of long-range interactions:
	\begin{lemma}[Lemma 2 of \cite{kim2024arealawslongrange}]\label{lem:convolution}
		For any $\alpha >D$,
		\begin{equation}
			\underset{j \in \Lambda}{\sum} \frac{1}{(1+d_{i,j})^\alpha}\frac{1}{(1+d_{j,k})^\alpha} \leq \frac{   u}{(1+d_{i,k})^\alpha} \, ,
		\end{equation}
		where $u = \underset{i \in \Lambda}{\operatorname{sup}}\underset{j\in \Lambda}{\sum} \frac{2^\alpha}{(1 + d_{i,j})^\alpha}$.
	\end{lemma}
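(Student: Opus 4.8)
The plan is to recognize this as the standard ``reproducibility'' (convolution) estimate for the polynomially decaying weight $f(r)=(1+r)^{-\alpha}$, and to reduce the double sum to the single-site sum that defines $u$. First I would fix $i,k\in\Lambda$ and use the triangle inequality $d_{i,k}\le d_{i,j}+d_{j,k}$ to observe that, for every intermediate site $j$, at least one of $d_{i,j},d_{j,k}$ is at least $\tfrac12 d_{i,k}$, so that
\begin{equation*}
1+\max\{d_{i,j},d_{j,k}\}\ \ge\ \tfrac12\,(1+d_{i,k}).
\end{equation*}
This is the only place where the geometry of $\Lambda$ enters, and it is what generates the decay in $d_{i,k}$ on the right-hand side.

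Next I would split the sum over $j$ into the two sets $S_i=\{j:\,d_{i,j}\ge d_{j,k}\}$ and $S_k=\Lambda\setminus S_i$. On $S_i$ the displayed bound gives $(1+d_{i,j})^{-\alpha}\le 2^\alpha(1+d_{i,k})^{-\alpha}$, and symmetrically on $S_k$ one gets $(1+d_{j,k})^{-\alpha}\le 2^\alpha(1+d_{i,k})^{-\alpha}$. Substituting, pulling the common factor out, and enlarging each restricted sum to a sum over all of $\Lambda$ leaves
\begin{equation*}
\sum_{j\in\Lambda}\frac{1}{(1+d_{i,j})^\alpha(1+d_{j,k})^\alpha}\ \le\ \frac{2^\alpha}{(1+d_{i,k})^\alpha}\Bigl(\sum_{j\in\Lambda}\frac{1}{(1+d_{j,k})^\alpha}+\sum_{j\in\Lambda}\frac{1}{(1+d_{i,j})^\alpha}\Bigr),
\end{equation*}
and by symmetry of the metric each of the remaining single-site sums is controlled directly by $u=\sup_m\sum_j 2^\alpha(1+d_{m,j})^{-\alpha}$; the factor $2^\alpha$ built into $u$ is precisely what is meant to soak up this step. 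This is where the hypothesis $\alpha>D$ is used: on $\mathbb{Z}^D$ the number of sites at distance $r$ from a fixed point is $O(r^{D-1})$, so $\sum_j(1+d_{m,j})^{-\alpha}<\infty$ uniformly in $m$ iff $\alpha>D$, which is exactly what makes $u$ finite and the resulting bound independent of $|\Lambda|$, even for an infinite ambient lattice.

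I do not expect a genuine obstacle here: the estimate is elementary and already appears as \cite[Lem.~2]{kim2024arealawslongrange}. The only point that needs a little care is the bookkeeping of the numerical constant --- making the bound uniform in the two endpoints and in the lattice size, and, if one insists on the prefactor being exactly $u$ rather than a fixed multiple of it, a marginally sharper accounting of how the two halves of the dichotomy contribute.
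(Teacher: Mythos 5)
The paper itself contains no proof of this statement: it is imported verbatim as Lemma~2 of \cite{kim2024arealawslongrange}, and the appendix explicitly omits the argument, so there is no in-paper proof to compare yours against. Your dichotomy argument is the standard route and is sound as far as it goes, but be aware that as written it does not deliver the stated constant: bounding each of the two restricted sums by the full single-site sum gives
\begin{equation*}
\sum_{j\in\Lambda}\frac{1}{(1+d_{i,j})^\alpha(1+d_{j,k})^\alpha}\;\le\;\frac{2^\alpha}{(1+d_{i,k})^\alpha}\cdot\frac{2u}{2^\alpha}\;=\;\frac{2u}{(1+d_{i,k})^\alpha},
\end{equation*}
and this factor $2$ cannot be removed by ``marginally sharper accounting of how the two halves contribute'': already in $D=1$ with $i$ and $k$ far apart, the sum of $(1+d_{j,k})^{-\alpha}$ over $\{j:\,d_{i,j}\ge d_{j,k}\}$ and the sum of $(1+d_{i,j})^{-\alpha}$ over the complementary set are each essentially the full single-site sum, so the dichotomy bound genuinely saturates at $2u$. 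To land exactly on $u$ (whose built-in $2^\alpha$ is tailored to this), replace the dichotomy by a pointwise estimate: with $a=1+d_{i,j}$, $b=1+d_{j,k}$ one has $\frac{1}{ab}=\frac{1/a+1/b}{a+b}$ and $a+b\ge 1+d_{i,k}$, so raising to the power $\alpha$ and using convexity $(x+y)^\alpha\le 2^{\alpha-1}(x^\alpha+y^\alpha)$ (valid since $\alpha>D\ge 1$) and then summing over $j$ produces exactly the factor $2^{\alpha-1}\cdot 2=2^\alpha$ absorbed in the definition of $u$. Since every application of the lemma in the paper only needs some finite lattice-independent constant, the discrepancy is cosmetic, but the inequality with prefactor exactly $u$ is not what your argument, as written, proves.
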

	\begin{proof}
		A similar property has already been used in previous works such as \cite{Nachtergaele_2006,hastings2010quasiadiabatic}. We omit the proof of this lemma here, as it can be found e.g.~in  \cite{kim2024arealawslongrange}.
	\end{proof}
	
	Next, we continue with a few facts on surface cardinalities: The surface of a ball in graph norm with diameter $\ell$ defined in a $D$-dimensional hypercubic lattice is calculated as follows. First, we define the lattice via an orthonormal basis $\{v_1,...,v_D\}\subset \C^D$ by
	\begin{equation*}
		\Lambda= \{\sum_{j=1}^{D}a_jv_j\,|\,\text{for }a\in\Z^D\}\,.
	\end{equation*}
	Then, the ball (in graph norm) of diameter $\ell$ is defined by
	\begin{equation*}
		B_\ell=\Bigl\{\sum_{j=1}^{D}a_jv_j\,|\,\sum|a_j|\leq\ell\text{ for all }a\in\Z^D\Bigr\}\,,
	\end{equation*}    
	so its surface is
	\begin{equation*}
		\partial B_\ell=\Bigl\{\sum_{j=1}^{D}a_jv_j\,|\,\sum_{j=1}^D|a_j|=\ell\text{ for all }a\in\Z^D\Bigr\}\,.
	\end{equation*}
	The cardinality of the surface can be translated to a ball in marked bins problems. The idea is to distribute $\ell$ identical balls to $D$ boxes. Since each bin could contain no ball, a negative number of balls or a positive number of balls due to the absolute value, we further reduce the problem to
	\begin{equation*}
		\partial B_\ell^+=\Bigl\{\sum_{j=1}^{D}a_jv_j\,|\,\sum_{j=1}^D|a_j|=\ell\text{ for all }a\in\N_{\geq0}^D\Bigr\}\,,
	\end{equation*}
	which suppresses the possible negative values. As mentioned, the cardinality of the above set is equal to the number of possibilities to distribute $\ell$ identical balls to $D$ bins, which is given by \cite[Chap.~1.9]{Stanley.1986combinatorics} 
	\begin{equation*}
		|\partial B_\ell^+|=\binom{D+\ell-1}{\ell}\,.
	\end{equation*}
	By definition,
	\begin{equation*}
		\Bigl|\partial B_\ell\Bigr|\leq\Bigl|\Bigl\{\sum_{j=1}^D|n_ja_j|=\ell\text{ for all }a\in\N_{\geq0}^D\wedge n\in\{-1,1\}^D\Bigr\}\Bigr|=2^D|\partial B_\ell^+|\,.
	\end{equation*}
	This is only an upper bound because it overcounts the elements $(-1)\cdot 0$. This can be further upper bounded by
	\begin{equation}\label{eq:surface-bound}
		|\partial B_\ell|\leq 2^{D}\binom{D+\ell-1}{\ell}\leq 2^D\frac{(D+\ell-1)^{D-1}}{(D-1)!}\leq 2e^2(D+\ell-1)^{D-1}\,.
	\end{equation}
	
	\section{Analysis of numerical results}\label{app:sim}
	
	\label{app:num}
	\subsection{Further results and discussion of the high-temperature phase }
	
	\begin{figure}
		\centering
		\includegraphics[width=1.0\linewidth]{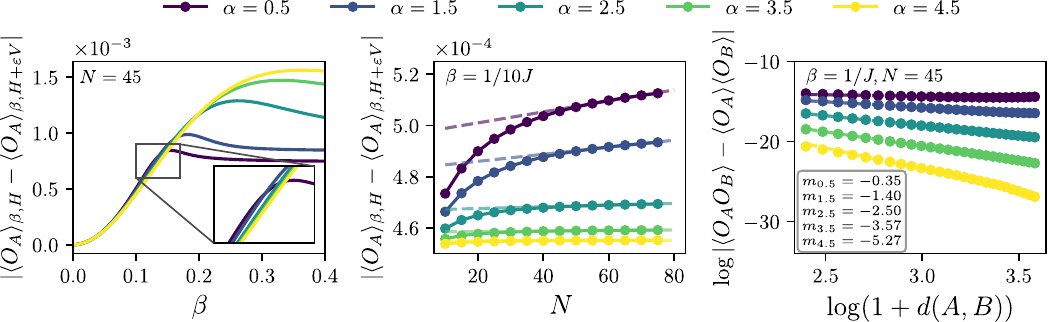}
		\caption{\justifying Absolute value of the difference in the expectation value of the observable $O_A$ between perturbed and unperturbed Gibbs states of Hamiltonian given by Eq.~\eqref{eq::HLRising} against $\beta$ for $N=45$ (left) and system size $N$ (middle), which contrast with Fig.~\ref{fig:DifvsEverything} due to the variation of $\beta$. This figure showcases a change of dependence on $\alpha$ when $\beta$ increases. We also plot the logarithm of the Gibbs state covariance varying the distance of the observables $O_A$ and $O_B$ (right). We display the slopes of the best linear fit for the asymptotic behavior. The $r$ coefficient of the linear fits is approximately $r \approx -0.99$ for all cases but for $r_{\alpha=0.5}\approx-0.9$.}
		\label{fig:appendix}
	\end{figure}
	
	In Fig.~\ref{fig:appendix} we provide additional plots to complement those of the main text. 
	First, we explore the temperature dependence of the quantities analyzed. 
	In Fig.~\ref{fig:DifvsEverything}, we see that the difference in expectation values increases with $\alpha$. However,
	we observe a change of that dependence on $\alpha$ for $|\langle A \rangle_{\beta,H} - \langle A \rangle_{\beta,H+\varepsilon V}|$ when lowering $\beta$, as shown in Fig.~\ref{fig:appendix} (left-most plot). While this is likely dependent on the model at hand, it is also compatible with our bound due to the two terms in Eq.~\eqref{eq:proof-final-bound-cov}. The first admits a polynomial decay in the distance $d(A,B)$ of degree $\alpha$, and the second an exponential decay in $d(A,B)/\beta$ dependent on the inverse temperature, which naturally allows for different behavior at small versus large $\beta$ and accounts for the observed monotonic behavior, but also permits a change in the ordering of the curves.
	
	We also analyze the polynomial decay of correlations with a log-log plot in Fig.~\ref{fig:appendix}. We observe that, asymptotically, in the weak long-range regime $\alpha >D=1$, the correlations decay spatially at least as $ \propto d^{-\alpha}$ as expected, with the slight exception of $\alpha=1.5$, which decays as $ \propto d^{-1.4}$, possibly due to finite-size effects.
	
	At high temperatures, the Gibbs state approximately approaches a completely random state, $\rho_\beta \approx I/d$, which implies that in this regime, the system is expected to be in a disordered phase where correlations are nearly negligible and decay rapidly with distance. This may provide some intuition as to why correlations decay with distance significantly faster than analytical bounds predict, even at finite temperatures. Combined with the fact that the observables $O_A = \langle S^z_i S^z_{i+1} \rangle$ are of the order $\mathcal{O}(10^{-3})$, this suggests that the system remains in a disordered phase within that regime of temperature.
	
	\subsection{Simulation details}
	The numerical simulations consist of a TEBD-like algorithm where we construct the time evolution operator of the LR-TFI model Eq.~\eqref{eq::HLRising} explicitly in the short-range case~\cite{Pirvu2010, MPOTEv_Paeckel_2019} and then apply a super-extensive amount of swap gates to exactly simulate the power-law interactions in a finite system size.
	This yields a Trotterized representation of the imaginary time evolution operator $U(\delta \beta) = \mathrm{e}^{H \delta \beta}$ in form of an MPO. For it, we have used the ITensor library in Julia~\cite{Fishman2022}.
	We concatenate the Trotterized imaginary time evolution to yield the desired $\beta$ as $ U(\beta) \simeq \prod_{i=1}^{n_\beta} U(\delta\beta)$, with $\beta = n_\beta \cdot \delta \beta$. Once we have the Gibbs state MPO, we compute expectation values. More concretely the observable in Fig.~\ref{fig:DifvsEverything} and Fig.~\ref{fig:locindist} is given by
	\begin{equation}
		O_A= \sigma_i^z \sigma_{i+1}^z,
	\end{equation}
	where $i$ is the site at the middle of the system.
	
	In the simulation, the relevant parameters are the time-step $\delta \beta= 0.001$ and the SVD cut-off$=10^{-19}$, which have been determined in order not to saturate a maximal bond dimension of $\chi_\mathrm{max} = 1800$ during the time evolution. The parameters of the Hamiltonian in Eq.~\ref{eq::HLRising} we choose are $h=0.25, J=1$.
\end{document}